\numberwithin{equation}{section}
\newtheorem{defn}{Definition}[section]
\newtheorem{prop}[defn]{Proposition}
\newtheorem{cor}[defn]{Corollary}
\newtheorem{lem}[defn]{Lemma}
\newtheorem{asu}[defn]{Assumption}
\newcommand{\D}{{\mathop{}\!\mathrm{d}}}
\newcommand{\R}{\mathbb{R}}
\newcommand{\Q}{\mathbb{Q}}
\newcommand{\K}{\mathbb{K}}
\newcommand{\N}{\mathbb{N}}
\newcommand{\E}{\mathbb{E}}
\newcommand{\dee}{\mathrm{d}}
\journal{Finance Research Letters}
\begin{document}

\begin{frontmatter}
\title{Fast catastrophe bond valuation with neural-network surrogates}

\author[aff1]{Julian Sester\texorpdfstring{\footnote{Corresponding author, email: jul\_ses@nus.edu.sg}}{}}
\author[aff1]{Huansang Xu}
\affiliation[aff1]{organization={Department of Mathematics, National University of Singapore},
addressline={21 Lower Kent Ridge Road}, postcode={119077}, country={Singapore}}

\begin{abstract}
Catastrophe bonds are increasingly important risk-transfer securities, but structural pricing is too slow for real-time valuation, screening, and sensitivity analysis. We develop a neural-network surrogate for the pricing operator of a compound-Poisson catastrophe bond model. Training labels are generated by Monte Carlo simulation with importance sampling, so the network learns variance-reduced structural prices rather than sparse market quotes which are difficult to extrapolate reliably. Across Gamma and Lognormal severity specifications, the selected networks attain very small absolute approximation error{ on the stated training domain}. After training, 1000 contracts are priced in about 0.03--0.04 seconds, compared with tens to hundreds of seconds for Monte Carlo with importance sampling and many hours for a partial integro-differential equation benchmark. The result is a fast {in-domain} structural valuation engine that also produces economically interpretable sensitivities to catastrophe intensity, attachment threshold, and interest rates.
\end{abstract}

\begin{keyword}
catastrophe bonds \sep neural networks \sep importance sampling \sep rare-event simulation \sep risk management
\JEL G12 \sep G22 \sep C63
\end{keyword}

\end{frontmatter}

\section{Introduction}
Catastrophe bonds (CAT bonds) are increasingly important risk-transfer securities, but structural pricing is too slow for real-time valuation, screening, and sensitivity analysis. A neural surrogate trained on rare-event simulation makes structural CAT bond valuation effectively instantaneous.

A CAT bond transfers extreme insurance losses to capital-market investors. If no trigger occurs, investors receive coupons and principal; if the trigger occurs, part or all of the principal is released to indemnify the sponsor. This payoff makes CAT bonds useful for insurers seeking funded protection and for investors seeking insurance-linked risk {which prior CAT-bond primers describe as largely uncorrelated with traditional financial markets} \citep{BraunKousky2021,Polacek2018}. However, its construction also makes repeated valuation difficult: prices depend on low-probability, high-severity events, the contractual trigger, the timing of coupons, and the term structure of interest rates. The market motivation is large. Industry sources report that the outstanding CAT bond market expanded from about \$1 billion to \$49.1 billion by June 2024, while indemnity and index triggers remain the dominant contractual forms \citep{Artemis2024, BraunKousky2021}. Indemnity triggers are particularly important in practice because they protect the sponsor against its own insured loss; indemnity contracts account for roughly 70\% of trigger types, industry-index contracts for 20\%, parametric triggers for 3\%, modeled-loss triggers for 1\%, and other structures for 6\%, see \cite{BraunKousky2021}. {An indemnity trigger is based on the sponsor's own covered loss rather than an external industry index or purely physical event measure; it is therefore closely aligned with hedging needs but requires a model of the sponsor-specific aggregate loss process.} This heterogeneity matters for valuation because the trigger definition determines how catastrophe losses are translated into investor cash-flow losses.

Structural models \cite{BaryshnikovMayoTaylor2001,BurneckiKukla2003,Hrdle2007CalibratingCB,hofer2021risk,
Jarrow2010,reshetar2008pricing,siyamah2021cat}  are attractive because they connect prices to interpretable inputs such as catastrophe intensity, severity distribution, maturity, interest rates, coupon frequency, and attachment threshold. Their drawback is speed. A single zero-coupon CAT bond price requires a rare-event probability. A coupon bond requires the same calculation at several cash-flow dates. A trading, issuance, or risk-management workflow may need thousands of prices under changing assumptions, for example to screen structures, update marks, or compute stress sensitivities. Monte Carlo and partial integro-differential equation (PIDE) methods (see, e.g., \cite{BaryshnikovMayoTaylor2001}) can be accurate, but they are not designed for interactive use when every proposed contract requires a fresh tail-probability calculation.

Neural surrogates are unusually valuable in this asset class because CAT bond pricing is an amortized rare-event valuation problem. For liquid equities, vanilla options, or government bonds, practitioners often observe dense quotes or use highly standardized pricing routines often involving closed-form valuation formulas; the marginal computational cost of another price is modest. A CAT bond is different. Each issue is comparatively bespoke, and each structural price depends on whether aggregate losses cross a contractual attachment point, often far in the tail. Thus changing the threshold, maturity, coupon frequency, or severity assumptions changes the rare-event probability itself. The expensive object is not a single price but a pricing surface over contract and catastrophe-risk inputs. Training the neural network on variance-reduced simulations pays this cost once and then reuses the learned surface for valuation, structure search, and sensitivity analysis.

In this paper we learn the structural pricing map once and then evaluate it quickly. The contribution is not an empirical spread-prediction model and not a new catastrophe-loss model. Instead, the network approximates the valuation operator implied by a chosen compound-Poisson model. This places the paper in the emerging deep-surrogate literature in finance; \cite{ChenDidisheimScheidegger2026, HorvathMuguruzaTomas2021, HutchinsonLoPoggio1994, LiuOosterleeBohte2019, NeufeldSester2022}. Our setting differs because the central numerical object is a catastrophe-trigger probability rather than an option-implied volatility surface. The labels are produced by a Monte Carlo engine with importance sampling, following the rare-event logic of \citet{Glasserman2004}. The trained network therefore preserves the economics of the structural model while replacing repeated simulation by a fast function evaluation.

{In plain language, the method trades an expensive one-time simulation and training step for extremely cheap repeated pricing. This is useful when many contracts or stress scenarios must be evaluated under the same model. The limitations are also important: the surrogate is conditional on the chosen pricing measure, severity law, trigger specification, and simulated input domain. In particular, new model specifications or out-of-domain contracts require computationally expensive additional simulation and retraining rather than blind extrapolation.} 

The paper focuses on accuracy, speed, and risk sensitivities while the supplementary material gives the technical justification: the neural approximation result is based on the universal approximation theorem \cite{HornikStinchcombeWhite1989, pinkus1999approximation}, and the importance-sampling construction is derived together with a mathematical variance-reduction argument. The supplementary material moreover provides additional details on the numerical experiments and the used algorithms. Code and notebooks are available at {\url{https://github.com/HuansangXu/CAT-bonds}}.

\section{Pricing map and learning design}
For our pricing approach, we assume catastrophe arrivals follow a Poisson process $M$ with constant intensity $\lambda$ under the pricing measure. Severities $(X_i)_{i\geq1}$ are independent and identically distributed and independent of $M$. Aggregate loss at time $T$ is
\begin{equation}
L(T)=\sum_{i=1}^{M(T)}X_i,
\end{equation}
and the trigger occurs when $L(T)\geq D$ for some threshold $D$. In line with \cite{cuchiero2010affine, Duffie1996AYM, duffee2002term, keller2011affine, piazzesi2010affine} discounting is generated by an affine short-rate model; the numerical study uses the Vasicek model (\cite{Vasicek1977}) with mean reversion 0.2, long-run rate 3\%, and volatility 2\%. For a zero-coupon CAT bond with face value $F$, the normalized structural price is
\begin{equation}\label{eq:price1}
V_Z(F,T;r_0,\lambda,D)=P_Z(F,0,T)\{1-\theta(T;\lambda,D)\},\quad
\theta(T;\lambda,D)=\mathbb{Q}(L(T)\geq D),
\end{equation}
where $P_Z$ is the default-free zero-coupon price and $r_0$ the initial short-rate. A coupon bond is a finite sum of protected coupon and principal claims denoted by $\{C_j\}_{j=1,\dots,N}$ and $F$, respectively:
\begin{equation}\label{eq:price2}
V_C=V_Z(F,T;r_0,\lambda,D)+\sum_{j=1}^{N}V_Z(C_j,t_j;r_0,\lambda,D).
\end{equation}
The representations in \eqref{eq:price1} and \eqref{eq:price2} isolate the numerical bottleneck: the repeated estimation of the tail probability $\theta$. The decomposition also explains why a surrogate is useful. The default-free discount factor is available in closed form under the affine term-structure model, while the catastrophe component is a path-dependent tail probability. In coupon structures, the tail probability must be recomputed for each protected coupon date as well as for principal. Therefore, even if each individual simulation is well optimized, the full valuation task scales with the number of cash flows and with the number of candidate structures considered by the user.

Plain Monte Carlo estimates $\theta$ by simulating $L(T)$ and averaging the indicator $I\{L(T)\geq D\}$, whose variance is $(\theta-\theta^2)/n$. When $D$ lies far in the tail, most simulated paths do not trigger, so the estimator converges slowly. The training labels therefore use importance sampling. Under the sampling density $g$, simulated losses are reweighted by the likelihood ratio $R=f/g$, giving the unbiased estimator $n^{-1}\sum_i I\{L_i(T)\geq D\}R(L_i(T))$. For Gamma severities, exponential tilting changes both the Poisson count and the severity scale; the tilting parameters are chosen so that the tilted expected aggregate loss is close to the trigger threshold. For Lognormal severities, where the moment generating function is infinite for positive arguments, the log-severity mean is shifted and the likelihood ratio corrects the change of measure. A diagnostic run with 20,000 simulations, $D=9\times10^9$, $T=1$, and $\lambda=35$ shows faster stabilization of the trigger-probability estimate under importance sampling for both severity laws. 

The neural network maps the contract and market state $(r_0,\lambda,D,T,N)$ to normalized CAT bond prices. Separate networks are trained for Gamma and Lognormal severities. {
These two severity laws are used because they are common positive loss-size specifications in CAT-bond applications, see \citep{BurneckiKukla2003,Sukono2022}. This comparison should be interpreted as a robustness check across two standard severity laws, not as a full treatment of severity-model uncertainty. Broader severity families, tail calibration, and event-data-based model selection are left for future work.} Each selected architecture has hidden layers $(256,128,64,32)$, ReLU activation, batch normalization, dropout 0.1, $L_2$ regularization $10^{-4}$, and learning rate $10^{-5}$. Hyperparameters are selected using five-fold cross-validation with mean squared error as the criterion, and the final evaluation uses an 80/20 train-test split. Table~\ref{tab:design} summarizes the simulation design.

\begin{table}[t]
\centering
\caption{Simulation design for structural training labels}
\label{tab:design}
\begin{tabular}{ll}
\toprule
Input or parameter & Baseline design \\
\midrule
Coupon rate & 5\% \\
Vasicek parameters & mean reversion 0.2, long-run rate 3\%, volatility 2\% \\
Initial short rate & $r_0\sim U(0,0.08)$ \\
Intensity & $\lambda\sim U(30,40)$ \\
Threshold & $D\sim U(7,13)$ billion dollars \\
Maturity & $T\sim U(90,720)$ days \\
Coupon payments & $N\in\{0,2,3,4,6,8,10,12\}$ \\
Severity laws & Gamma$(1,1.635\times10^8)$; Lognormal$(18.4,1)$ \\
Training labels & 600,000 simulated prices for each severity law \\
\bottomrule
\end{tabular}
\end{table}

{The ranges in Table~\ref{tab:design} define the operational input domain of the neural surrogate. Hence the network should be interpreted as a fast interpolator on this compact domain, not as a validated extrapolation rule. If a contract or stress scenario falls outside the stated ranges, the simulation design should be enlarged and the surrogate retrained, or direct MC-IS should be used for that case. Boundary diagnostics are reported in the supplementary material.}

\section{Accuracy and computational gains}
Figure~\ref{fig:accuracy} reports out-of-sample accuracy. For both severity laws, predicted prices lie close to the simulated structural prices and the error distributions are narrow and centered around zero. The average mean absolute error is 0.00274 for Gamma severities and 0.00308 for Lognormal severities; the corresponding mean squared errors are $1.4\times10^{-5}$ and $1.8\times10^{-5}$, i.e. RMSEs of 0.00380 and 0.00429. {The supplementary material reports additional error statistics, including signed bias, 95\% and 99\% absolute-error quantiles, maximum absolute error, $R^2$, and boundary-region errors.} Since prices are normalized, these errors are small relative to the variation generated by maturity, threshold, coupon frequency, and severity assumptions. The diagnostic is therefore not only statistical but economic: the surrogate recovers high and low structural values without visible distortion across the price range.

\begin{figure}[t]
\centering
\begin{subfigure}{0.48\textwidth}
\includegraphics[width=\linewidth]{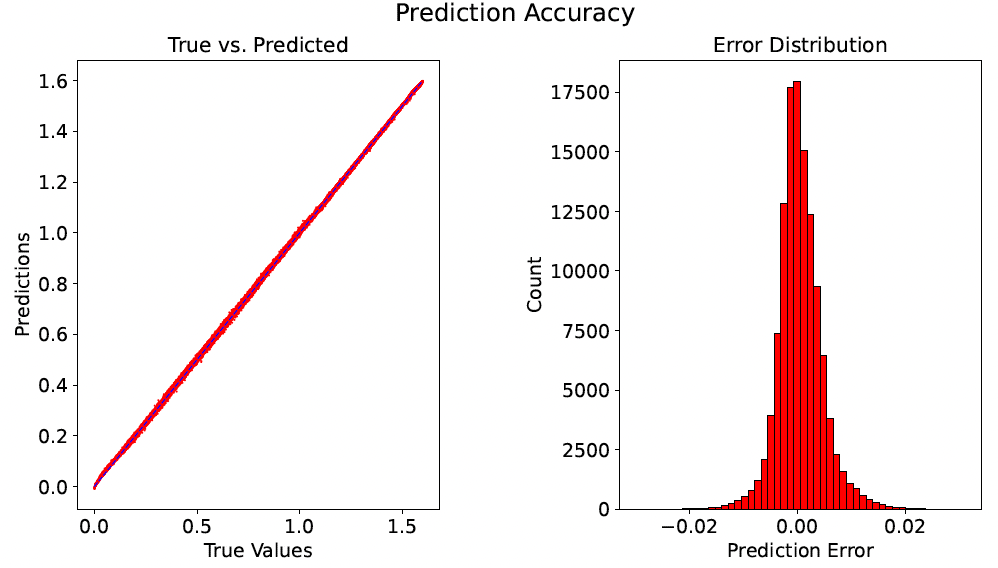}
\caption{Gamma severities}
\end{subfigure}\hfill
\begin{subfigure}{0.48\textwidth}
\includegraphics[width=\linewidth]{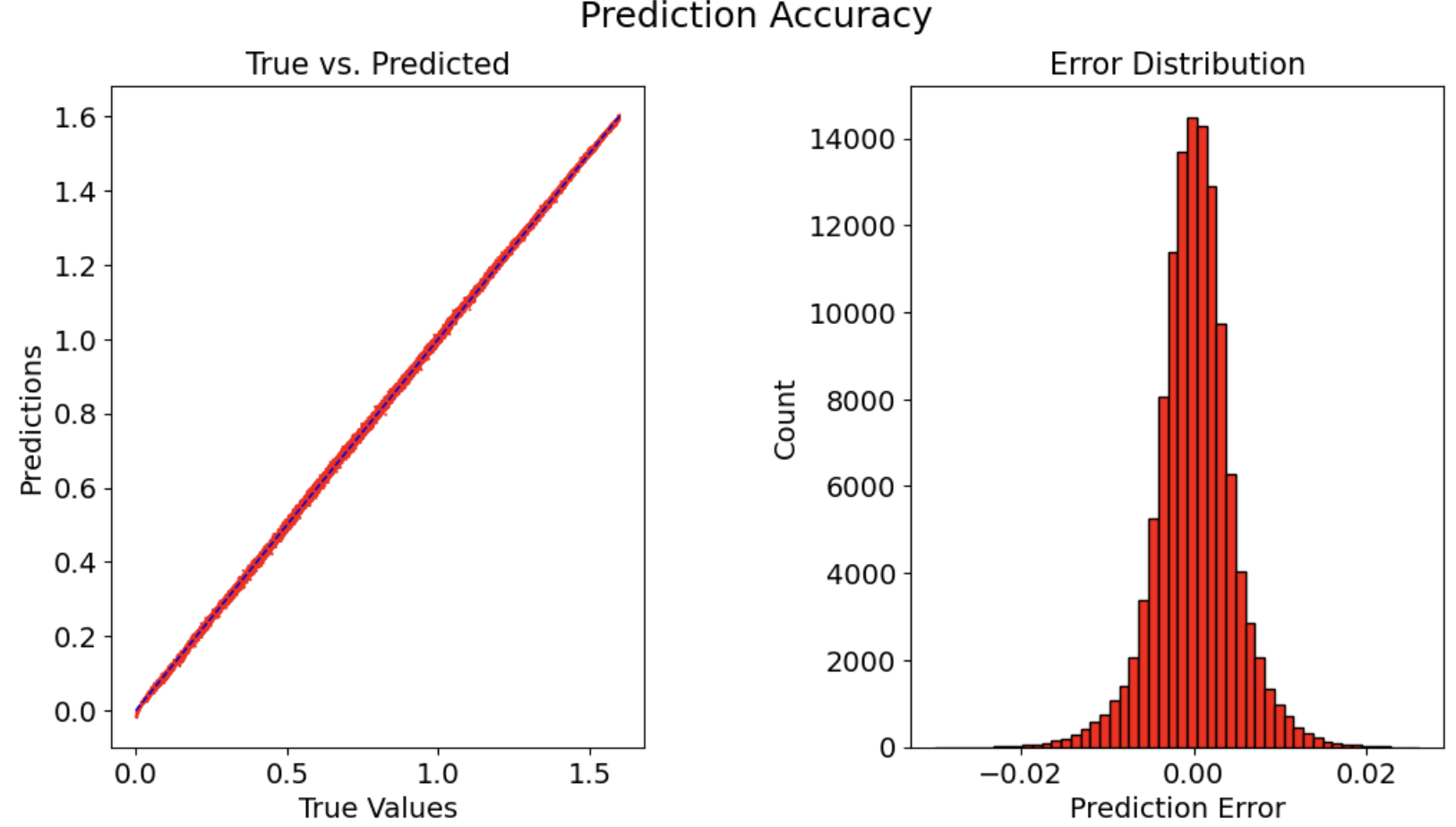}
\caption{Lognormal severities}
\end{subfigure}
\caption{Out-of-sample pricing accuracy. Predicted values align with simulated structural prices and prediction errors are concentrated around zero.}
\label{fig:accuracy}
\end{figure}

Table~\ref{tab:timing} compares PIDE prices, Monte Carlo with importance sampling, and the neural surrogate for 1000 prices, fixing $r_0=3\%$, $\lambda=35$, and $D=9\times10^9$. {Values are close across methods in most benchmark scenarios. The largest deviations occur in the long-maturity, high-coupon Gamma cases, where the PIDE benchmark is more conservative than both MC-IS and the neural surrogate. The neural surrogate remains close to the MC-IS labels on which it is trained.} For instance, in the $N=4,T=1$ Gamma case the PIDE, MC-IS, and neural prices are 1.1549, 1.1518, and 1.1510; for Lognormal severities they are 1.1397, 1.1361, and 1.1357.

\begin{table}[H]
\centering
\small
\caption{Value and time comparison for 1000 prices}
\label{tab:timing}
\begin{tabular}{llrrrrrr}
\toprule
Scenario & Severity & \multicolumn{3}{c}{Value} & \multicolumn{3}{c}{Time} \\
\cmidrule(lr){3-5}\cmidrule(lr){6-8}
 & & PIDE & MC-IS & NN & PIDE & MC-IS & NN \\
\midrule
$N=0,T=1$ & Gamma & 0.9591 & 0.9563 & 0.9533 & 15.00 h & 39.61 s & 0.0350 s \\
$N=0,T=1$ & Lognormal & 0.9449 & 0.9414 & 0.9428 & 15.33 h & 53.65 s & 0.0318 s \\
$N=2,T=1$ & Gamma & 1.0564 & 1.0533 & 1.0533 & 43.94 h & 118.91 s & 0.0312 s \\
$N=2,T=1$ & Lognormal & 1.0413 & 1.0377 & 1.0392 & 46.47 h & 159.38 s & 0.0298 s \\
$N=4,T=1$ & Gamma & 1.1549 & 1.1518 & 1.1510 & 73.46 h & 220.73 s & 0.0430 s \\
$N=4,T=1$ & Lognormal & 1.1397 & 1.1361 & 1.1357 & 78.80 h & 261.32 s & 0.0294 s \\
$N=8,T=2$ & Gamma & 0.3146 & 0.3783 & 0.3823 & 131.34 h & 342.13 s & 0.0329 s \\
$N=8,T=2$ & Lognormal & 0.4354 & 0.4257 & 0.4202 & 142.29 h & 408.17 s & 0.0298 s \\
$N=12,T=2$ & Gamma & 0.4678 & 0.5331 & 0.5310 & 190.16 h & 438.21 s & 0.0311 s \\
$N=12,T=2$ & Lognormal & 0.5931 & 0.5822 & 0.5773 & 205.85 h & 581.54 s & 0.0301 s \\
\bottomrule
\end{tabular}

{\begin{minipage}{0.98\linewidth}\footnotesize\vspace{0.3em}
Notes: PIDE denotes the partial integro-differential equation benchmark, MC-IS denotes Monte Carlo with importance sampling, and NN denotes the trained neural network. Prices are normalized by face value. Note that values above one can occur because protected coupon payments are included. The MC-IS benchmark uses 5,000 importance-sampling paths per protected cash-flow date with a fixed-path Monte Carlo simulation scheme.
\end{minipage}}

\end{table}

The timing comparison shows the true value of the neural surrogate approach. Neural evaluation takes only 0.0294--0.0430 seconds for 1000 prices across all scenarios, while MC-IS requires 39.61--581.54 seconds and the PIDE benchmark 15.00--205.85 hours. At $N=12,T=2$, MC-IS is roughly 14,000 times slower than the neural surrogate for Gamma severities and about 19,000 times slower for Lognormal severities. The PIDE method is slower still. These differences change the feasible workflow: structural CAT bond pricing can be used inside screening, calibration, and stress-testing loops rather than only after a candidate structure has been selected. {These timings are online valuation times after training. The one-time offline costs are label generation ({24 hours 21 minutes} for Gamma and { 33 hours 20 minutes} for Lognormal severities), training of the selected architecture ({ 29.70 and 35.98 minutes for Gamma and Lognormal severities, respectively}), and full cross-validation/model selection ({7 hours 27 minutes}) on {an Apple M3 Pro machine with 11 CPU cores, 14 GPU cores, and 18 GB of unified RAM}. These costs are amortized only when the pricing surface is reused; for an isolated out-of-domain valuation, direct MC-IS can be preferable. Full timing details by architecture are reported in the supplementary material.}

{The supplementary material further reports several robustness diagnostics, including boundary-region errors near the edge of the training domain, monotonicity violation rates on dense grids, and the performance of alternative network architectures. The boundary-region analysis indicates that the neural network retains comparable accuracy near the edges of the training domain. Moreover, on dense grids with $100{,}001$ points for each parameter, including values outside the training domain, the model exhibits no monotonicity violations for either severity distribution; see also Section~\ref{sec:Sens}. Finally, the procedure is stable with respect to the network architecture: alternative architectures yield comparable prediction accuracy, while the proposed architecture $(256,128,64,32)$ attains the lowest MAE with a reasonable training time.}

\section{Sensitivities}\label{sec:Sens}
The trained pricing map also gives immediate sensitivity curves. Figures~\ref{fig:gamma_sens} and \ref{fig:log_sens} vary catastrophe intensity, attachment threshold, and the initial short rate under Gamma and Lognormal severities. These curves are evaluations of the same surrogate, not separate models.

The results match the economics of CAT bond risk. Prices decline with catastrophe intensity because more frequent arrivals raise the probability of principal impairment; the effect is steepest at low attachment thresholds and much flatter when $D=11$ billion. Prices move in the opposite direction with the threshold: increasing $D$ makes the trigger harder to reach, with the strongest marginal effect around 8--9 billion dollars where the trigger probability changes most rapidly. Prices also decline with the initial short rate $r_0$, consistent with higher discount rates lowering present values, while the shape of this rate response is only weakly affected by $\lambda$. The figures therefore serve both as validation and as a risk-management output. They show economically correct monotonicities, identify nonlinear regions where contract design matters most, and provide full response curves without recomputing rare-event simulations on a grid. {This analysis directly links the speed gain of the presented method to a risk-management use case: each curve requires many valuations, which would otherwise require repeated rare-event simulations. A finite-difference monotonicity check on dense grids is reported in the supplementary material, with violation rates. The results show 0\% monotonicity violation rates for $\lambda$, $D$, and $r_0$ under both the Gamma and Lognormal severity models, confirming that the neural network preserves the expected monotonic relationships across the parameter domain.}

\begin{figure}[!htbp]
\centering
\begin{subfigure}{0.32\textwidth}\includegraphics[width=\linewidth]{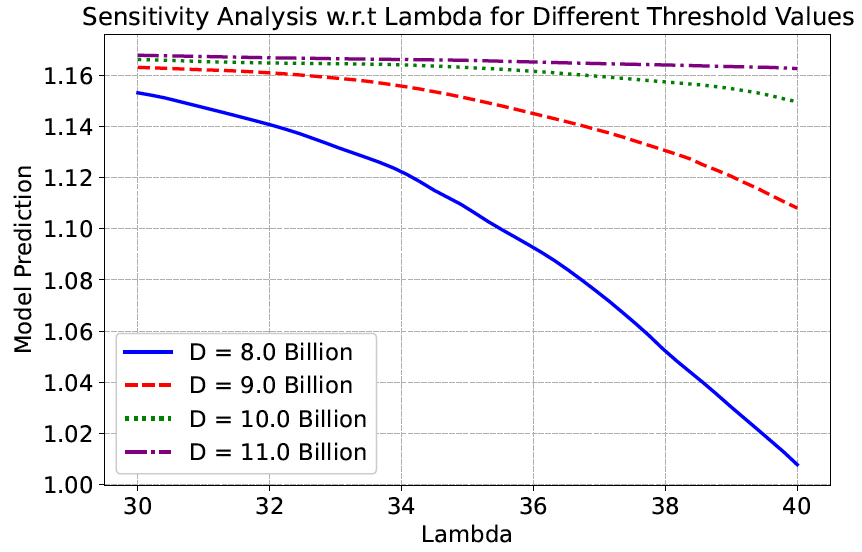}\caption{Intensity by threshold}\end{subfigure}\hfill
\begin{subfigure}{0.32\textwidth}\includegraphics[width=\linewidth]{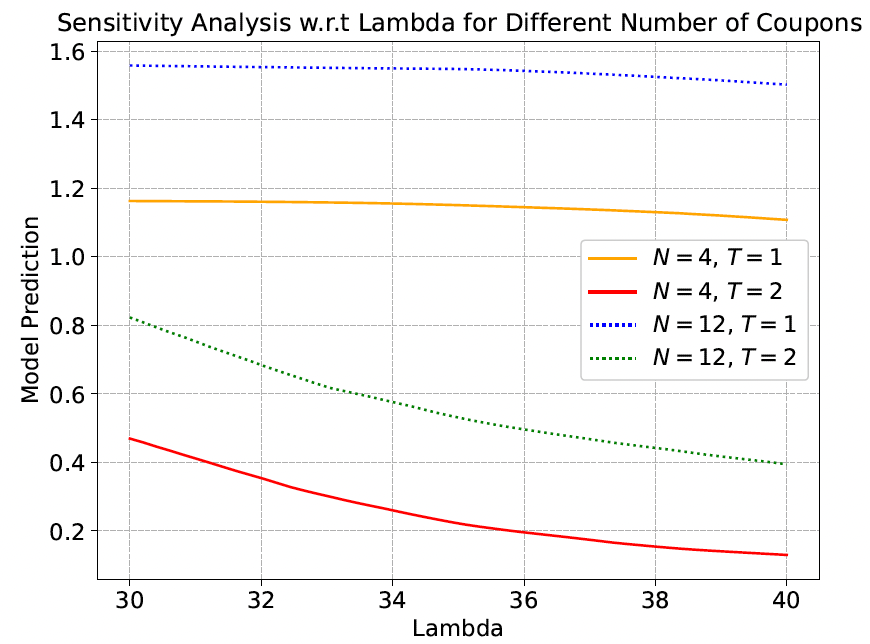}\caption{Intensity by coupons}\end{subfigure}\hfill
\begin{subfigure}{0.32\textwidth}\includegraphics[width=\linewidth]{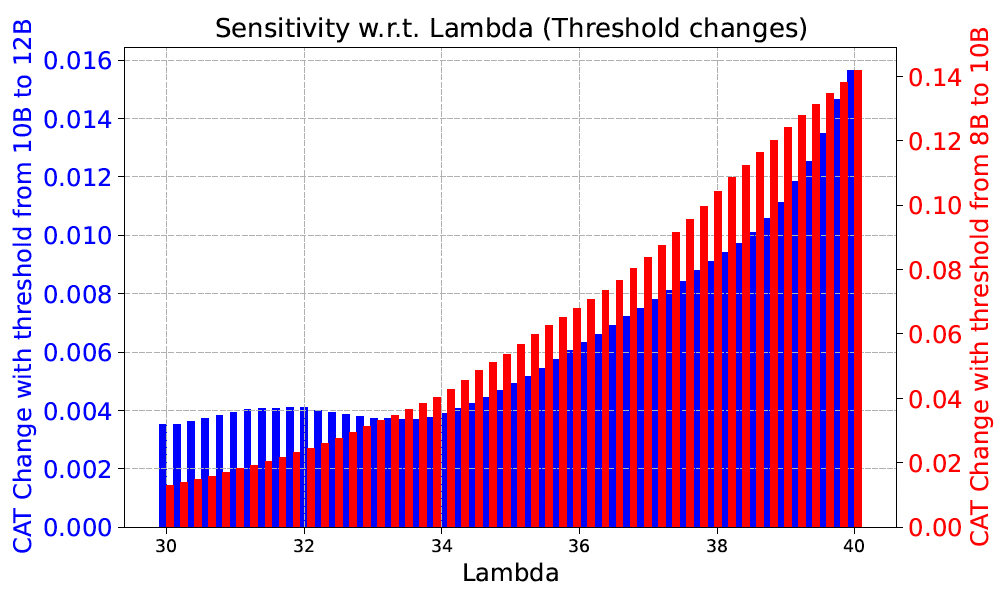}\caption{Intensity changes}\end{subfigure}
\begin{subfigure}{0.32\textwidth}\includegraphics[width=\linewidth]{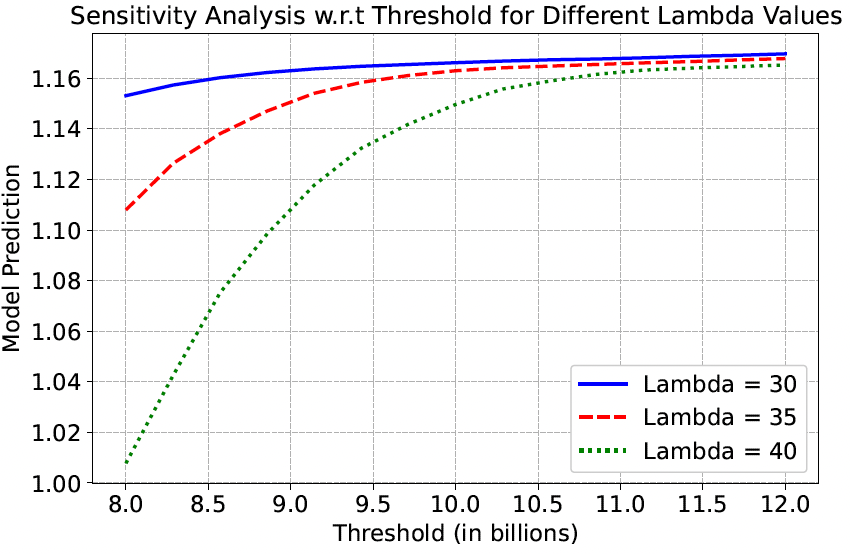}\caption{Threshold by intensity}\end{subfigure}\hfill
\begin{subfigure}{0.32\textwidth}\includegraphics[width=\linewidth]{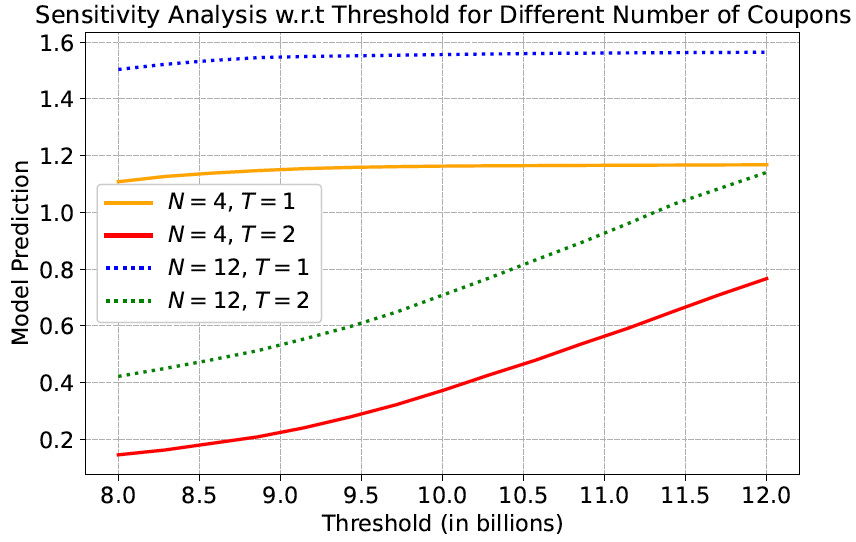}\caption{Threshold by coupons}\end{subfigure}\hfill
\begin{subfigure}{0.32\textwidth}\includegraphics[width=\linewidth]{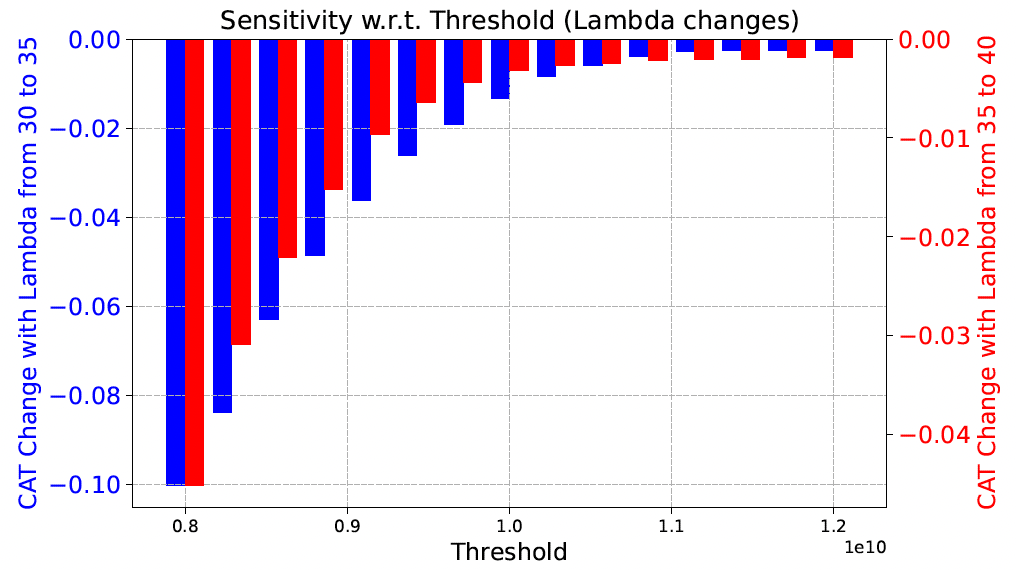}\caption{Threshold changes}\end{subfigure}
\begin{subfigure}{0.32\textwidth}\includegraphics[width=\linewidth]{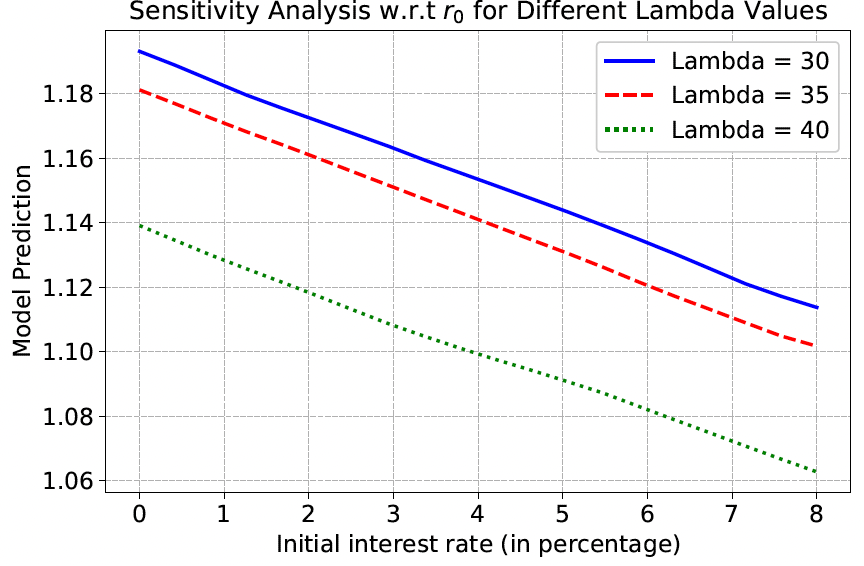}\caption{Rate by intensity}\end{subfigure}\hfill
\begin{subfigure}{0.32\textwidth}\includegraphics[width=\linewidth]{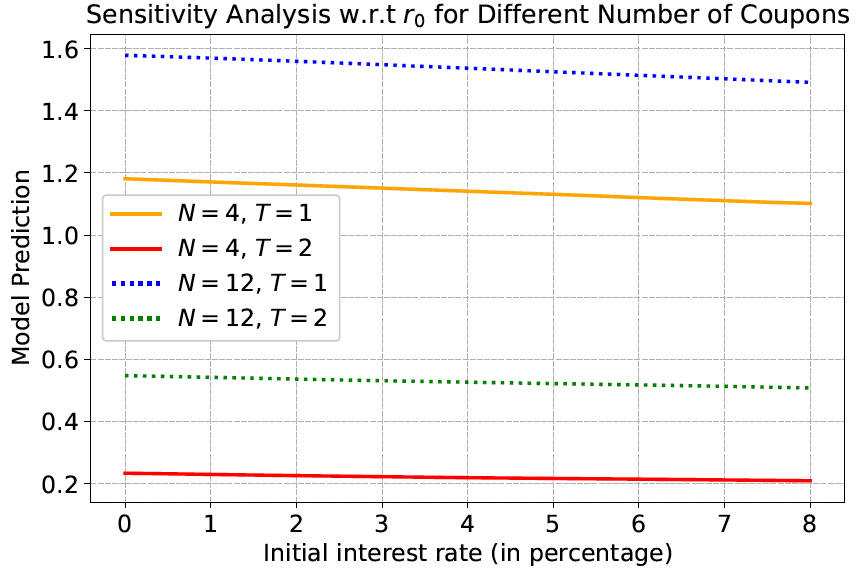}\caption{Rate by coupons}\end{subfigure}\hfill
\begin{subfigure}{0.32\textwidth}\includegraphics[width=\linewidth]{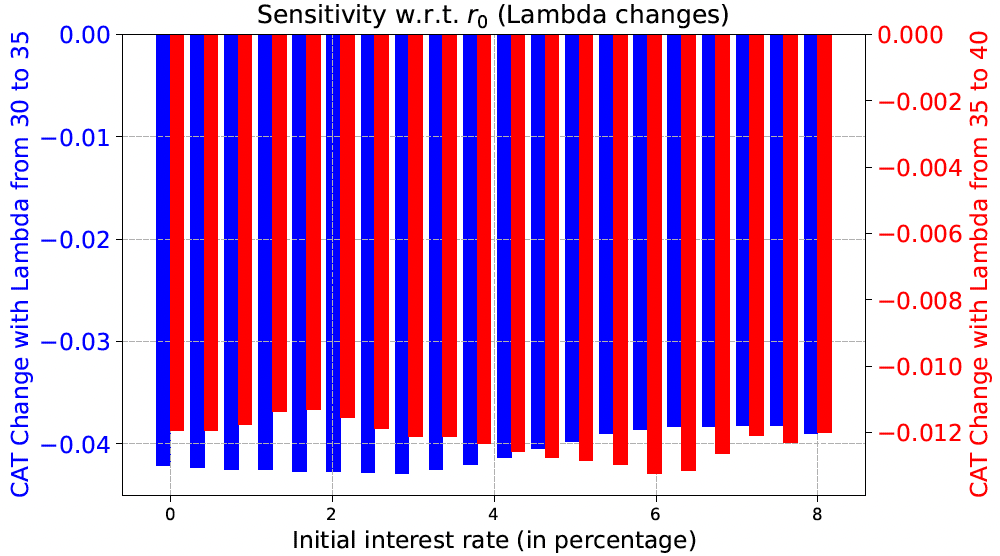}\caption{Rate changes}\end{subfigure}
\caption{Computed price sensitivities for Gamma distributed losses. The panels vary catastrophe intensity $\lambda$, threshold $D$, and initial short rate $r_0$ under alternative threshold, coupon, and intensity scenarios.}
\label{fig:gamma_sens}
\end{figure}

\begin{figure}[!htbp]
\centering
\begin{subfigure}{0.32\textwidth}\includegraphics[width=\linewidth]{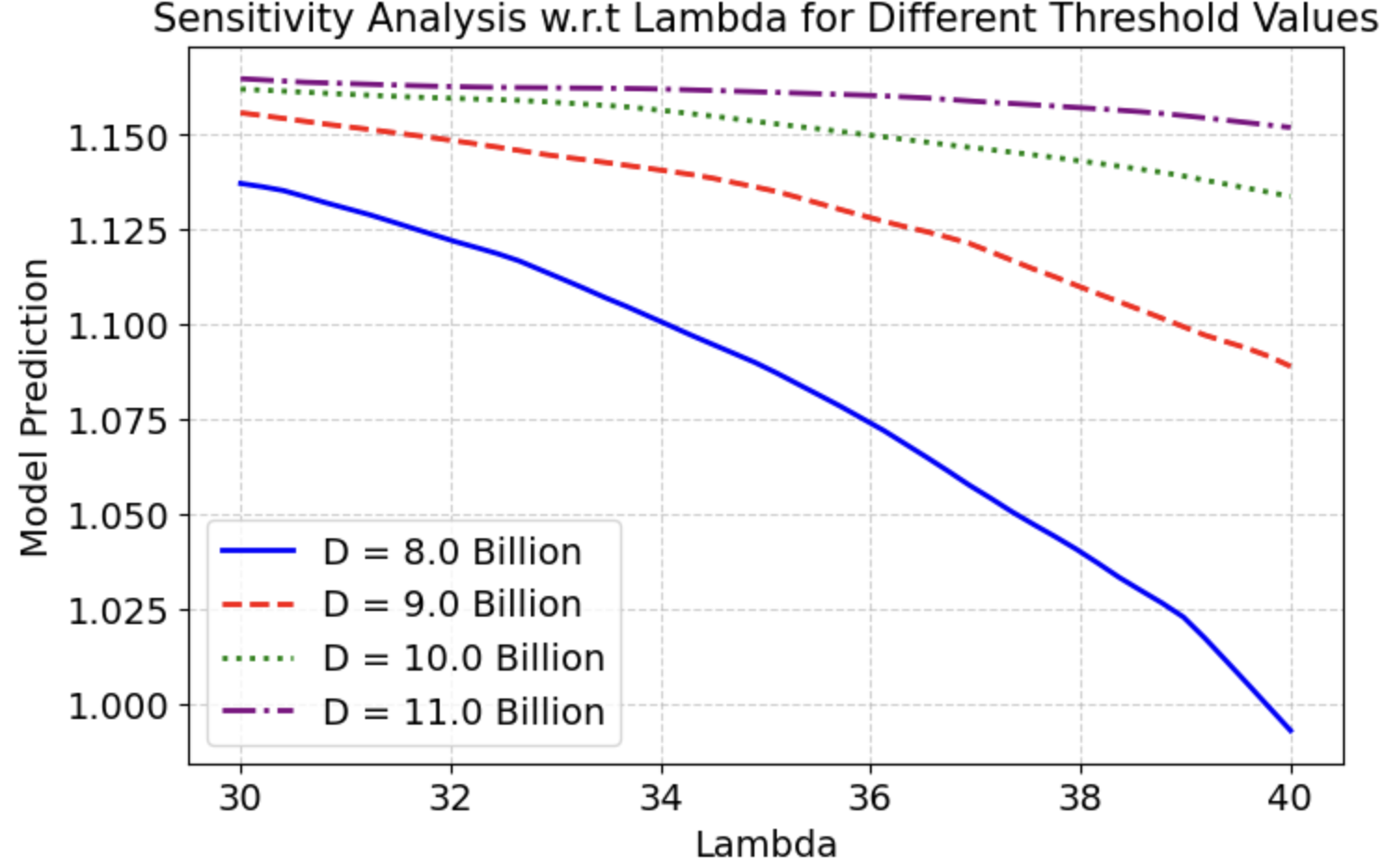}\caption{Intensity by threshold}\end{subfigure}\hfill
\begin{subfigure}{0.32\textwidth}\includegraphics[width=\linewidth]{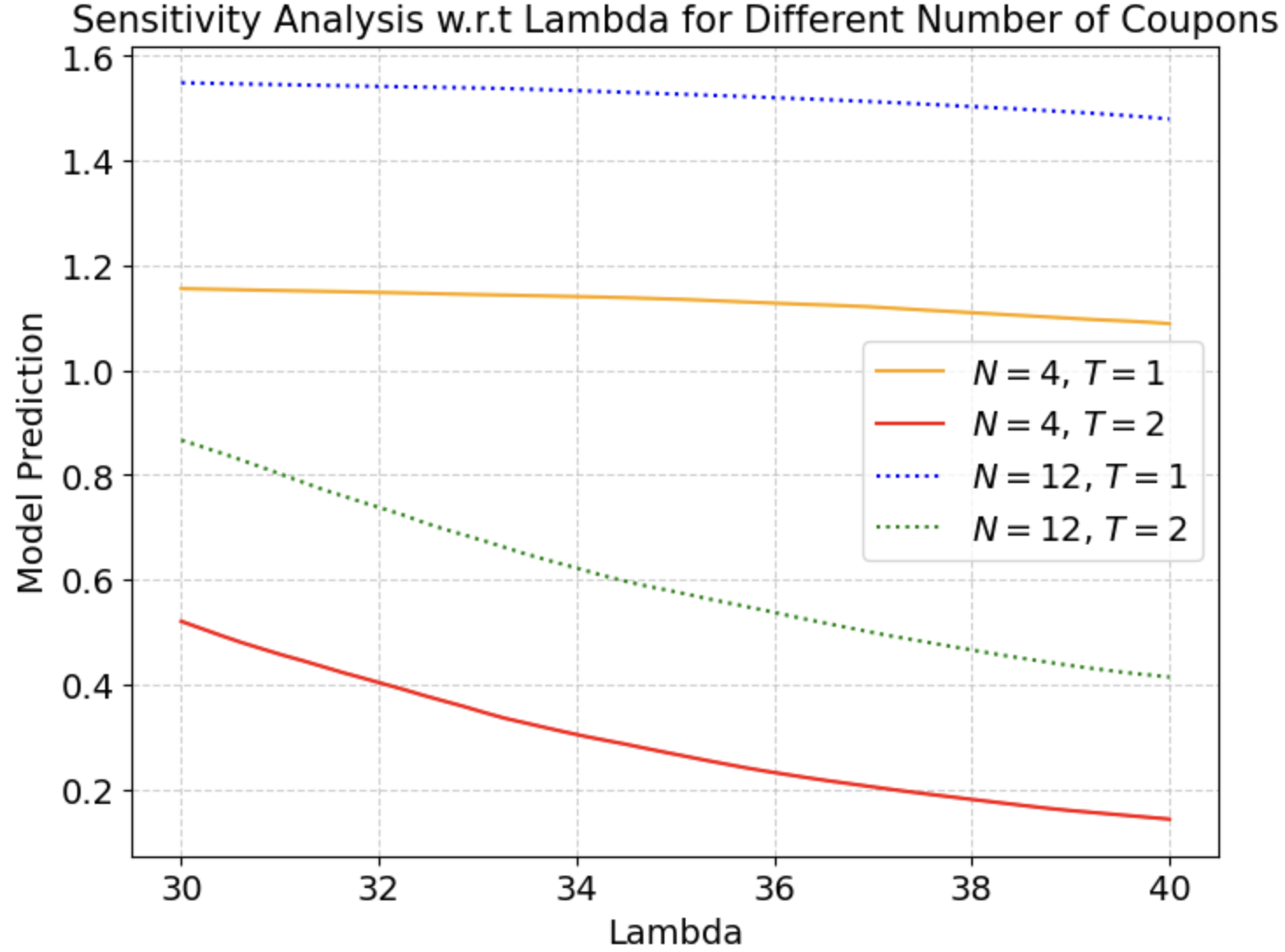}\caption{Intensity by coupons}\end{subfigure}\hfill
\begin{subfigure}{0.32\textwidth}\includegraphics[width=\linewidth]{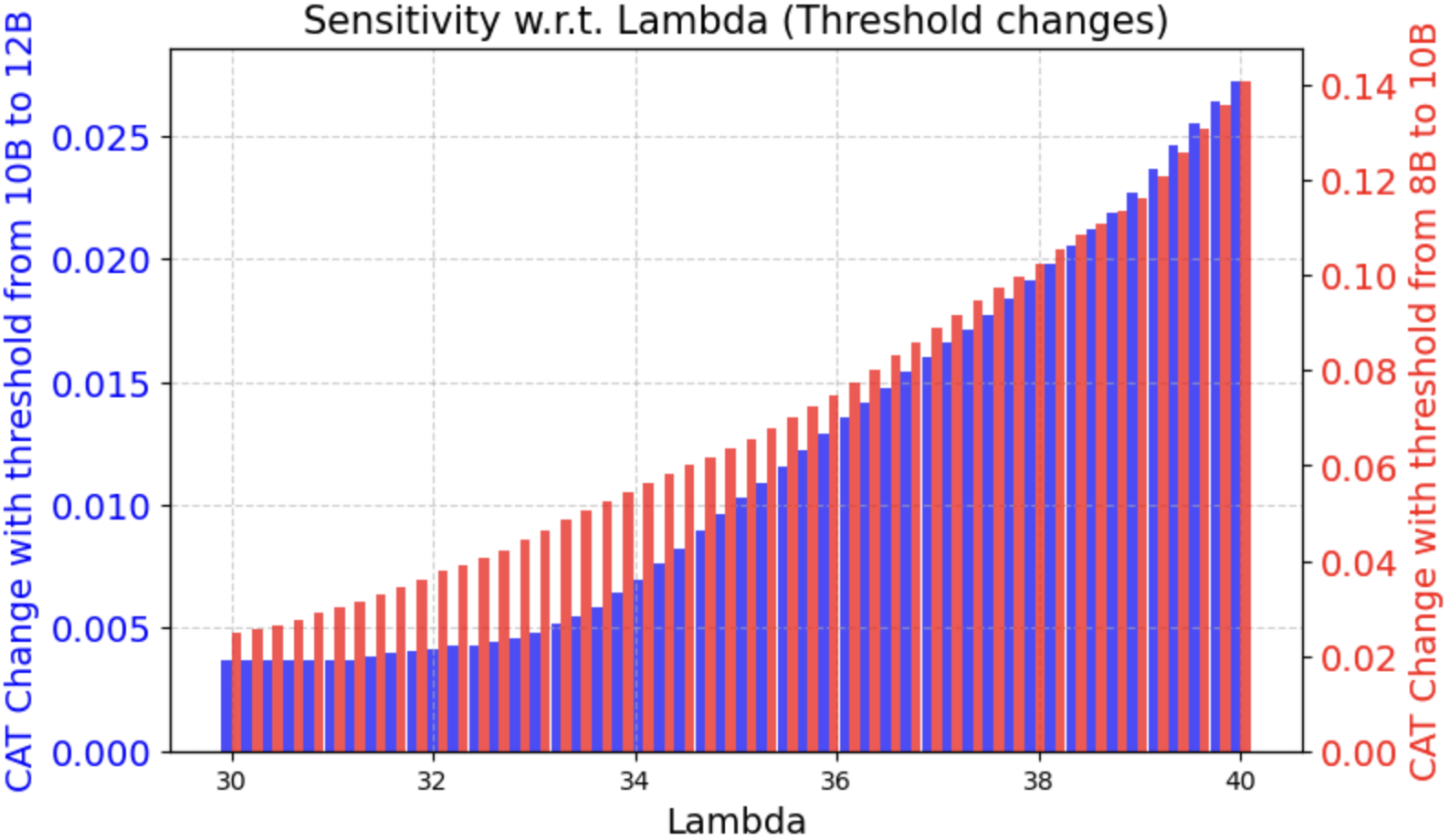}\caption{Intensity changes}\end{subfigure}
\begin{subfigure}{0.32\textwidth}\includegraphics[width=\linewidth]{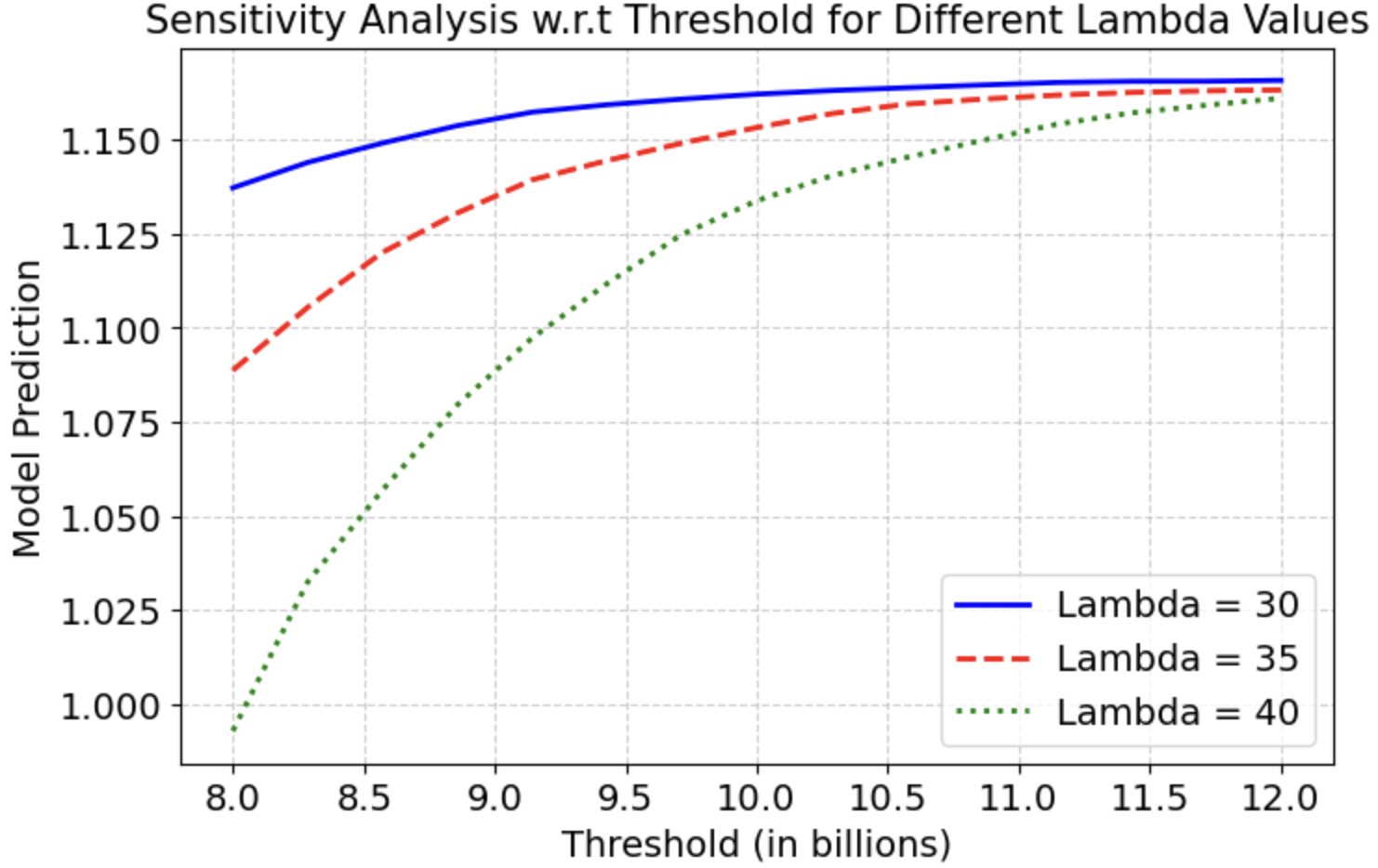}\caption{Threshold by intensity}\end{subfigure}\hfill
\begin{subfigure}{0.32\textwidth}\includegraphics[width=\linewidth]{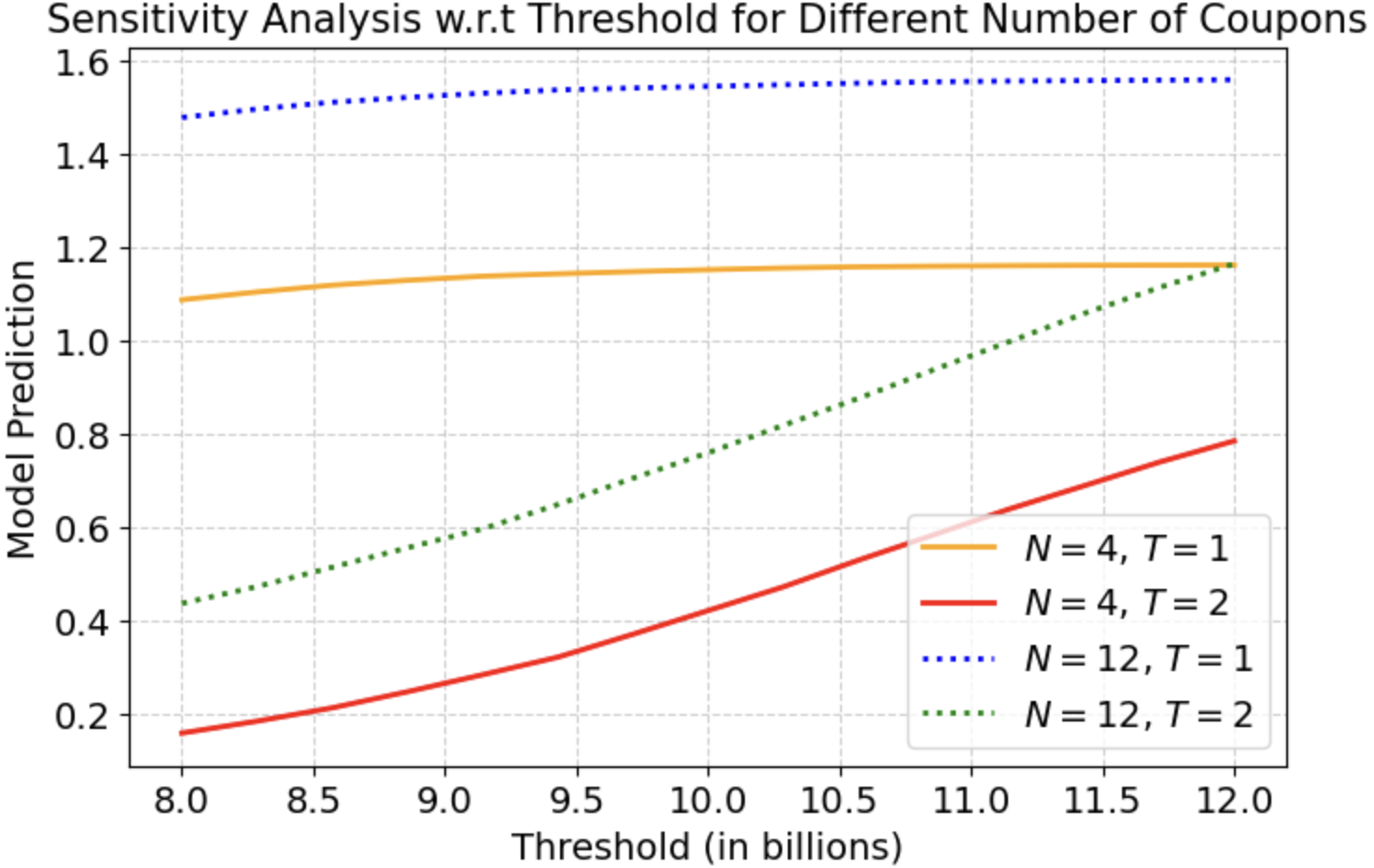}\caption{Threshold by coupons}\end{subfigure}\hfill
\begin{subfigure}{0.32\textwidth}\includegraphics[width=\linewidth]{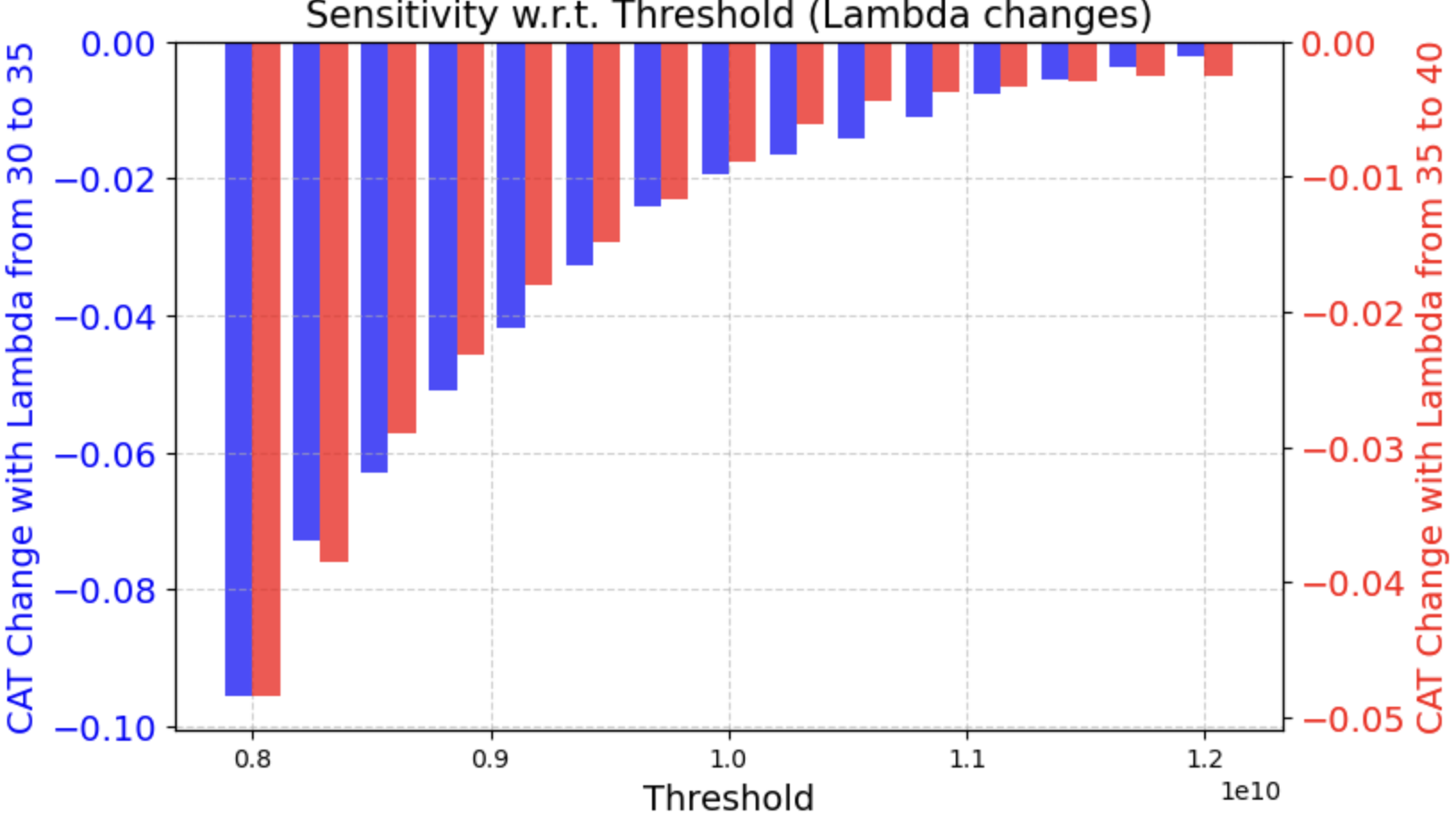}\caption{Threshold changes}\end{subfigure}
\begin{subfigure}{0.32\textwidth}\includegraphics[width=\linewidth]{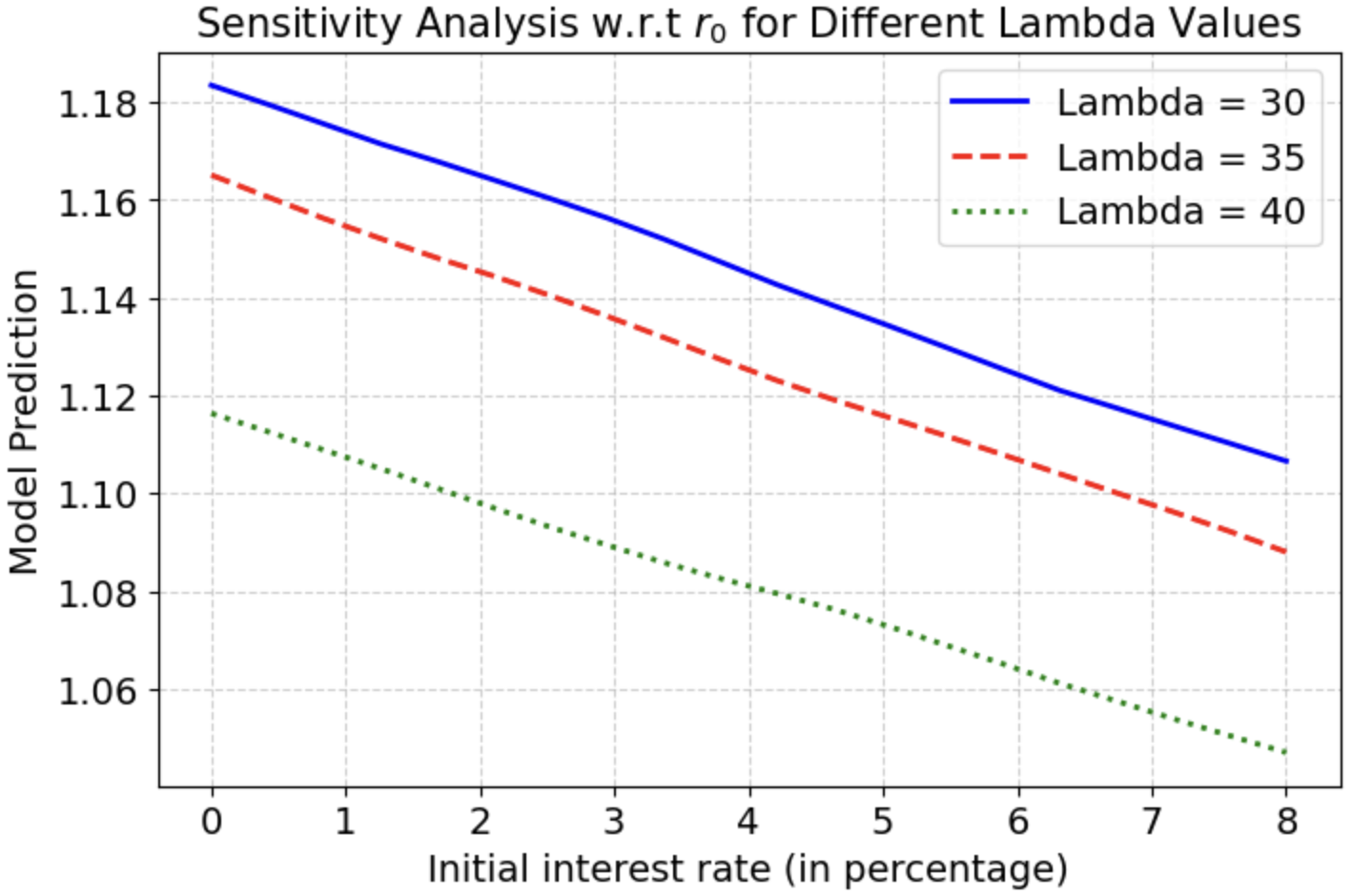}\caption{Rate by intensity}\end{subfigure}\hfill
\begin{subfigure}{0.32\textwidth}\includegraphics[width=\linewidth]{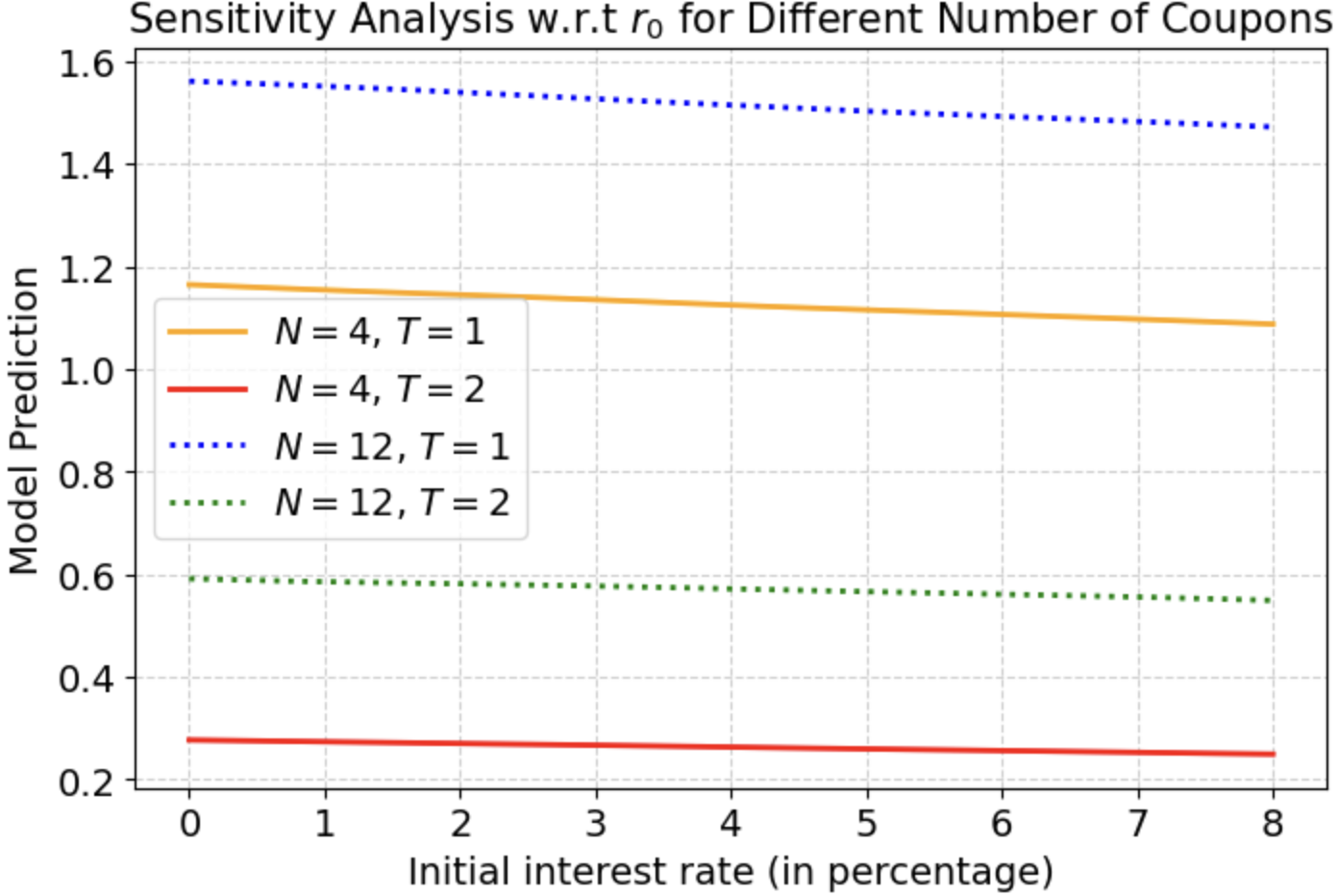}\caption{Rate by coupons}\end{subfigure}\hfill
\begin{subfigure}{0.32\textwidth}\includegraphics[width=\linewidth]{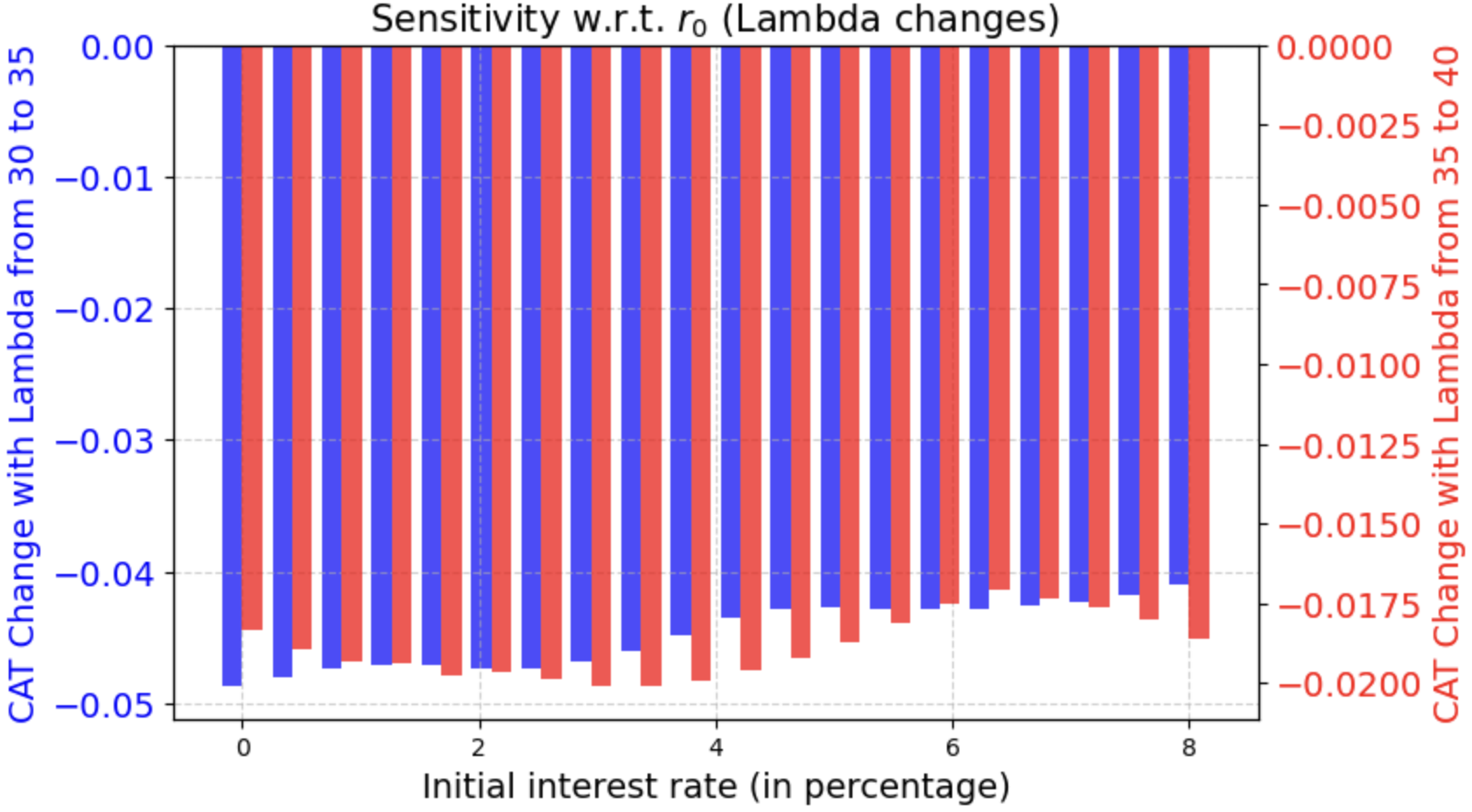}\caption{Rate changes}\end{subfigure}
\caption{Computed price sensitivities for Lognormal distributed losses. The same qualitative effects appear under the heavier-tailed severity specification.}
\label{fig:log_sens}
\end{figure}

\FloatBarrier
\section{Conclusion}
A neural surrogate turns structural CAT bond valuation into a real-time finance tool {for repeated in-domain valuation}. It preserves the interpretability of a compound-Poisson pricing model while amortizing the cost of rare-event simulation across future valuations. The empirical results show small out-of-sample pricing errors {on the simulated domain}, value estimates close to MC-IS benchmarks, and speed gains large enough to support screening and sensitivity analysis. The method is best viewed as a fast engine conditional on a chosen pricing measure, severity model, and trigger specification. Future work should combine the surrogate with market calibration, multi-peril dependence, richer trigger structures, and portfolio-level risk aggregation.


\section*{Acknowledgments}
The authors thank Prabhav Kumar for valuable input during the early stages of this work. J. Sester gratefully acknowledges support by the NUS Start-Up Grant \emph{Tackling model
uncertainty in Finance with machine learning}, by the MOE AcRF Tier 1 Grant 	 25-0428-P0001, and by the MOE AcRF Tier 2 Grant 	
T2EP20225-0030.

\section*{Data and code availability}
The reported results are based on simulated data generated from the model and parameter ranges described in the paper and supplementary material. No proprietary data were used. The replication code, training configurations, seeds, and notebooks are available at {\url{https://github.com/HuansangXu/CAT-bonds}}.

\section*{Declaration of generative AI and AI-assisted technologies in the manuscript preparation process}
During manuscript preparation, the authors used ChatGPT to assist with shortening and editorial rewriting. The authors reviewed and edited the content and take full responsibility for the submitted manuscript.

\appendix
\section*{Appendix: Supplementary material}
\addcontentsline{toc}{section}{Appendix: Supplementary material}

This supplement collects the technical statements, algorithms, proofs, and numerical diagnostics from the original manuscript. It is intended to be read alongside the main paper. In particular, Section~\ref{sec:nn} justifies the neural-network approximation by applying the universal approximation theorem to the CAT-bond pricing map on compact parameter sets. Sections~\ref{sec:MC}--\ref{sec:IS} explain the importance-sampling construction used to generate training labels, while the proofs of Propositions~\ref{prop_variance_reduction} and~\ref{prop_variance_reduction_logn} give the corresponding mathematical variance-reduction arguments for Gamma and Lognormal severity specifications. {Section~\ref{sec:numerical_diagnostics} reports additional diagnostics including boundary-region errors, MC-IS benchmark settings, offline computational costs, monotonicity checks, and architecture robustness.}

\section{Model and pricing statements}
\subsection{Modeling interest rates}\label{subsec:interest_rates}

Consider a probability space $(\Omega, \mathbb{G}, \mathbb{Q})$, where $\mathbb{Q}$ is interpreted as a risk-neutral probability measure, or spot martingale measure, and let $\mathbb{F} = \{\mathcal{F}(t)\}_{t \geq 0}$ be a sub-filtration of $\mathbb{G}$, describing the evolution of available market information over time. In line with \cite{cuchiero2010affine, Duffie1996AYM, duffee2002term, keller2011affine, piazzesi2010affine}, we use a general affine term structure model to model the spot interest rate, i.e., we assume the annual interest rate $r_t$ has dynamics 
\begin{align}
\dee r_t 
&= (\alpha(t) r_t + \beta(t)) \,\dee t + \sqrt{\gamma(t) r_t + \delta(t)} \,\dee W_t,   \label{eq:affine_coefficients}
\end{align}
for $\{W_t\}_{t \geq 0}$ denoting a standard Brownian motion w.r.t.\,$\mathbb{F}$, 
 meaning that  the drift $\alpha(t) r_t + \beta(t)$ and variance rate $\gamma(t)  r_t + \delta(t)$ are affine in the spot rate $r_t$ at all times -- explaining the name of this model class.

A non-defaultable zero coupon bond promises the payment of the face value $F$ at maturity $T$. Assuming no default risk, by risk-neutral valuation (see, e.g., \cite{brigo2006interest}), the price of the zero coupon bond at time $t \leq T$ is given by 
\begin{align}
P_{\rm Z}(F, t, T):= \mathbb{E}_{\mathbb{Q}}\left[Fe^{-\int_{t}^{T} r_s \,\dee s} \Big| \mathcal{F}(t) \right]
=Fe^{A(t,T)-B(t,T)r_t}, \label{zero coupon bond equation}
\end{align}
where the last equality follows from our assumption of an affine term structure, see also, e.g., \cite{keller2011affine}. $A(t,T)$ and $B(t,T)$ are deterministic functions of the model parameters which satisfy the ordinary system of equations:
\begin{equation} \label{ATSM_functions}
\begin{aligned}
1+ \frac{ \dee B(t,T)}{ \dee t} + \alpha(t) B(t,T) - \frac{1}{2} \gamma(t) B^2(t,T) &= 0, \qquad B(T,T) = 0  \\
\frac{ \dee A(t,T)}{\dee t} - \beta(t) B(t,T) + \frac{1}{2} \delta(t) B^2(t,T) &= 0, \qquad A(T,T)=0 
\end{aligned}  
\end{equation}
see, e.g., \cite{bjork2009arbitrage,keller2011affine}.
In order to ensure the existence of continuous solutions $A$ and $B$ we impose the following assumption.
\begin{asu}\label{asu:ATSM_continuous}
The coefficient functions $\alpha$, $\beta$, $\gamma$, $\delta$ from \eqref{eq:affine_coefficients} are assumed to be continuous.
\end{asu}

\begin{lem}\label{ATSM_continuous}
Under Assumption \ref{asu:ATSM_continuous}, there exist functions $A,B \in C(\mathbb{R}_+^2)$ that solve the Riccati equations~\eqref{ATSM_functions}. In particular, the solutions $A(t,T)$ and $B(t,T)$ are jointly continuous in $(t,T)$.
\end{lem}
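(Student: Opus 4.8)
The plan is to exploit the triangular structure of the system~\eqref{ATSM_functions}: for each fixed maturity $T$ the first line is a scalar Riccati terminal-value problem for $t \mapsto B(t,T)$ that does not involve $A$, while the second line determines $A$ from $B$ by a single quadrature. I would therefore first solve for $B$ and establish its joint continuity, and only afterwards recover $A$. Writing $f(t,b) := -1 - \alpha(t)\,b + \tfrac{1}{2}\gamma(t)\,b^2$, the first equation reads $\partial_t B = f(t,B)$ with $B(T,T)=0$. Under Assumption~\ref{asu:ATSM_continuous} the map $f$ is continuous in $t$ and polynomial -- hence $C^1$, thus locally Lipschitz -- in $b$, so the Picard--Lindel\"of theorem yields, through the point $(T,0)$, a unique maximal solution $B(\cdot,T)$.

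The delicate point, and what I expect to be the main obstacle, is that a Riccati equation may blow up in finite time, so local existence does not by itself deliver a solution on the whole interval $[0,T]$ that $B \in C(\R_+^2)$ requires. I see two routes to rule this out. First, the classical substitution $B = -\tfrac{2}{\gamma}\,u'/u$ (valid where $\gamma \neq 0$) linearises the Riccati equation into a second-order linear ODE for $u$ with continuous coefficients, which is globally solvable, so that a blow-up of $B$ can occur only at a zero of $u$. Second, one can derive an a priori bound by comparison or a Gr\"onwall argument from the sign structure of the coefficients: when $\gamma \geq 0$ the quadratic term supplies a one-sided barrier while the affine part controls the opposite side, forbidding escape to infinity and allowing the maximal solution to be continued to all of $[0,T]$.

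Granting existence on $[0,T]$, joint continuity of $B$ follows from the continuous-dependence theory for ODEs. Denoting by $\Phi(t;s,x)$ the value at time $t$ of the solution of $\dot b = f(t,b)$ issued from $(s,x)$, one has $B(t,T) = \Phi(t;T,0)$; since $f$ is continuous in $t$ and locally Lipschitz in $b$, the flow $\Phi$ is jointly continuous in $(t,s,x)$ on its domain, and specialising $s=T$, $x=0$ gives continuity of $(t,T) \mapsto B(t,T)$. It is worth noting that the dependence on $T$ enters only through the initial-time slot of the flow, the coefficients being functions of $t$ alone, so no reparametrisation in time-to-maturity is needed.

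Finally, integrating the second equation with $A(T,T)=0$ yields the explicit representation
\[
A(t,T) = \int_t^T \Big( \tfrac{1}{2}\delta(u)\,B^2(u,T) - \beta(u)\,B(u,T) \Big)\,\dee u .
\]
The integrand is jointly continuous in $(u,T)$ by the previous step together with Assumption~\ref{asu:ATSM_continuous}, and an integral of a jointly continuous integrand over an interval with continuous (here linear) endpoints depends continuously on those endpoints; hence $A \in C(\R_+^2)$, which completes the argument.
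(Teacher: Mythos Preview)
Your proposal is correct and follows the same high-level architecture as the paper: first solve the scalar Riccati terminal-value problem for $B$, then obtain $A$ by quadrature. The chief difference is in how continuity in the maturity variable is established. The paper argues by hand: it forms the difference $D(t,\varepsilon)=B(t,T)-B(t,T+\varepsilon)$, derives a linear ODE for $D$, applies Gr\"onwall's inequality to obtain $|D(t,\varepsilon)|\le C_1\,|D(T,\varepsilon)|$, and shows that $|D(T,\varepsilon)|=|B(T,T+\varepsilon)|\to 0$; the analogous estimate is then repeated for $A$. You instead invoke the continuous-dependence theorem for ODE flows, writing $B(t,T)=\Phi(t;T,0)$ and reading off joint continuity of $(t,T)\mapsto B(t,T)$ directly, and then handle $A$ via the explicit integral representation. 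Your route is more economical and makes the source of continuity transparent, while the paper's hands-on Gr\"onwall computation is self-contained and avoids appealing to a black-box theorem; both arguments are ultimately the same Gr\"onwall estimate in different packaging. One point worth noting: you are more explicit than the paper about the possibility of finite-time blow-up in the Riccati equation and sketch two remedies (linearisation via $B=-\tfrac{2}{\gamma}u'/u$, or a comparison argument under $\gamma\ge 0$), whereas the paper simply cites standard Riccati theory; neither treatment fully closes this gap without additional hypotheses on the coefficients, so your caveat is well placed.
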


The above described general types of affine term structure models include, in particular, the widely used Vasicek (\cite{Vasicek1977AE}) and Cox--Ingersoll--Ross (CIR)  (\cite{cox1985theory}) models which describe short-rate dynamics using a single stochastic factor driving the randomness.

\subsection{Pricing of CAT bonds}\label{subsec:pricing_cat}
Given the rare and unpredictable nature of catastrophic events, Poisson processes provide a natural mathematical framework for CAT bond pricing. In this approach, the jumps of the Poisson process correspond to catastrophe occurrences, with the jump frequency parameter capturing the rate at which catastrophes occur. This modeling choice has become the standard approach in the literature, as demonstrated in the comprehensive review by \cite{sukono2022application}, which examines various CAT bond pricing methodologies.

To describe these catastrophes mathematically, we model the number of catastrophes by a \textit{$\mathbb{F}$-conditional Poisson process}  $\{M(t)\}_{t \geq 0}$, defined on $(\Omega,\mathbb{G},\mathbb{Q})$, associated with positive intensity $\{\lambda(t)\}_{t \geq 0}$ being a $\mathbb{F}$-progressively measurable process with locally integrable sample path.

This allows to define a \textit{hazard process} $\{\Lambda(t): \Lambda(t) = \int_0^t \lambda(s) \, \D s \}_{t \geq 0}$, which is $\mathbb{F}$-adapted, right-continuous and increasing. For any $t>0$ and every $n=0,1,\cdots$, we then have
\begin{equation*}
\mathbb{Q}(M(t)=n | \mathcal{F}(t)) = \frac{(\Lambda(t))^n}{n!} e^{-\Lambda (t)}.    
\end{equation*}

Next, we model the indemnity loss trigger. The $i$th occurrence of a loss is given by a value drawn from the random variable $X_i$. For tractability, we assume that the losses \{$X_i\}$ are independent and identically distributed (i.i.d.) with cumulative distribution function (cdf) $F_{X}$ and independent of $\{M(t)\}_{t \geq 0}$. Then, the aggregate indemnity loss is a compound Poisson process $\{L(t)\} _{t \geq 0}$ such that 
\[
L(t) := \sum_{i=1}^{M(t)}{X_i}.
\]
A trigger event occurs exactly when $L(t)$ exceeds some pre-determined threshold $D > 0$, i.e., we are considering the stopping time $\tau = \inf\{t: L(t) \geq D\}$ indicating a loss event. We define $\{N(t)\} _{t \geq 0}$ by $N(t):= I(L(t) \geq D) = I(\tau \leq t)$, where $I$ denotes the indicator function $I(\tau \leq t) := 1$ if $\tau \leq t$, and $0$ otherwise, and we observe that $N$ is itself a doubly stochastic Poisson process, see also \cite{baryshnikov1998CAT}.

\begin{lem}\label{lem:double_stochastic_poisson}
The process $\{N(t)\} _{t \geq 0} = \{I(L(t) \geq D)\} _{t \geq 0}$ is a doubly stochastic Poisson process with intensity $\Bigg\{\lambda(t)\Bigl(1-F_X\bigr(D - L(t)\bigr)\Bigr)I\bigl(L(t) < D\bigr) \Biggr\}_{t \geq 0}$.
\end{lem}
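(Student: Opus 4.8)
The plan is to identify the (stochastic) intensity of the single-jump counting process $N$ by a thinning argument on the underlying marked point process, and then to read off the doubly stochastic structure. First I would fix the full-information filtration $\mathbb{H} = \{\mathcal{H}(t)\}_{t\ge 0}$ generated by the background filtration $\mathbb{F}$, the arrival times of $M$, and the marks $X_i$ revealed at those arrivals, so that both $L$ and $N$ are $\mathbb{H}$-adapted. Since the losses $X_i$ are nonnegative, $L$ is non-decreasing; hence the crossing time $\tau = \inf\{t : L(t) \ge D\}$ is the unique time at which $L$ passes the level $D$, and $N(t) = I(\tau \le t)$ is a counting process with a single jump, with $I(L(t) < D) = I(t < \tau)$. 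This monotone-crossing observation is what later makes the self-terminating factor $I(L(t)<D)$ consistent with $N$ having only one jump.

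The core step is to compute the $\mathbb{H}$-compensator of $N$. I would use that, under the $\mathbb{F}$-conditional Poisson assumption on $M$ together with the i.i.d.\,marks independent of $M$, the underlying marked point process $\{(T_n, X_n)\}$ admits the predictable intensity kernel $\lambda(t)\,F_X(\dee x)$: arrivals occur at $\mathbb{F}$-rate $\lambda(t)$ and, given an arrival, the mark is drawn from $F_X$ independently of the past. The event that $N$ jumps at an arrival time $t$ is exactly the event that the arriving mark triggers the first crossing, i.e.\,$L(t^-) < D \le L(t^-) + X$. Thinning the kernel by this predictable acceptance set yields the compensator
\[
\int_0^t \lambda(s)\,\mathbb{Q}\!\left(X \ge D - L(s^-)\right) I\!\left(L(s^-) < D\right)\dee s
= \int_0^t \lambda(s)\bigl(1 - F_X(D - L(s))\bigr)\, I\!\left(L(s) < D\right)\dee s ,
\]
where replacing $L(s^-)$ by $L(s)$ under the integral is justified because $L$ jumps only at countably many times (and, if desired, $F_X$ is continuous). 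Equivalently, the same expression follows from the infinitesimal estimate $\mathbb{Q}(N(t+h)-N(t)=1\mid \mathcal{H}_t) = \lambda(t)\bigl(1-F_X(D-L(t))\bigr)\,h + o(h)$ on $\{L(t)<D\}$, since a crossing in $(t,t+h]$ requires, to first order in $h$, exactly one arrival (probability $\lambda(t)h+o(h)$) whose mark exceeds $D-L(t)$ (probability $1-F_X(D-L(t))$), with two or more arrivals contributing $o(h)$.

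It then follows that $N(t) - \int_0^t \mu(s)\,\dee s$ is an $\mathbb{H}$-martingale with $\mu(t) = \lambda(t)\bigl(1-F_X(D-L(t))\bigr)I(L(t)<D)$, i.e.\,$\mu$ is the $\mathbb{H}$-intensity of $N$, which is the asserted process. To phrase this as the doubly stochastic (Cox) statement I would condition on the directing randomness — the path of $\lambda$ together with the accumulated-loss information — and argue that, conditionally, the first crossing behaves as the first point of an inhomogeneous Poisson process with intensity $\mu$, the factor $I(L(t)<D)$ forcing $\mu$ to vanish after $\tau$ in agreement with the single jump. I expect the main obstacle to be precisely this last upgrade: because $\mu$ depends on $L$, whose own crossing \emph{defines} $\tau$, one must take the predictable (left-limit) version of the intensity and choose the directing field carefully so that the conditional-Poisson property is not destroyed by the feedback of $L$ into its own intensity. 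The cleanest way around this is to run the entire argument through the marked-point-process compensator displayed above, where predictability is automatic, and only then invoke the standard characterization of a doubly stochastic Poisson process as a point process carrying this stochastic intensity.
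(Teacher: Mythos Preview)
Your proposal is correct. The paper takes precisely the elementary route you sketch in your parenthetical remark: it verifies the two infinitesimal conditions
\[
\mathbb{Q}\bigl(N(t+h)-N(t)\geq 1\mid\mathcal{F}(t)\bigr)=\mu(t)\,h+o(h),\qquad
\mathbb{Q}\bigl(N(t+h)-N(t)\geq 2\mid\mathcal{F}(t)\bigr)=o(h),
\]
directly, by expanding over the number of $M$-arrivals in $(t,t+h]$ and using conditional independence of the marks from the arrival count; the $\geq 2$ case is simply bounded above by $\mathbb{Q}(M(t+h)-M(t)\geq 2\mid\mathcal{F}(t))=o(h)$, and the $\geq 1$ case then reduces to one arrival whose mark exceeds $D-L(t)$. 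Your primary argument---computing the $\mathbb{H}$-compensator via predictable thinning of the marked point process with intensity kernel $\lambda(t)\,F_X(\dee x)$---is a different packaging: it is more structural, builds in the predictable left-limit version $L(s^-)$ automatically, and avoids explicit $o(h)$ bookkeeping, at the cost of invoking marked-point-process machinery the paper does not develop. The paper's version is more self-contained and closer to first principles. Since you already outline the infinitesimal computation as an equivalent alternative, the two proofs agree in substance; the paper simply promotes that alternative to the main line and omits the compensator formulation entirely.
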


This result leads directly to the following pricing formulas.

\begin{cor} \label{cor:cat_bond_pricing}
Consider a CAT-bond at time $t \leq T$  with face value $F$, maturity $T$, and loss threshold $D$.

\begin{enumerate}
    \item[(i)]
    The (risk-neutral) price of a zero-coupon CAT bond that pays $F$ at maturity only if no trigger event occurs by time $T$ is given by
    \begin{align}
    C_{\rm Z}(F, t, T)
    &= \left(1 - \mathbb{E}_{\mathbb{Q}}\left[N(T) \,\big|\, \mathcal{F}(t)\right]\right) \cdot P_{\rm Z}(F, t, T) \nonumber \\
    &= \mathbb{E}_{\mathbb{Q}}\left[1 - \int_{t}^{T} \lambda(s)\left(1 - F_X(D - L(s))\right) I(L(s) < D) \,\dee s \,\Big|\, \mathcal{F}(t) \right] \cdot P_{\rm Z}(F, t, T). \label{zero coupon catastrophe bond price}
    \end{align}
    \item[(ii)]
    The (risk-neutral) price of a CAT bond with face value $F$, maturity $T$, and coupon payments $\{C_i\}_{i=1}^n$ at times $\{t_i\}_{i=1}^n$ where $0 < t_1 < \cdots < t_n \leq T$, is given by
    \begin{equation}
    C_{\rm B}(F, \{C_i\}_{i=1}^n, t, T) = C_{\rm Z}(F, t, T) + \sum_{i=1}^n I(t \leq t_i) \cdot C_{\rm Z}(C_i, t, t_i). \label{coupon catastrophe bond equation}
    \end{equation}

    \item [(iii)]
    If, in the setting of (ii), the CAT bond has a recovery rate $R \in [0,1]$, then its (risk-neutral) price is
    \begin{align}
    C_{\rm RB}(F, \{C_i\}_{i=1}^n, t, T) 
    &= \sum_{i=1}^n I(t \leq t_i) \cdot C_{\rm Z}(C_i, t, t_i) 
    + (1 - \mathbb{E}_{\mathbb{Q}}[R]) \cdot C_{\rm Z}(F, t, T) 
    + \mathbb{E}_{\mathbb{Q}}[R] \cdot P_{\rm Z}(F, t, T). \label{recoverable coupon catastrophe bond equation}
    \end{align}
\end{enumerate}
\end{cor}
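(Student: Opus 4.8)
The plan is to read each of the three formulas as a risk-neutral expectation of a discounted payoff, concentrating all the genuine probabilistic content in part (i); parts (ii) and (iii) then follow from linearity of the conditional expectation together with the additivity of arbitrage-free prices. Throughout I would use the standard CAT-bond modeling hypothesis that the catastrophe mechanism, i.e.\ $M$ and the losses $\{X_i\}$ (and hence $L$ and $N$), is independent of the Brownian motion $W$ driving the short rate $r$. For (i) I start from the defining payoff of the zero-coupon CAT bond: the face value $F$, discounted from $T$ to $t$, paid only on the survival event $\{\tau>T\}$. Since $L$ is nondecreasing, $N$ stops after its single jump at $\tau$, so $N(T)\in\{0,1\}$ and the survival indicator is $1-N(T)$; thus
\begin{equation*}
C_{\rm Z}(F,t,T) = \mathbb{E}_{\mathbb{Q}}\left[F e^{-\int_t^T r_s\,\dee s}\,\bigl(1 - N(T)\bigr) \,\Big|\, \mathcal{F}(t)\right].
\end{equation*}
By the independence hypothesis the discount factor and $1-N(T)$ are conditionally independent given $\mathcal{F}(t)$, so the expectation factorizes as $P_{\rm Z}(F,t,T)\cdot\bigl(1-\mathbb{E}_{\mathbb{Q}}[N(T)\mid\mathcal{F}(t)]\bigr)$, which is the first displayed equality.

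The substantive step is the second equality, which I would obtain from the compensator supplied by Lemma~\ref{lem:double_stochastic_poisson}. Abbreviating the intensity identified there as $\tilde\lambda(s):=\lambda(s)\bigl(1-F_X(D-L(s))\bigr)I(L(s)<D)$, the statement that $N$ is a doubly stochastic Poisson process with intensity $\tilde\lambda$ means precisely that $N(t)-\int_0^t\tilde\lambda(s)\,\dee s$ is a martingale; the martingale property then yields $\mathbb{E}_{\mathbb{Q}}[N(T)-N(t)\mid\mathcal{F}(t)]=\mathbb{E}_{\mathbb{Q}}[\int_t^T\tilde\lambda(s)\,\dee s\mid\mathcal{F}(t)]$. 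Because the factor $I(L(s)<D)$ switches the intensity off after $\tau$, the process has at most one jump, so on $\{N(t)=0\}$ the left-hand side equals $\mathbb{Q}(\tau\le T\mid\mathcal{F}(t))=\mathbb{E}_{\mathbb{Q}}[N(T)\mid\mathcal{F}(t)]$. This gives $\mathbb{E}_{\mathbb{Q}}[N(T)\mid\mathcal{F}(t)]=\mathbb{E}_{\mathbb{Q}}[\int_t^T\tilde\lambda(s)\,\dee s\mid\mathcal{F}(t)]$ and hence \eqref{zero coupon catastrophe bond price}.

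For (ii), I would note that the cash flows of the coupon bond are $F$ at $T$ and each $C_i$ at $t_i$, every one contingent on survival up to its own payment date; each is therefore an instance of the zero-coupon CAT bond of (i) with the appropriate face value and maturity, and summing their prices (with $I(t\le t_i)$ discarding coupons already paid) gives \eqref{coupon catastrophe bond equation}. For (iii) only the principal leg changes: at maturity the holder receives $F$ on survival and the recovery $RF$ on a trigger, i.e.\ $F(1-N(T))+RF\,N(T)=F-(1-R)F\,N(T)$, while the coupon leg is unchanged. Assuming $R$ independent of $(r,N)$ and pricing this leg, the identity $P_{\rm Z}(F,t,T)\,\mathbb{E}_{\mathbb{Q}}[N(T)\mid\mathcal{F}(t)]=P_{\rm Z}(F,t,T)-C_{\rm Z}(F,t,T)$ from (i) converts the result into $(1-\mathbb{E}_{\mathbb{Q}}[R])\,C_{\rm Z}(F,t,T)+\mathbb{E}_{\mathbb{Q}}[R]\,P_{\rm Z}(F,t,T)$; adding the unchanged coupon leg yields \eqref{recoverable coupon catastrophe bond equation}.

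The main obstacle is the rigor of the second equality in (i): the passage from the survival probability $\mathbb{E}_{\mathbb{Q}}[N(T)\mid\mathcal{F}(t)]$ to the expected integrated intensity. This rests entirely on the compensator characterization from Lemma~\ref{lem:double_stochastic_poisson} and on the single-jump feature enforced by the shut-off factor $I(L(s)<D)$, which is exactly what makes the expected number of jumps coincide with the probability of a jump; some care with the filtration (whether one conditions on $\mathcal{F}(t)$ alone or on a filtration carrying $N$) is needed to apply the martingale property cleanly. The independence of the catastrophe and interest-rate drivers, used in every factorization, is a modeling assumption I would state explicitly at the outset.
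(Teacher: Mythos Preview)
Your proposal is correct and follows essentially the same route as the paper: factor the zero-coupon CAT bond into $P_{\rm Z}(F,t,T)$ times a survival term, invoke Lemma~\ref{lem:double_stochastic_poisson} to rewrite $\mathbb{E}_{\mathbb{Q}}[N(T)\mid\mathcal{F}(t)]$ as the expected integrated intensity, and then obtain (ii) and (iii) by linearity and by splitting the principal into a defaultable and a riskless piece. If anything, you are more explicit than the paper about the independence hypothesis and the compensator/martingale step behind the second equality in (i), whereas the paper leans on the $\mathcal{F}(t)$-measurability of $P_{\rm Z}(F,t,T)$ and simply cites ``properties of doubly stochastic Poisson processes'' at that point.
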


\section{Monte Carlo and importance sampling statements}\label{sec:MC}
\subsection{Monte Carlo integration}

According to \eqref{zero coupon catastrophe bond price}, prices of CAT bonds can be understood as a combination of zero coupon bond price $P_{\rm Z} (F, t, T)$ and the expectation $\mathbb{E}_{\mathbb{Q}}\left[N(T) | \mathcal{F}(t)\right]$ introducing path dependency, and thus imposing an additional numerical challenge. Due to Lemma~\ref{ATSM_continuous}, an explicit solution for the zero coupon bond price according to Eq.~\eqref{zero coupon bond equation} is readily available, however, in general there is no closed-form analytical solution for $\mathbb{E}_{\mathbb{Q}}\left[N(T) \,\big|\, \mathcal{F}(t)\right]$, so thus $C_{\rm Z} (F, t, T)$. Instead, a numerical approximation, e.g., via Monte Carlo integration is required to evaluate this expectation.
Alternatively, leveraging the Feynman--Kac theorem (see, e.g., \cite{cont2003financial}), integro-PDE methods could be applied (see, e.g., \cite{baryshnikov1998CAT}), which are used as a benchmark in the main manuscript.

To determine CAT bond prices at time $0$, due to \eqref{zero coupon catastrophe bond price}, we need to compute the risk-neutral probability of a trigger event that corresponds to $\theta := \mathbb{E}_{\mathbb{Q}} \left[N(t)\right] = \mathbb{E}_{\mathbb{Q}}\left[I (L(t) \geq D )\right] = \mathbb{Q}(L(t) > D)$. 

We approximate this probability using Monte Carlo simulation. For a fixed number of simulations $n \in \mathbb{N}$, we generate independent loss realizations $L_1(t),\dots,L_n(t)$ and compute  the estimator
$$
\hat{\theta}_n^{\rm{MC}} = \frac{1}{n} \sum_{i=1}^n I (L_i(t) \geq D ), 
$$
for $L_i(t)$ denoting the loss in the $i$th simulation. 
The variance of the estimated trigger probability is given as
\begin{align*}
\rm{Var}(\hat{\theta}_n^{\rm{MC}}) &= \frac{1}{n} \Big( \mathbb{E}_{\mathbb{Q}} \big[ I(L(t) \geq D)^2 \big] - \mathbb{E}_{\mathbb{Q}} \big[ I(L(t) \geq D) \big]^2  \Big) 
= \frac{1}{n} ( \theta - \theta^2 ).    
\end{align*}
Algorithm \ref{alg:MC_trigger} in the appendix outlines the procedure to compute this estimate.

\subsection{Importance sampling}\label{sec:IS}
By the law of large numbers, $\hat{\theta}^{\rm{MC}}$ converges to $\theta$ as the sample size increases. Although the MC estimator converges to $\theta$, plain Monte-Carlo estimation is characterized by slow convergence and a large variance, since for large thresholds $D$, very few samples contribute to the estimate, see also, e.g., \cite{glasserman2003MC}.
This motivates to apply \emph{Importance Sampling} (IS), a variance reduction technique  allowing to sample under a different probability measure making exceedance of the threshold $D$ a more frequently occurring event. To this end, we consider a second probability density $g$, whose support is assumed to contain that of original density $f$, and define the \emph{likelihood ratio} $R(l)$ by $R(l):=f(l)/g(l)$ whenever $g(l)>0$, and $R(l)=0$ otherwise. Hence, we get $\int I(\ell \geq D) f(\ell) \D \ell = \int I(\ell \geq D) g(\ell) R(\ell) \D \ell$.

The idea of using IS is to choose the density $g$ such that Monte-Carlo simulation under $g$ leads to an IS estimator $\hat{\theta}^{\rm{IS}}$ with variance being smaller than for the standard Monte Carlo integration. 
Then, we estimate $\hat{\theta}_n^{\rm{IS}} := \frac{1}{n} \sum_{i=1}^n I (L_i(t) \geq D) R \left(L_i(t) \right)$, where $L_i(t)$ is simulated according to density $g$, $I (L_i(t) \geq D)$ is the indicator function associated to the trigger event and $R \left(L_i(t) \right)$ denotes the likelihood ratio used for the change of measure in the $i$th simulation.
We compute that the variance for the estimated trigger probability equals
\begin{align*}
\rm{Var}(\hat{\theta}^{\rm{IS}}) &= \frac{1}{n} \Big( \mathbb{E}_g \big[ I(L(t) \geq D)^2 R \left(L(t) \right)^2 \big] - \mathbb{E}_g \big[ I(L(t) \geq D) R \left(L(t) \right) \big]^2  \Big) \\
&=\frac{1}{n} \Big( \mathbb{E}_g \big[ I(L(t) \geq D)^2 R \left(L(t) \right)^2 \big] - \theta^2  \Big)=\frac{1}{n} \Big( \mathbb{E}_\mathbb{Q} \big[ R \left(L(t) \right) I(L(t) \geq D) \big] - \theta^2  \Big).
\end{align*}
To determine a candidate change of measure for IS in the compound Poisson setting, we follow the procedure outlined in \cite[Chapter 7]{glasserman2003MC}, \cite[Chapter 11.4 ]{embrechts2011quantitative}, \cite{blanchet2007importance}, or \cite[Section 3]{LUTKEBOHMERT2025104598}, by applying \emph{exponential tilting}.
For the aggregated loss $L(t)$, two types of random variables need to be simulated - the number of losses and the size of losses. Firstly, consider the number of losses $M(t)$ till time $t$, modelled by a Poisson process with intensity $\{\lambda(t)\}_{t \geq 0}$.
The moment generating function (MGF)  of $M(t)$ is calculated to be 
\begin{equation*}
M_M(a) = \E_{\Q}\left[e^{aM(t)}\right]=e^{(\Lambda (t)) (e^{a}-1)},    \quad a \in \R,
\end{equation*}
where $\Lambda(t)$ is defined in Section \ref{subsec:pricing_cat}.
Applying exponential tilting that changes the probability measure from $\mathbb{Q}$ to $\mathbb{Q}_a$ for $a \in \mathbb{R}$, we have
\begin{align*}
\mathbb{Q}_a(M(t)=n | \mathcal{F}_t) &:= \mathbb{E}_{\mathbb{Q}} \Big[e^{a M(t)}/M_M(a) I({M(t)=n)} | \mathcal{F}_t \Big] \\
&= \frac{e^{an}}{e^{\Lambda (t) (e^{a}-1)}} \cdot \frac{(\Lambda (t))^n}{n!} e^{-\Lambda (t)} = \frac{(\Lambda(t) e^a)^n}{n!} e^{-\Lambda(t) e^a}.
\end{align*}

The number of losses $M(t)$ under probability measure $\mathbb{Q}_a$ is thus modelled by a Poisson process with new intensity $\{\lambda(t) e^a \}_{t \geq 0}$ after having applied exponential tilting. 

Next, we address the jump size component. Catastrophic losses typically exhibit heavy-tailed behavior, necessitating appropriate distributional choices. The actuarial literature has identified several continuous distributions suitable for modeling such extreme losses. Among these, the Gamma and Lognormal distributions have emerged as particularly popular choices, though other options include the Pareto and Burr distributions, compare, also \cite{Burnecki2003PricingOZ}. In the following subsections, we develop specific IS techniques for both the Gamma and Lognormal distributions.

\subsubsection{Importance sampling for Gamma distributed losses} \label{sec_IS_gamma}

We first consider Gamma distributed losses. Assume the size of losses $\{X_i\}_i$ are i.i.d.\, according to a Gamma distribution with shape $k$ and scale $\beta$, i.e., its probability density function is given as
\begin{equation*}
f_X(x) = \frac{1}{\Gamma(k) \beta^k} x^{k-1} e^{-x/\beta}, \quad x>0, k>0, \beta>0, \qquad \text{for }\Gamma(k):= \int_0^\infty t^{k-1} e^{-t} \D t.  
\end{equation*}
Applying exponential tilting, we have that the new density function for the size of losses under the new probability measure $\mathbb{Q}_{a,b}$ is given by
\begin{align*}
f_{a,b}(x) &:= \frac{\exp(bx)f_X(x)}{M_X(b)}
= \frac{1}{\Gamma(k)} (\frac{1}{\beta}-b)^k x^{k-1} e^{-x(\frac{1}{\beta}-b)},
\end{align*}
for $b < \frac{1}{\beta}$, see, e.g., \cite{Asmussen2007StochasticSA}. This means the jump size satisfies a Gamma distribution with shape $k$ and new scale $\frac{\beta}{1-\beta b}$ under the probability measure $\mathbb{Q}_{a,b}$. 

Recall, that the initial mean value of the losses is usually much smaller than the predetermined threshold $D>0$, i.e., we have $\mathbb{E}_{\mathbb{Q}}[L] \ll D$, leading to a high variance of standard Monte Carlo estimates. For IS, we select parameters $a$ and $b$ such that the expected value of the loss under the modified measure $\mathbb{Q}_{a,b}$ approximates the threshold $D$, i.e., $\mathbb{E}_{\mathbb{Q}_{a,b}}[L] \approx D$. By this choice the event $\{L(t) > D\}$ will occur much more frequently which in turn helps to reduce the variance of the estimators. To this end, note that the expected loss under $\mathbb{Q}_{a,b}$ computes as
\begin{equation}
\mathbb{E}_{\mathbb{Q}_{a,b}}[L(t)] = \mathbb{E}_{\mathbb{Q}_{a,b}}[M(t)] \mathbb{E}_{\mathbb{Q}_{a,b}}[X_1] = (\lambda t e^a) \cdot k \frac{\beta}{1-\beta b}, \label{IS_gamma_loss} 
\end{equation}
where the first equality follows from Wald's identity.

From Eq.~\eqref{IS_gamma_loss}, it is evident that the choice of the importance sampling parameters $(a,b)$ plays a critical role in reducing the variance of the estimated trigger probability, which is the primary objective of importance sampling. While one may rely on numerical experimentation to evaluate various combinations of $(a,b)$ and select the most effective pair for a given loss distribution, this approach lacks generality and may not guarantee robustness across different scenarios. Therefore, it is important to investigate the regularity and consistency of this parameter selection procedure. Below, we present a result that guides the appropriate selection of $(a,b)$ across a broad class of loss distributions allowing for a more systematic and reliable guidance than case-by-case numerical tuning.

\begin{prop} \label{prop_variance_reduction}
Let an aggregate indemnity loss $\{L(t)\}_{t \geq 0}$ be defined by $L(t):= \sum_{i=1}^{M(t)} X_i$, where $M(t) \sim \rm{Poisson}(\lambda t)$ and $\{X_i\}_i$ are i.i.d.\ with  $X_1 {\sim} \rm{Gamma}(1,\beta)$ for $\beta \geq 0$. Further, let
\begin{equation*}
a= \frac{1}{2}\ln \left(\frac{D}{\lambda t \beta} \right),\qquad b = \frac{1}{\beta}-\frac{\lambda t e^a }{D}.
\end{equation*}
Then, the importance sampling estimator
$\hat{\theta}_n^{\rm{IS}} := \frac{1}{n} \sum_{i=1}^n I (L_i(t) \geq D) R \left(L_i(t) \right)$
with
$$
R(L(t))= \exp\left(\lambda t (e^a - 1) - a M(t)\right) (1-\beta b)^{- M(t)} \exp{ \big( -b L(t) \big)}
$$
satisfies
\begin{equation*}
\rm{Var}(\hat{\theta}^{\rm{IS}}) \leq \rm{Var}(\hat{\theta}^{\rm{MC}}), \qquad \E[\hat{\theta}^{\rm{IS}}] = \E[\hat{\theta}^{\rm{MC}}].
\end{equation*}
\end{prop}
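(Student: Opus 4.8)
The plan is to treat the two assertions separately: the equality of expectations is the standard unbiasedness identity for importance sampling, while the variance inequality will be reduced to a single pointwise estimate on the exceedance event. For unbiasedness, I would first confirm that the stated $R$ is exactly the Radon--Nikodym derivative $\dee\Q/\dee\Q_{a,b}$ of the original compound-Poisson law against the exponentially tilted law $\Q_{a,b}$ under which the $L_i(t)$ are generated. Concretely, $R$ factorizes into the Poisson likelihood ratio $\exp(\lambda t(e^a-1)-aM(t))$ and a product of per-jump ratios, which for $k=1$ (so that $M_X(b)=(1-\beta b)^{-1}$) collapses to $(1-\beta b)^{-M(t)}\exp(-bL(t))$, matching the $R$ in the statement. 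The change-of-measure identity then yields
\[
\E_{\Q_{a,b}}\big[I(L(t)\geq D)\,R(L(t))\big]=\E_{\Q}\big[I(L(t)\geq D)\big]=\theta,
\]
so that each summand of $\hat\theta^{\rm{IS}}_n$ has mean $\theta$, whence $\E[\hat\theta^{\rm{IS}}]=\theta=\E[\hat\theta^{\rm{MC}}]$.

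For the variance, I would invoke the two formulas already derived in this section, namely $\Var(\hat\theta^{\rm{MC}})=\tfrac1n(\theta-\theta^2)$ and $\Var(\hat\theta^{\rm{IS}})=\tfrac1n\big(\E_{\Q}[R(L(t))\,I(L(t)\geq D)]-\theta^2\big)$. Since the $-\theta^2/n$ terms coincide, the claim $\Var(\hat\theta^{\rm{IS}})\leq\Var(\hat\theta^{\rm{MC}})$ is equivalent to $\E_{\Q}[(R(L(t))-1)\,I(L(t)\geq D)]\leq 0$, so it suffices to prove the pointwise bound $R(L(t))\leq 1$ on the event $\{L(t)\geq D\}$.

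The decisive step is an algebraic simplification of $R$ under the prescribed $(a,b)$. Taking logarithms,
\[
\ln R=\lambda t(e^a-1)-M(t)\big(a+\ln(1-\beta b)\big)-bL(t),
\]
and the specific choice forces $1-\beta b=e^{-a}$: indeed $\beta b=1-\sqrt{\lambda t\beta/D}$ while $e^{-a}=\sqrt{\lambda t\beta/D}$. Hence the coefficient of $M(t)$ vanishes, $\ln R=\lambda t(e^a-1)-bL(t)$ no longer depends on the random count $M(t)$, and I would check that $b\in[0,1/\beta)$ under the governing regime $D\geq\lambda t\beta=\E_\Q[L(t)]$. With $b\geq 0$ the map $L\mapsto\ln R$ is nonincreasing, so on $\{L(t)\geq D\}$ the ratio is maximized at $L(t)=D$, reducing the claim to $\lambda t(e^a-1)\leq bD$. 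Substituting $\lambda t\,e^a=\sqrt{\lambda t D/\beta}$ and $bD=D/\beta-\lambda t\,e^a$, this rearranges to $2\sqrt{(\lambda t)(D/\beta)}\leq \lambda t+D/\beta$, which is precisely the AM--GM inequality and therefore always holds.

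The main obstacle is spotting the cancellation $1-\beta b=e^{-a}$, which is what removes the random variable $M(t)$ from $R$; without it the log-likelihood ratio depends jointly on $M(t)$ and $L(t)$ and admits no clean pointwise bound. Once that identity is exploited, everything else is routine, the only remaining care being the boundary case $D=\lambda t\beta$, where $a=b=0$, $R\equiv 1$, and the two variances coincide.
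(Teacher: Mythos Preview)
Your proposal is correct and follows essentially the same route as the paper: both reduce the variance comparison to $\mathbb{E}_{\Q}[R(L(t))I(L(t)\geq D)]\leq\theta$, exploit the identity $1-\beta b=e^{-a}$ to eliminate the $M(t)$-dependence of the likelihood ratio, and then close with the AM--GM inequality $2\sqrt{\lambda t\cdot D/\beta}\leq \lambda t+D/\beta$. Your explicit pointwise bound $R\leq 1$ on $\{L(t)\geq D\}$ is a slightly more direct packaging of what the paper does via a conditional-expectation factorization, and your remark that $b\geq 0$ requires the importance-sampling regime $D\geq\lambda t\beta$ makes explicit an assumption the paper uses tacitly.
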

IS is most beneficial when estimating probabilities of rare events, particularly when the expected loss under the original measure is significantly smaller than the threshold, i.e., when  $\mathbb{E}_{\mathbb{Q}}[L(t)] \ll D$. When the maturity $T$ is large, the aggregate indemnity loss $L(T)$ increases, leading to a higher trigger probability, making it not necessarily a rare event anymore. In such scenarios, one can use standard Monte Carlo simulation with a reasonable small variance instead.

In the appendix, we provide Algorithm~\ref{alg:IS_trigger_gamma} to estimate the trigger probability with Gamma distributed losses by using Monte Carlo simulation with IS.

\subsubsection{Importance sampling for Lognormally distributed losses} \label{sec_IS_lognormal}

Now we consider another frequently used type of loss distribution - the Lognormal distribution. Assume the size of losses $\{X_i\}_i$ are i.i.d. and satisfy a Lognormal distribution with parameters $\mu$ (mean of $\ln (X_i)$) and  $\sigma$ (standard deviation of $\ln(X_i)$). The probability density function of the random variable $X_i$ is
\begin{equation*}
f_X(x) = \frac{1}{x \sigma \sqrt{2\pi}} \exp\Big(- \frac{(\ln(x)-\mu)^2} {2 \sigma^2}\Big) \quad x>0, \sigma>0, \mu \in \mathbb{R}.    
\end{equation*}
The moment generating function of a Lognormal distribution is infinite for positive arguments. Thus, we cannot apply exponential tilting for Lognormal distributions directly. Instead, we follow the idea from \cite{Bee2009IS}, where the importance sampling density is built as another Lognormal distribution with a larger expected value, denote as $X \sim \text{Logn}(\mu+b, \sigma^2)$ under the probability measure $\mathbb{Q}_{a,b}$, with $ b \in \mathbb{R}^+$. 
Analogue to Eq.~\eqref{IS_gamma_loss}, we compute the expected loss under $\mathbb{Q}_{a,b}$, and obtain 
\begin{equation}
\mathbb{E}_{\mathbb{Q}_{a,b}}[L(t)] = \mathbb{E}_{\mathbb{Q}_{a,b}}[M(t)] \mathbb{E}_{\mathbb{Q}_{a,b}}[X_1] = (\lambda t e^a) \cdot \exp\big(\mu+b+\frac{\sigma^2}{2}\big). \label{IS_lognormal_loss}      
\end{equation}

Due to the complexity and heavy-tailed nature of the Lognormal distribution, it is not feasible to directly propose suitable IS parameters for Lognormal losses. While in the Gamma case explicit IS parameters can be derived through Proposition \ref{prop_variance_reduction}, the Lognormal setting requires a different strategy. Therefore, we adopt a variance minimization approach to determine the optimal IS parameters for Lognormal distributed losses. Further, recall that the variance of IS estimator is given as
\begin{align*}
\rm{Var}(\hat{\theta}^{\rm{IS}}) =\frac{1}{n} \Big( \mathbb{E}_\mathbb{Q} \big[ R(L(t)) I(L(t) \geq D) \big] - \theta^2  \Big),
\end{align*}
and hence our goal is minimize $\mathbb{E}_\mathbb{Q} \big[ R(L(t)) I(L(t) \geq D) \big]$. Under $\mathbb{Q}_{a,b}$ we have $M(t) \sim \rm{Poisson}(\lambda te^a)$ and $X_i \overset{\text{i.i.d.}}{\sim} {\rm Logn}(\mu+b,\sigma^2)$. Thus, we can compute the likelihood ratio under the measure $\mathbb{Q}_{a,b}$ after IS as follows:
\begin{equation}\label{eq:R_decom}
R(L(t)) = \exp \Big(\lambda t (e^a - 1) - a M(t) \Big)\exp \Big(\frac{b^2} {2\sigma^2} M(t)\Big) \exp \Big( - \frac{b} {\sigma^2} \sum_{i=1}^{M(t)} \big(\ln (X_i) -\mu \big)\Big).    
\end{equation}
By minimizing the expectation $\mathbb{E}_\mathbb{Q} \big[ R(L(t)) I(L(t) \geq D) \big]$, we get the following result

\begin{prop} \label{prop_variance_reduction_logn}
Let an aggregate indemnity loss $\{L(t)\}_{t \geq 0}$ be defined by $L(t):= \sum_{i=1}^{M(t)} X_i$, where $M(t) \sim \rm{Poisson}(\lambda t)$ and $X_i \overset{\text{i.i.d.}}{\sim} \rm{Logn}(\mu, \sigma^2)$ with $\mu \in \R, \sigma \geq 0$. Set $a, b \geq 0$ such that
\begin{equation*}
\frac{2 \lambda t \, b}{\sigma^2} \, e^{\frac{b^2}{2\sigma^2}} - \frac{1}{\sigma} \, \frac{\phi\Big(\frac{\ln D - \mu + b}{\sigma}\Big)}{1 - \Phi\Big(\frac{\ln D - \mu + b}{\sigma}\Big)} = 0 \quad \text{and} \quad a= \frac{b^2}{2\sigma^2}, 
\end{equation*}
where $\phi(\cdot)$ and $\Phi(\cdot)$ are the pdf and cdf of the standard normal distribution, respectively.
Then for $D$ large enough, the variance of the importance sampling estimator 
$\hat{\theta}_n^{\rm{IS}} := \frac{1}{n} \sum_{i=1}^n I (L_i(t) \geq D) R \left(L_i(t) \right)$
with $R(L(t))$ as in \eqref{eq:R_decom} satisfies
\begin{equation*}
\rm{Var}(\hat{\theta}^{\rm{IS}}) <\rm{Var}(\hat{\theta}^{\rm{MC}}), \qquad \E[\hat{\theta}^{\rm{IS}}] = \E[\hat{\theta}^{\rm{MC}}].
\end{equation*}
\end{prop}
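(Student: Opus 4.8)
The plan is to separate the two assertions. The unbiasedness $\E[\hat\theta^{\mathrm{IS}}]=\E[\hat\theta^{\mathrm{MC}}]=\theta$ is immediate from the defining property of the likelihood ratio $R=f/g$: one has $\E_{g}[I(L(t)\ge D)R(L(t))]=\E_{\mathbb{Q}}[I(L(t)\ge D)]=\theta$, so both estimators are unbiased for \emph{every} admissible $(a,b)$, with no optimality required. For the variance comparison I would use the formulas recorded just before the statement, namely $\Var(\hat\theta^{\mathrm{MC}})=\tfrac1n(\theta-\theta^2)$ and $\Var(\hat\theta^{\mathrm{IS}})=\tfrac1n\big(\Psi(a,b)-\theta^2\big)$, where I write $\Psi(a,b):=\E_{\mathbb{Q}}[R(L(t))\,I(L(t)\ge D)]$ for the second moment. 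Thus $\Var(\hat\theta^{\mathrm{IS}})<\Var(\hat\theta^{\mathrm{MC}})$ reduces to $\Psi(a,b)<\theta$. I would emphasise from the start that plain Monte Carlo corresponds to no tilting, $R\equiv1$, i.e.\ $(a,b)=(0,0)$, so that $\Psi(0,0)=\theta$; the whole task is to show the prescribed $(a,b)$ pushes $\Psi$ strictly below its value at the origin.

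Next I would evaluate $\Psi$. The first observation is that the choice $a=b^2/(2\sigma^2)$ is exactly what makes the Poisson likelihood factor $e^{\lambda t(e^a-1)-aM}$ cancel the tilt normalisation $e^{(b^2/2\sigma^2)M}$ in \eqref{eq:R_decom}, collapsing $R$ to $e^{\lambda t(e^a-1)}\exp\!\big(-\tfrac{b}{\sigma^2}\sum_{i=1}^{M}(\ln X_i-\mu)\big)$, whose dependence on $M$ now enters only through the jumps. I would then compute $\Psi$ by conditioning on $M=m$ and invoking the subexponentiality of the lognormal law (the single-big-jump principle): for large $D$ the event $\{L(t)\ge D\}$ is realised, up to lower order, by exactly one jump exceeding $D$ while the other $m-1$ jumps stay typical. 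Summing the $m$ equally likely choices against the Poisson weights, with each typical jump contributing $\E_{\mathbb{Q}}[e^{-(b/\sigma^2)(\ln X-\mu)}]=e^{b^2/(2\sigma^2)}$ and the big jump contributing $\E_{\mathbb{Q}}[e^{-(b/\sigma^2)(\ln X-\mu)}I(X\ge D)]=e^{b^2/(2\sigma^2)}\big(1-\Phi(\tfrac{\ln D-\mu+b}{\sigma})\big)$ after completing the square, yields the closed-form asymptotics $\Psi(a,b)\sim \lambda t\,e^{2\lambda t(e^a-1)}e^{a}\big(1-\Phi(\tfrac{\ln D-\mu+b}{\sigma})\big)$.

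I would then minimise this expression by setting $\partial_a\log\Psi=\partial_b\log\Psi=0$. Stationarity in $a$ gives $\lambda t(e^a-\rho)=1$ with $\rho=e^{-a+b^2/\sigma^2}$; since the optimal $b$ grows with $D$, the correction of order $1/(\lambda t\,e^{a})$ is negligible and the condition reduces to $a=b^2/(2\sigma^2)$, consistent with the simplification above. Stationarity in $b$ gives $\tfrac{2b}{\sigma^2}+\tfrac{2\lambda t b}{\sigma^2}e^{b^2/(2\sigma^2)}-\tfrac1\sigma\tfrac{\phi((\ln D-\mu+b)/\sigma)}{1-\Phi((\ln D-\mu+b)/\sigma)}=0$; because $\tfrac{2\lambda tb}{\sigma^2}e^{b^2/(2\sigma^2)}$ dominates $\tfrac{2b}{\sigma^2}$ as $D\to\infty$, dropping the subordinate term recovers precisely the transcendental equation in the statement. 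This identifies the prescribed $(a,b)$ as the large-$D$ minimiser of $\Psi$, and then the strict inequality follows structurally: since $(a,b)$ (approximately) minimises $\Psi$ whereas $(0,0)$ fails the stationarity equations once $D$ is large, one gets $\Psi(a,b)<\Psi(0,0)=\theta$, i.e.\ $\Var(\hat\theta^{\mathrm{IS}})<\Var(\hat\theta^{\mathrm{MC}})$.

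The main obstacle is the analytic rigour of the middle step. The single-big-jump expansion must be quantified with a controlled remainder, uniformly over the relevant range of $(a,b)$, so that stationary points of the asymptotic proxy genuinely correspond to near-minimisers of the true second moment and so that the strict inequality $\Psi<\theta$ survives the error terms. This is precisely where the hypothesis ``for $D$ large enough'' is consumed, and where subexponential tail estimates in the spirit of \cite{Bee2009IS} and the heavy-tailed importance-sampling literature cited in Section~\ref{sec:IS} do the essential work; by contrast the unbiasedness and the formal first-order computation are routine.
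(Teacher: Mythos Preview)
Your overall strategy—reducing to $\Psi(a,b)<\Psi(0,0)=\theta$, invoking subexponentiality, and identifying the prescribed $(a,b)$ as a (near-)minimiser—matches the paper's, but the execution differs in a way that matters.

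The paper does not expand $\Psi(a,b)$ directly via the single-big-jump principle. Instead it introduces a \emph{second} change of measure: under $\widetilde{\mathbb Q}$ one sets $X_i\sim\mathrm{Logn}(\mu-b,\sigma^2)$, and a short computation with the corresponding Radon--Nikodym density $R_2$ against \eqref{eq:R_decom} gives the \emph{exact} factorisation
\[
\Psi(a,b)=\exp\!\Big(\lambda t\big(e^{a}+e^{\,b^2/\sigma^2-a}-2\big)\Big)\cdot\widetilde{\mathbb Q}\bigl(L(t)\ge D\bigr).
\]
Subexponentiality is then used only to bound the ratio $\widetilde{\mathbb Q}(L(t)\ge D)\big/\mathbb Q(L(t)\ge D)$ by $\dfrac{1-\Phi((\ln D-\mu+b)/\sigma)}{1-\Phi((\ln D-\mu)/\sigma)}+O(\varepsilon)$, yielding an explicit upper bound $f(a,b)$ for $\Psi(a,b)/\theta$. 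Crucially, the equations in the proposition are the \emph{exact} first-order conditions for $\log f$: minimising over $a$ for fixed $b$ gives $a^\ast=b^2/(2\sigma^2)$ on the nose (no $1/(\lambda t e^{a})$ correction to discard), and minimising the resulting one-variable function $h(b)$ gives the transcendental equation for $b$ exactly (no $2b/\sigma^2$ term to drop), with strict convexity verified via the Mills-ratio inequality $R_M(z)^2-zR_M(z)<1$. The conclusion is simply $f(a^\ast,b^\ast)<f(0,0)=1$, and the $\varepsilon$-slack from the tail-ratio bound is absorbed at the end.

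Your route recovers the same equations but only asymptotically: you must discard a correction in the $a$-stationarity and a subordinate term in the $b$-stationarity, so the prescribed $(a,b)$ is not the minimiser of your proxy, only close to it. You therefore owe an additional stability argument (the proxy remains strictly below its value at the origin at this nearby point), layered on top of the uniform remainder control you already flag. None of this is fatal, but the paper's change-of-measure trick eliminates both approximation layers and makes the optimisation clean; that is the main idea worth importing into your argument.
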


The optimal $b$ pushes the log-severity mean close to $\ln D$, $a$ then follows from 
$b$ to keep the Poisson weight balanced.
Algorithm~\ref{alg:IS_trigger_lognormal} for estimating the trigger probability with Lognormally distributed losses by using Monte Carlo simulation with IS is provided in the appendix.

Having established the Monte Carlo simulation algorithms, we can utilize the estimated trigger probability $\mathbb{E}_{\mathbb{Q}}\left[N(T)\right]$ in conjunction with the zero-coupon bond price from Eq.~\eqref{zero coupon bond equation} under an affine term structure interest rate model. By incorporating these components into the CAT bond pricing formula in Eq.~\eqref{coupon catastrophe bond equation}, we can determine the CAT bond price under various parameter settings. The complete calculation procedure is outlined in Algorithm \ref{alg:CAT_price}.

\section{Pricing CAT bonds with neural networks}\label{sec:nn}

\subsection{Feedforward neural networks}
The following exposition on neural networks~(NNs) is borrowed from \cite{neufeld2023neural} but can be found in a similar form in any standard textbook on the topic, e.g., \cite{goodfellow2016deep} or \cite{nielsen2015neural}.
By fully-connected feedforward neural networks, or simply \textit{neural networks} with input dimension $d_{\operatorname{in}} \in \N$, output dimension $d_{\operatorname{out}} \in \N$, and number of layers $l \in \N$ we refer to functions of the form
\begin{equation}\label{eq_nn_function}
	\begin{aligned}
		\R^{d_{\operatorname{in}}} &\rightarrow \R^{d_{\operatorname{out}}}\\
		{x} &\mapsto {A_l} \circ {\varphi}_l \circ {A_{l-1}} \circ \cdots \circ {\varphi}_1 \circ {A_0}({x}),
	\end{aligned}
\end{equation}
where $({A_i})_{i=0,\dots,l}$ are affine\footnote{This means for all $i=0,\dots,l$, the function ${A_i}$ is assumed to have an affine structure of the form
	$
	{A_i}({x})={M_i} {x} + {b_i}
	$
	for some matrix ${M_i} \in \R^{ h_{i+1} \times h_{i}}$ and some vector ${b_i}\in \R^{h_{i+1}}$, where $h_0:=d_{\operatorname{in}}$ and $h_{l+1}:=d_{\operatorname{out}}$. } functions of the form 
\begin{equation}\label{eq_A_i_def}
	{A_0}: \R^{d_{{\operatorname{in}}}} \rightarrow \R^{h_1},\qquad {A_i}:\R^{h_i}\rightarrow \R^{h_{i+1}}\text{ for } i =1,\dots,l-1, \text{(if } l>1), \text{ and}\qquad {A_l} : \R^{h_l} \rightarrow \R^{d_{\operatorname{out}}},
\end{equation}
and where the function $\varphi_i$ is applied componentwise, i.e., for  $i=1,\dots,l$ we have ${\varphi}_i(x_1,\dots,x_{h_i})=\left(\varphi(x_1),\dots,\varphi(x_{h_i})\right)$.  The function $\varphi:\R \rightarrow \R$  is called \emph{activation function} and assumed to be continuous and non-polynomial.
Here ${h}=(h_1,\dots,h_{l}) \in \N^{l}$ denotes the dimensions (the number of neurons) of the hidden layers, also called \emph{hidden dimension}.

Then, we denote by $\mathfrak{N}_{d_{\operatorname{in}},{d_{\operatorname{out}}}}^{l,{h}}$  the set of all neural networks with input dimension ${d_{\operatorname{in}}}$, output dimension ${d_{\operatorname{out}}}$, $l$ hidden layers, and hidden dimension ${h}$, whereas
the set of all neural networks from $\R^{d_{\operatorname{in}}}$ to $\R^{d_{\operatorname{out}}}$ (i.e.\ without specifying the number of hidden layers and hidden dimension) is denoted by
\[
\mathfrak{N}_{d_{\operatorname{in}},{d_{\operatorname{out}}}}:=\bigcup_{l \in \N}\bigcup_{{h} \in \N^l}\mathfrak{N}_{d_{\operatorname{in}},{d_{\operatorname{out}}}}^{l,{h}}.
\]

It is well-known that the set of neural networks possesses the so-called \textit{universal approximation property}, see, e.g., \cite{hornik1989multilayer} or \cite{pinkus1999approximation}.
\begin{prop}[Universal approximation theorem]\label{lem_universal}
For any compact set $\K \subset \R^{d_{\operatorname{in}}} $ the set $\mathfrak{N}_{d_{\operatorname{in}},{d_{\operatorname{out}}}}|_{\K}$ is dense in ${C}(\K,\R^{d_{\operatorname{out}}})$ with respect to the topology of uniform convergence on $C(\K,\R^{d_{\operatorname{out}}})$.
\end{prop}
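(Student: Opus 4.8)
The plan is to reduce the statement to density of single-hidden-layer, scalar-output networks and then invoke the one-dimensional characterisation of Leshno--Lin--Pinkus--Schocken (see \cite{pinkus1999approximation}). First I would handle the output dimension by stacking. Since $\mathfrak{N}_{d_{\operatorname{in}},1}\supseteq\mathfrak{N}_{d_{\operatorname{in}},1}^{1,h}$ for every width $h$, if I can approximate any scalar $f_i\in C(\K)$ within $\varepsilon$ by a single-hidden-layer network $g_i(x)=\sum_{j}c_{ij}\,\varphi(w_{ij}^\top x+b_{ij})$, then for a vector target $f=(f_1,\dots,f_{d_{\operatorname{out}}})$ I collect all hidden neurons $\{(w_{ij},b_{ij})\}_{i,j}$ into one common hidden layer and read off each coordinate through a single output affine map $A_l$. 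This yields $g=(g_1,\dots,g_{d_{\operatorname{out}}})\in\mathfrak{N}_{d_{\operatorname{in}},d_{\operatorname{out}}}$ with $\|f-g\|_\infty<\varepsilon$ for the norm $\|f\|_\infty=\sup_{x\in\K}\max_i|f_i(x)|$. Hence the whole statement reduces to density of $x\mapsto\sum_{j=1}^{N}c_j\,\varphi(w_j^\top x+b_j)$ in $C(\K)$.

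Next I would pass from several variables to one. Let $S$ be the closed span in $C(\K)$ of the network functions $x\mapsto\varphi(w^\top x+b)$. Fixing a direction $w$ and noting that $t=w^\top x$ ranges over a compact interval as $x$ ranges over $\K$, the univariate result below shows that every ridge function $x\mapsto h(w^\top x)$ with $h\in C(\R)$ lies in $S$, since a univariate approximant $\sum_j c_j\varphi(\lambda_j t+\theta_j)$ pulls back to the network function $\sum_j c_j\varphi((\lambda_j w)^\top x+\theta_j)$. Now $S$ contains every power $(w^\top x)^k$, and by polarisation each multivariate monomial $x_1^{\alpha_1}\cdots x_{d_{\operatorname{in}}}^{\alpha_{d_{\operatorname{in}}}}$ is a finite linear combination of such powers; thus $S$ contains all polynomials, and by the Stone--Weierstrass theorem $S=C(\K)$. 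It therefore remains to prove the univariate claim: for continuous non-polynomial $\varphi$, the closed span of $\{t\mapsto\varphi(\lambda t+\theta):\lambda,\theta\in\R\}$ equals $C(I)$ on any compact interval $I$.

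For the univariate core I would first assume $\varphi\in C^\infty$. The difference quotients of $\lambda\mapsto\varphi(\lambda t+\theta)$ lie in the span, and passing to the limit gives $\partial_\lambda^k\varphi(\lambda t+\theta)\big|_{\lambda=0}=t^k\varphi^{(k)}(\theta)$ in the closed span for every $k$. Since $\varphi$ is non-polynomial, no $\varphi^{(k)}$ vanishes identically, so choosing $\theta$ with $\varphi^{(k)}(\theta)\neq0$ extracts the monomial $t^k$; all polynomials then lie in the closed span and Stone--Weierstrass settles the smooth case. For merely continuous $\varphi$ I would mollify: with $\rho\in C_c^\infty(\R)$, $\int\rho=1$, the convolution $\varphi*\rho$ is $C^\infty$, remains non-polynomial, and each $(\varphi*\rho)(\lambda t+\theta)$ is a uniform-on-compacts limit of Riemann sums $\sum_m\varphi(\lambda t+\theta-s_m)\rho(s_m)\Delta s_m$ that already belong to the span. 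Applying the smooth case to $\varphi*\rho$ and letting the mollifier concentrate transfers density back to $\varphi$.

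The hard part will be the univariate core, and specifically the claim that mollification preserves non-polynomiality, i.e.\ that if $\varphi*\rho$ is a polynomial for every $\rho\in C_c^\infty(\R)$ then $\varphi$ is itself a polynomial. The clean route is a Baire-category argument: the subspaces $\{\rho:\deg(\varphi*\rho)\leq m\}$ are closed and exhaust $C_c^\infty(\R)$, so one of them is the whole space; this forces $\varphi^{(m+1)}=0$ in the distributional sense and hence $\varphi$ polynomial of degree $\leq m$, a contradiction. The companion claim that some $\varphi^{(k)}(\theta)\neq0$ in the smooth case is immediate, since $\varphi^{(k)}\equiv0$ would already make $\varphi$ a polynomial. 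The remaining reductions---parallel stacking for the output dimension, polarisation to generate multivariate monomials from ridge powers, and the two appeals to Stone--Weierstrass---are routine.
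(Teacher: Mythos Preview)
The paper does not prove this proposition at all: it is stated as a well-known result with references to \cite{hornik1989multilayer} and \cite{pinkus1999approximation}, and then used as a black box in the proofs of Propositions~\ref{UA_proof} and~\ref{UA_proof_coupon}. Your proposal therefore goes well beyond what the paper does---you have reconstructed, essentially correctly, the Leshno--Lin--Pinkus--Schocken argument that underlies the very reference \cite{pinkus1999approximation} the paper cites. The reduction to scalar outputs by stacking hidden units, the passage from multivariate to ridge functions via polarisation of $(w^\top x)^k$, and the univariate core (differentiate in $\lambda$ to extract monomials in the smooth case, mollify in the general case) are all the standard steps.

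One technical caveat in your sketch: you apply the Baire category theorem to $C_c^\infty(\R)$, but this space, as a strict inductive limit of the Fr\'echet spaces $C_c^\infty([-n,n])$, is \emph{not} a Baire space (it is meagre in itself). The fix is routine---run the Baire argument instead on $C_c^\infty(K)$ for a fixed compact interval $K$ with nonempty interior, which \emph{is} Fr\'echet and hence Baire, to obtain a single $m$ with $(\varphi*\rho)^{(m+1)}\equiv 0$ for all $\rho\in C_c^\infty(K)$; then use that translates $\rho(x-\cdot)$ cover arbitrary compact supports as $x$ varies, forcing $\varphi^{(m+1)}=0$ distributionally on all of $\R$. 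With this adjustment your argument goes through.
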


By using Proposition~\ref{lem_universal} we can show that neural networks are able to approximate CAT bond prices arbitrarily well after being trained. Firstly, recalling the pricing formula for the zero coupon CAT bond Eq.~\eqref{zero coupon catastrophe bond price}, we assume that the Poisson intensity is fixed over time that $\lambda(t) \equiv \lambda$ and then we set $C_{\rm Z}(F,T;r_0, \lambda, D):=C_{\rm Z}(F, 0, T)$ to emphasize its dependence on the initial interest rate $r_0$, the Poisson intensity $\lambda$ and the trigger threshold $D$.

\begin{prop} \label{UA_proof}
Let Assumption~\ref{asu:ATSM_continuous} hold true, let $\mathbb{K}\subset \R^5$ be a compact set and let the loss distribution of $X_1$ follow a continuous distribution. Then for all $\varepsilon>0$, there exists a fully connected feedforward neural network $\mathcal{N}\mathcal{N}\in  \mathfrak{N}_{5,1}$ such that for all $k=(r_0,\lambda,T,F,D) \in \mathbb{K}$
\[
|\mathcal{N}\mathcal{N}(k)-C_{\rm Z}(F,T;r_0, \lambda, D)|< \varepsilon.
\]
\end{prop}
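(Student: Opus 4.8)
The plan is to reduce the claim to a continuity statement and then invoke the universal approximation theorem (Proposition~\ref{lem_universal}). Since $\mathfrak{N}_{5,1}|_{\mathbb{K}}$ is dense in $C(\mathbb{K},\mathbb{R})$ with respect to uniform convergence, it suffices to show that the map $k=(r_0,\lambda,T,F,D)\mapsto C_{\rm Z}(F,T;r_0,\lambda,D)$ is continuous on the compact set $\mathbb{K}$. Once this is established, Proposition~\ref{lem_universal} directly yields a network $\mathcal{N}\mathcal{N}\in\mathfrak{N}_{5,1}$ achieving uniform error below $\varepsilon$, which is exactly the assertion.

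To establish continuity, I would factor the pricing formula \eqref{zero coupon catastrophe bond price} at $t=0$ as
\begin{equation*}
C_{\rm Z}(F,T;r_0,\lambda,D)=\big(1-\theta(T,\lambda,D)\big)\cdot P_{\rm Z}(F,0,T),
\end{equation*}
where $\theta(T,\lambda,D):=\mathbb{E}_{\mathbb{Q}}[N(T)]=\mathbb{Q}(L(T)\geq D)$ is the trigger probability. The second factor is continuous in $(r_0,T,F)$ because $P_{\rm Z}(F,0,T)=Fe^{A(0,T)-B(0,T)r_0}$ and, by Lemma~\ref{ATSM_continuous}, the functions $A(0,T)$ and $B(0,T)$ are continuous in $T$; continuity in $r_0$ and $F$ is then immediate from continuity of the exponential and of multiplication. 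Hence the only nontrivial task is the continuity of $\theta$ in $(T,\lambda,D)$.

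For the trigger probability I would use the compound Poisson representation to write
\begin{equation*}
\theta(T,\lambda,D)=\sum_{n=1}^{\infty}\frac{(\lambda T)^n}{n!}e^{-\lambda T}\big(1-F_X^{*n}(D)\big),
\end{equation*}
where $F_X^{*n}$ is the cdf of $\sum_{i=1}^{n}X_i$ (the $n=0$ term vanishes since $D>0$). Each summand is continuous: the Poisson weight is smooth in $(\lambda,T)$, and $F_X^{*n}$ is continuous in $D$ because the absence of atoms in $F_X$ is inherited by every convolution. To transfer continuity from the summands to the sum, I would observe that on $\mathbb{K}$ the product $\lambda T$ is bounded by some $\Lambda_{\max}$, so each term is dominated uniformly by $(\Lambda_{\max})^n/n!$; since $\sum_{n}(\Lambda_{\max})^n/n!=e^{\Lambda_{\max}}<\infty$, the Weierstrass $M$-test gives uniform convergence, and a uniform limit of continuous functions is continuous.

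The hard part will be the continuity of $\theta$: one must carefully justify that the continuity (atom-freeness) of the loss cdf $F_X$ propagates to every convolution $F_X^{*n}$, and then combine this pointwise continuity with the uniform convergence of the Poisson series to obtain joint continuity in all three arguments $(T,\lambda,D)$. Once this is secured, the product of the two continuous factors is continuous on $\mathbb{K}$, and a direct application of Proposition~\ref{lem_universal} completes the argument.
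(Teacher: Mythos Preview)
Your argument is correct, and it reaches the same conclusion as the paper by a genuinely different route. Both proofs reduce to showing continuity of $(r_0,\lambda,T,F,D)\mapsto C_{\rm Z}$ and then invoke Proposition~\ref{lem_universal}; both handle the bond factor $P_{\rm Z}(F,0,T)=Fe^{A(0,T)-B(0,T)r_0}$ identically via Lemma~\ref{ATSM_continuous}. The divergence is in the treatment of the trigger probability. The paper stays with the integral representation from Corollary~\ref{cor:cat_bond_pricing}\,(i), writes
\[
g(\lambda,D,T)=1-\lambda\int_0^T\mathbb{E}_{\mathbb{Q}}\bigl[(1-F_X(D-L(s)))\,I(L(s)<D)\bigr]\,\dee s
\]
via Fubini, and then carries out an explicit $\varepsilon$--$\delta$ argument for continuity in $D$, splitting $\Omega\times[0,T]$ into the events $\{L(s)\in[D_1,D_2)\}$ and its complement. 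You instead expand $\theta(T,\lambda,D)=\mathbb{Q}(L(T)\ge D)$ as the Poisson mixture $\sum_{n\ge 1}\tfrac{(\lambda T)^n}{n!}e^{-\lambda T}(1-F_X^{*n}(D))$ and obtain joint continuity in $(T,\lambda,D)$ in one stroke via the Weierstrass $M$-test, using only that convolutions of an atom-free law remain atom-free. Your route is shorter, handles all three variables at once, and avoids the set decomposition and Fubini manipulations; the paper's route has the advantage of working directly with the pathwise integral formula that is used elsewhere in the text and does not require recognising the series form of the compound Poisson cdf.
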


According to Eq.~\eqref{coupon catastrophe bond equation}, the coupon CAT bond price $C_{\rm B}(F, \{C_i\}_{i=1}^n, 0, T)$ is the cumulative sum of zero coupon CAT bonds with face value $F$ and $C_i$s, each at the respective maturity. Hence, we directly get the following proposition showing that the coupon CAT bond price $C_{\rm B}(F, \{C_i\}_{i=1}^n, 0, T)$ in Eq.~\eqref{coupon catastrophe bond equation} can be approximated arbitrarily well by neural networks whenever we restrict the inputs to a compact set. In order to emphasize the dependence of the coupon CAT bond price on the payment schedule $\{t_i\}_{i=1}^n$, the initial interest rate $r_0$, the Poisson intensity $\lambda$ and the threshold $D$, we set $C_{\rm B}(F, \{C_i\}_{i=1}^n, \{t_i\}_{i=1}^n, T; r_0, \lambda, D):=C_{\rm B}(F, \{C_i\}_{i=1}^n, 0, T)$

\begin{prop} \label{UA_proof_coupon}
Let Assumption~\ref{asu:ATSM_continuous} hold true.
Given $n \in \N$, let $\mathbb{K} \subset \R^{2n+5}$ be a compact set and let the loss distribution of $X_1$ follow a continuous distribution. Then for all $\varepsilon>0$, there exists a fully connected feedforward neural network $\mathcal{N}\mathcal{N}_B\in  \mathfrak{N}_{2n+5,1}$ such that for all $k_B=(r_0, \lambda,\{t_i\}_{i=1}^n,T,\{C_i\}_{i=1}^n,F,D) \in \mathbb{K}$
\[
\bigl|\mathcal{N}\mathcal{N}_B(k_B)-C_{\rm B}(F, \{C_i\}_{i=1}^n, \{t_i\}_{i=1}^n, T; r_0, \lambda, D) \bigr|< \varepsilon.
\]
\end{prop}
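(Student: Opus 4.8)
The plan is to exploit the additive structure of the coupon CAT bond price and reduce the claim to a single application of the universal approximation theorem (Proposition~\ref{lem_universal}). Since we price at time $0$ and $0 < t_1 < \cdots < t_n \leq T$, the indicator $I(0 \leq t_i)$ equals $1$ for every $i$, so that Eq.~\eqref{coupon catastrophe bond equation} reads
\[
C_{\rm B}(F, \{C_i\}_{i=1}^n, \{t_i\}_{i=1}^n, T; r_0, \lambda, D) = C_{\rm Z}(F, T; r_0, \lambda, D) + \sum_{i=1}^n C_{\rm Z}(C_i, t_i; r_0, \lambda, D).
\]
Here the first summand depends only on the five coordinates $(r_0, \lambda, T, F, D)$ of $k_B$, and the $i$th summand only on $(r_0, \lambda, t_i, C_i, D)$. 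Each is therefore the composition of the five-variable map $(r_0,\lambda,s,c,D) \mapsto C_{\rm Z}(c,s; r_0,\lambda,D)$ with a linear coordinate projection.

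First I would record that this five-variable map is continuous -- this is precisely the regularity that underlies Proposition~\ref{UA_proof}, since continuity of the target is what permits the invocation of the universal approximation theorem there. Its continuity follows from the joint continuity of $A(0,T)$ and $B(0,T)$ from Lemma~\ref{ATSM_continuous}, which renders the interest-rate factor $P_{\rm Z}(F,0,T)=Fe^{A(0,T)-B(0,T)r_0}$ continuous in $(r_0,T,F)$, together with continuity of the trigger-probability factor $1-\mathbb{E}_{\mathbb{Q}}[N(s)]$ in $(\lambda,s,D)$; the latter uses the continuity of the loss cdf $F_X$ (guaranteed by the assumed continuity of the loss distribution) and dominated convergence applied to the integral representation in Eq.~\eqref{zero coupon catastrophe bond price}. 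Since a linear projection is continuous and a finite sum of continuous functions is continuous, the map $k_B \mapsto C_{\rm B}(F, \{C_i\}_{i=1}^n, \{t_i\}_{i=1}^n, T; r_0, \lambda, D)$ is continuous on the compact set $\mathbb{K}\subset\R^{2n+5}$. Applying Proposition~\ref{lem_universal} with $d_{\rm in}=2n+5$, $d_{\rm out}=1$, and the compact set $\mathbb{K}$ then yields, for any $\varepsilon>0$, a network $\mathcal{N}\mathcal{N}_B \in \mathfrak{N}_{2n+5,1}$ approximating $C_{\rm B}$ uniformly on $\mathbb{K}$ to within $\varepsilon$, which is the claim.

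The main obstacle is the continuity statement rather than the final approximation step: one must verify that the trigger-probability expectation, which integrates $\lambda(1 - F_X(D - L(s)))I(L(s)<D)$ against the law of the compound Poisson process, varies continuously with $(\lambda,s,D)$. The indicator $I(L(s)<D)$ is discontinuous in $L(s)$, but since the loss distribution is continuous the event $\{L(s)=D\}$ is null, so the integrand is continuous almost everywhere and uniformly bounded, and dominated convergence delivers continuity; continuity in $\lambda$ and $s$ follows from the locally uniform convergence of the Poisson weight series. An entirely analogous route, relying only on the \emph{statement} of Proposition~\ref{UA_proof}, is to approximate each of the $n+1$ summands by its own network to accuracy $\varepsilon/(n+1)$ and then assemble these into a single element of $\mathfrak{N}_{2n+5,1}$ by parallelizing the sub-networks -- routing the relevant five coordinates through the first affine layer -- and summing their outputs in the terminal affine layer; this requires the standard closure of the network class under parallel composition and addition, realized at a common depth.
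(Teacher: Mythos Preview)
Your proposal is correct and follows essentially the same route as the paper: establish that $C_{\rm B}$ is continuous on $\mathbb{K}$ by writing it as a finite sum of the continuous five-variable maps $C_{\rm Z}$ (whose continuity is the content behind Proposition~\ref{UA_proof}), and then invoke the universal approximation theorem. The alternative sub-network assembly you sketch at the end is not needed, and the paper does not pursue it either.
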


\begingroup
\section{Additional numerical diagnostics and reproducibility details}\label{sec:numerical_diagnostics}
\subsection{Operational domain, boundary regions, and extrapolation}
The training design in the main manuscript defines the operational training domain
\[
\mathcal D_{\rm train}=[0,0.08]\times[30,40]\times[7,13]\times[90,720]\times\{0,2,3,4,6,8,10,12\},
\]
where the coordinates are $(r_0,\lambda,D,T,N)$ and $T$ is measured in days. The neural-network surrogate is therefore assessed as an interpolator on this compact set. It is not used as a validated extrapolation rule outside \(\mathcal D_{\rm train}\). For out-of-domain contracts, the recommended procedure is to enlarge the simulation design and retrain, or to fall back to direct MC-IS valuation.

For each continuous coordinate $x_j \in [\ell_j,u_j]$, where $\ell_j$ and $u_j$ are the lower and upper bounds of the training domain, respectively, we define the boundary subset as the set of points whose $j$-th coordinate lies within the outermost $5\%$ of the interval near either boundary:
\[
\mathcal E_j
=
\left\{
x:
x_j \in
[\ell_j,\ell_j+0.05(u_j-\ell_j)]
\cup
[u_j-0.05(u_j-\ell_j),u_j]
\right\}.
\]
For the coupon-count coordinate, the boundary subset is $N\in\{0,12\}$. Table~\ref{tab:supp_boundary} reports pricing errors on these held-out boundary subsets demonstrating only a slight deterioration of accuracy near the boundaries.

\begin{table}[H]
\centering
\small
{
\caption{Boundary-region pricing errors on the held-out test sample}
\label{tab:supp_boundary}
\begin{tabular}{llrrrr}
\toprule
Severity & Test subset & Observations & MAE & RMSE & Max. AE \\
\midrule
Gamma & Full test set & 120000 & 0.00274 & 0.00380 & 0.02556 \\
Gamma & $r_0$ boundary & 11877 & 0.00304 & 0.00408 & 0.02108 \\ 
Gamma & $\lambda$ boundary & 12007 & 0.00291 & 0.00404 & 0.02348 \\
Gamma & $D$ boundary & 11864 & 0.00305 & 0.00413 & 0.02391 \\
Gamma & $T$ boundary & 12063 & 0.00328 & 0.00447 & 0.02324 \\
Gamma & $N\in\{0,12\}$ & 30080 & 0.00296 & 0.00407 & 0.02556 \\
\midrule
Lognormal & Full test set & 120000 & 0.00308 & 0.00429 & 0.03296 \\
Lognormal & $r_0$ boundary & 12081 & 0.00345 & 0.00464 & 0.03296 \\
Lognormal & $\lambda$ boundary & 12265 & 0.00328 & 0.00463 & 0.03296 \\
Lognormal & $D$ boundary & 11822 & 0.00328 & 0.00456 & 0.03296 \\
Lognormal & $T$ boundary & 11880 & 0.00366 & 0.00499 & 0.02918 \\
Lognormal & $N\in\{0,12\}$ & 29941 & 0.00333 & 0.00470 & 0.03296 \\
\bottomrule
\end{tabular}
}
\end{table}

\subsection{Expanded out-of-sample accuracy diagnostics}
Table~\ref{tab:supp_accuracy} complements Figure~1 in the main manuscript. Besides MAE and MSE, it reports signed bias, RMSE, high absolute-error quantiles, maximum absolute error, and $R^2$ on the held-out test set.

\begin{table}[H]
\centering
\small
{
\caption{Expanded out-of-sample error statistics}
\label{tab:supp_accuracy}
\begin{tabular}{lrrrrrrrr}
\toprule
Severity & Observations & Bias & MAE & MSE & RMSE & 95\% AE & 99\% AE & $R^2$ \\
\midrule
Gamma & 120000 & 0.00026 & 0.00274 & $1.4\times10^{-5}$ & 0.00380 & 0.00808 & 0.01286 & 0.99989 \\
Lognormal & 120000 & 0.00009 & 0.00308 & $1.8\times10^{-5}$ & 0.00429 & 0.00928 & 0.01413 & 0.99984 \\
\bottomrule
\end{tabular}}
\end{table}

\subsection{MC-IS benchmark protocol and offline computational cost}
For the timing benchmark in the main manuscript, MC-IS is run under a fixed and reproducible protocol. Each protected coupon date and the principal date are priced by estimating the corresponding trigger probability with the same importance-sampling construction as in Section~\ref{sec:IS}. The comparison is therefore between a direct MC-IS valuation of each requested contract and evaluation of the already-trained surrogate.

\begin{table}[H]
\centering
\small
\caption{MC-IS benchmark protocol}
\label{tab:supp_mcis_protocol}
{
\begin{tabular}{ll}
\toprule
Item & Specification \\
\midrule
Paths per trigger-probability estimate & {5,000 importance-sampling paths} \\
Stopping/convergence rule & { Fixed number of Monte Carlo simulations} \\
Random seeds & {NumPy random seed = 125} \\
Common random numbers across methods & {No. The surrogate evaluation is deterministic.} \\
Hardware for timing experiments & {Apple M3 Pro} \\
Software stack & {Python 3.12.2, NumPy 1.26.4, SciPy 1.13.1, TensorFlow 2.18.0} \\
\bottomrule
\end{tabular}}
\end{table}

The neural-network timings in the main manuscript are online evaluation times. Table~\ref{tab:supp_offline_cost} records the corresponding one-time offline costs. These costs are incurred once for a fixed model specification, severity law, and input domain, and are amortized when the trained pricing surface is queried repeatedly.

\begin{table}[H]
\centering
\small
{
\caption{Offline computational cost of label generation and training}
\label{tab:supp_offline_cost}
\begin{tabular}{llr}
\toprule
Task & Specification & Wall-clock time \\
\midrule
Label generation & Gamma severities, 600,000 prices & 24 h 21 min \\
Label generation & Lognormal severities, 600,000 prices & 33 h 20 min \\
Training & Architecture $(128,64,32)$, Gamma $+$ Lognormal & 21.55 min \\
Training & Architecture $(256,128,64,32)$, Gamma & 29.70 min \\
Training & Architecture $(256,128,64,32)$, Lognormal & 35.98 min \\
Training & Architecture $(512,256,128,64)$, Gamma $+$ Lognormal & 81.62 min \\
Full cross-validation & All candidate architectures and folds & 7 h 27 min \\
\bottomrule
\end{tabular}

\vspace{1mm}
\footnotesize
\emph{Note.} Full cross-validation evaluates four candidate configurations for each severity distribution (Gamma and Lognormal), obtained by combining two hidden-layer architectures, $(128,64,32)$ and $(256,128,64,32)$, with two learning rates, $10^{-4}$ and $10^{-5}$. The activation function (ReLU), L2 regularization coefficient ($10^{-4}$), dropout rate ($0.1$), and batch normalization are kept fixed throughout.
}
\end{table}

\subsection{Economic monotonicity diagnostics}
The structural model implies economically expected signs on the training domain: prices should be non-increasing in the catastrophe intensity $\lambda$, non-decreasing in the attachment threshold $D$, and non-increasing in the initial short rate $r_0$ for positive protected cash flows. We quantify whether the learned surrogate violates these signs on a dense finite-difference grid. With tolerance $\varepsilon_{\rm tol}=1\times10^{-6}$, a violation is recorded if
\[
\Delta_\lambda V>\varepsilon_{\rm tol},\qquad \Delta_D V<-\varepsilon_{\rm tol},\qquad \Delta_{r_0} V>\varepsilon_{\rm tol},
\]
where each difference compares adjacent grid values while holding the other inputs fixed.

{ In the following Table~\ref{tab:supp_monotonicity}, we evaluate the model on dense uniform grids consisting of 100,001 points for each parameter: $r_0 \in [0,0.12]$, $\lambda \in [25,45]$, and $D \in [5\times 10^9,15\times 10^9]$, which include samples out of the training domain. The table shows that no violations of the monotonicity relations can be observed.}

\begin{table}[H]
\centering
\small
{
\caption{Finite-difference monotonicity violations on dense grids}
\label{tab:supp_monotonicity}
\begin{tabular}{llrrrr}
\toprule
Severity & Variable & Expected sign & Grid comparisons & Violation rate & Max. violation \\
\midrule
Gamma & $\lambda$ & non-increasing & 100000 & 0\% & 0 \\
Gamma & $D$ & non-decreasing & 100000 & 0\% & 0 \\
Gamma & $r_0$ & non-increasing & 100000 & 0\% & 0 \\
Lognormal & $\lambda$ & non-increasing & 100000 & 0\% & 0 \\
Lognormal & $D$ & non-decreasing & 100000 & 0\% & 0 \\
Lognormal & $r_0$ & non-increasing & 100000 & 0\% & 0 \\
\bottomrule
\end{tabular}
}
\end{table}

\subsection{Stability to architecture selection}
Table~\ref{tab:supp_architecture} reports the performance of alternative architectures considered during model selection. The goal is not to claim that the selected architecture is unique, but to show whether valuation accuracy is stable across nearby depth/width choices.


\begin{table}[H]
\centering
\small
{
\caption{Architecture robustness on the held-out test set}
\label{tab:supp_architecture}
\begin{tabular}{llrrrr}
\toprule
Severity & Architecture & Parameters & MAE & MSE & Training time \\
\midrule
Gamma & $(128,64,32)$ & 11,137 & 0.00331 & $2.5\times10^{-5}$ & 653.69 s \\
Gamma & $(256,128,64,32)$ & 44,801 & 0.00274 & $1.4\times10^{-5}$ & 1782.01 s \\
Gamma & $(512,256,128,64)$ & 175,617 & 0.00317 & $1.9\times10^{-5}$ & 2286.54 s \\
Lognormal & $(128,64,32)$ & 11,137 & 0.00379 & $2.6\times10^{-5}$ & 639.37 s \\
Lognormal & $(256,128,64,32)$ & 44,801 & 0.00308 & $1.8\times10^{-5}$ & 2158.52 s \\
Lognormal & $(512,256,128,64)$ & 175,617 & 0.00309 & $1.9\times10^{-5}$ & 2610.53 s \\
\bottomrule
\end{tabular}

\vspace{1mm}
\footnotesize
\emph{Note.} The parameter counts are trainable parameters reported by the final Keras model summary. All models use identical hyperparameters: ReLU activation, L2 regularization coefficient $10^{-4}$, dropout rate $0.1$, batch normalization, and learning rate $10^{-5}$. Only the hidden-layer architecture varies.
}
\end{table}

\section{Algorithms}

In this section, we present all the algorithms used. Implementations of the algorithms can be found for the convenience of the reader, and for reproducibility of the results, under {\url{https://github.com/HuansangXu/CAT-bonds}}.

\begin{algorithm}
\caption{Standard Monte Carlo Estimation of Trigger Probability $\mathbb{E}_{\mathbb{Q}}\left[N(T)\right]$}
\label{alg:MC_trigger}
\SetAlgoLined
\KwIn{Number of simulations $n_s$, Poisson intensity parameter $\lambda$, trigger threshold $D$, maturity $T$, Loss distribution.}
\KwOut{Estimated trigger probability $\hat{\theta}^{\rm{MC}}_{n_s}$.}
\vspace{0.2cm}

Generate $n_s$ Poisson-distributed random variables $N_i \sim \text{Poisson}(\lambda T)$;\\
Initialize empty list $h$;\\

\For{$i = 1$ \KwTo $n_s$}{
    Generate $N_i$ i.i.d. random variables $X_1, X_2, \cdots, X_{N_i}$ from the distribution of losses; \\
    Compute their sum $L_i = \sum_{i=1}^{N_i} X_i $; \\
    \eIf{$L_i \geq D$}{
        $h_i \leftarrow 1$ (trigger event).
    }{
        $h_i \leftarrow 0$ (non trigger).
    }
}

Compute cumulative average $\hat{\theta}^{\rm{MC}}_{n_s} = \frac{1}{n_s}\sum_{j=1}^{n_s} h_j$; \\

\Return $\hat{\theta}^{\rm{MC}}_{n_s}$.

\end{algorithm}

\begin{algorithm}[H]
\label{alg:IS_trigger_gamma}
\caption{IS Estimation of Trigger Probability $\mathbb{E}_{\mathbb{Q}}\left[N(T)\right]$ with Gamma Distributed Losses}
\KwIn{Number of simulations $n_s$, Poisson intensity $\lambda$, trigger threshold $D$, time horizon $T$, parameters for Gamma distribution shape $k$ and scale $\beta$}
\KwOut{Estimated trigger probability 
$\hat{\theta}_{n_s}^{\rm{IS}}$ using Monte Carlo simulation with importance sampling}

Compute $a_{\text{poi}} =\tfrac{1}{2} \ln\left(\tfrac{D}{ \lambda T  k \beta}\right)$ \;
Compute the new Poisson intensity $\lambda'$ after IS: $\lambda' = \lambda T e^{a_{\text{poi}}}$ \;
Compute $b_{\text{gam}} = \tfrac{1}{\beta} -\tfrac{\lambda T e^{a_{\text{poi}}}k}{D}$ \;
Compute the new scale $\beta'$ of Gamma distribution after IS $\beta' = 1/(1/\beta - b_{\text{gam}})$ \;

Initialize arrays $h$, $R^{\text{poi}}$, and $R^{\text{gam}}$ of size $n_s$ \;

\For{$i = 1$ \KwTo $n_s$}{
    Sample $N_i \sim \text{Poisson}(\lambda')$ \;
    Generate $L_i = \sum_{j=1}^{N_i} X_j$, where $X_j \sim \text{Gamma}(k, \beta')$ \;
    
    Compute $h_i = I(L_i \geq D)$ \;
    Compute the likelihood ratio for Poisson process $R^{\text{poi}}_i = \exp(\lambda T (e^{a_{\text{poi}}} - 1) - a_{\text{poi}} N_i)$ \;
    Compute the likelihood ratio for Gamma distribution $R^{\text{gam}}_i = (1 - \beta b_{\text{gam}})^{-k N_i} \exp( -b_{\text{gam}} L_i)$ \;
}

Compute cumulative average $\hat{\theta}_{n_s}^{\rm{IS}} = \frac{1}{n_s} \sum_{j=1}^{n_s} h_j \cdot R^{\text{poi}}_j \cdot R^{\text{gam}}_j$ \;

\Return{$\hat{\theta}_{n_s}^{\rm{IS}}$}
\end{algorithm}

\begin{algorithm}[H]
\label{alg:IS_trigger_lognormal}
\caption{IS Estimation of Trigger Probability $\mathbb{E}_{\mathbb{Q}}\left[N(T)\right]$ with Lognormally Distributed Losses}
\KwIn{Number of simulations $n_s$, Poisson intensity $\lambda$, trigger threshold $D$, time horizon $T$, parameters for Lognormal distribution location $\mu$ and scale $\sigma$}
\KwOut{Estimated trigger probability 
$\hat{\theta}_{n_s}^{\rm{IS}}$ using Monte Carlo simulation with importance sampling}

Compute $b_{\text{log}}$ by solving equation 
$\frac{2 \lambda t \, b}{\sigma^2} \, e^{\frac{b^2}{2\sigma^2}} - \frac{1}{\sigma} \, \frac{\phi\Big(\frac{\ln D - \mu + b}{\sigma}\Big)}{1 - \Phi\Big(\frac{\ln D - \mu + b}{\sigma}\Big)} = 0$ \;
Compute the new location $\mu'$ of Lognormal distribution after IS $\mu' = \mu + b_{\text{log}}$ \;
Compute $a_{\text{poi}} = b_{\text{log}}^2 / (2\sigma^2)$ \;
Compute the new Poisson intensity $\lambda'$ after IS: $\lambda' = \lambda T e^{a_{\text{poi}}}$ \;

Initialize arrays $h$, $R^{\text{poi}}$, and $R^{\text{log}}$ of size $n_s$ \;

\For{$i = 1$ \KwTo $n_s$}{
    Sample $N_i \sim \text{Poisson}(\lambda')$ \;
    Generate $L_i = \sum_{j=1}^{N_i} \exp{(Y_j)}$, where $Y_j \sim \mathcal{N}(\mu', \sigma^2)$ \;
    
    Compute $h_i = I(L_i \geq D)$ \;
    Compute the likelihood ratio for Poisson process $R^{\text{poi}}_i = \exp(\lambda T (e^{a_{\text{poi}}} - 1) - a_{\text{poi}} N_i)$ \;
    Compute the likelihood ratio for Lognormal distribution $R^{\text{log}}_i = \exp \Big(\frac{(\mu + b_{\text{log}})^2 - \mu^2} {2\sigma^2} \Big)^{N_i} \exp \Big(- \frac{b_{\text{log}} \sum_{j=1}^{N_i} Y_j} {\sigma^2} \Big)$ \;
}

Compute cumulative average $\hat{\theta}_i^{\rm{IS}} = \frac{1}{n_s}\sum_{j=1}^{n_s} h_j \cdot R^{\text{poi}}_j \cdot R^{\text{log}}_j$ \;

\Return{$\hat{\theta}_{n_s}^{\rm{IS}}$}
\end{algorithm}

\begin{algorithm}[H]
\label{alg:CAT_price}
\caption{CAT Bond Pricing using Monte Carlo Simulation for Trigger Probability}
\SetAlgoLined
\KwIn{Face value $F$, interest rate model parameters (initial interest rate $r_0$, drift term $\mu$, and volatility $\sigma$), Poisson intensity $\lambda$, trigger threshold $D$, loss distributions parameters ($\text{Gamma}(k,\beta)$ or $\text{Logn}(\mu, \sigma)$), number of simulations $n_s$ for trigger probability, coupon values $\{C_i\}_{i=1}^n$, coupon payment dates $\{t_i\}_{i=1}^n$, maturity $T$}
\KwOut{Estimated CAT Bond Price at time $0$: $C_{\rm B}$}

Initialize CAT bond price: $C_{\rm B} = 0$. \\

\For{$i = 1$ \KwTo $n$}{
    \If{$\mathbb{E}_{\mathbb{Q}}[L(t_i)] < D$}{
        Compute trigger probability $\theta_i$ at payment date $t_i$ using Monte Carlo simulation with IS according to Algorithm \ref{alg:IS_trigger_gamma} for Gamma distributed losses or Algorithm \ref{alg:IS_trigger_lognormal} for Lognormally distributed losses. \\
    }
    \Else{
        Compute trigger probability $\theta_i$ using standard Monte Carlo simulation according to Algorithm \ref{alg:MC_trigger}. \\
    }
    Compute present value of non-defaultable coupon payments using interest rate model according to Eq.\eqref{zero coupon bond equation}: $P_{\rm Z}(C_i, 0, t_i)$ \\
    Add the present value of coupon payments to the CAT bond price:
    $C_{\rm B} \leftarrow C_{\rm B} + P_{\rm Z}(C_i, 0, t_i) \times \left(1 - \theta_i\right)$. \\
}

\textbf{Final Payment Adjustment:} \\
\If{$\mathbb{E}_{\mathbb{Q}}[L(T)] < D$}{
        Compute trigger probability $\theta_T$ at at maturity $T$ using Monte Carlo simulation with IS. \\
    }
    \Else{
        Compute trigger probability $\theta_T$ using standard Monte Carlo simulation.\\
    }
Compute non-defaultable principal repayments value $P_{\rm Z}(F, 0, T)$, then:
$C_{\rm B} \leftarrow C_{\rm B} + P_{\rm Z}(F, 0, T) \times \left(1 - \theta_T\right)$. \\

\Return CAT Bond Price $C_{\rm B}$.
    
\end{algorithm}

\begin{algorithm}[H]
\caption{Training a Neural Network for CAT Bond Pricing}
\label{alg:cat_bond_nn}
\KwIn{
\\
Price data: Simulated CAT bond prices (computed according to Algorithm \ref{alg:CAT_price}), interest rates, catastrophe event probability parameters, maturities, coupon payment parameters; \\
Model hyperparameters: Number of hidden layers and neurons, activation function, batch normalization flag, dropout rate, learning rate; \\
Training parameters: Number of iterations $N_{\text{Iter}}$, batch size, optimization algorithm (Adam); \\
Regularization parameters: L2 regularization parameter, dropout fraction.
}

\textbf{Initialize:} \\
Construct the neural network model $\mathcal{NN}_{\text{CAT}}$ with the given architecture and hyperparameters. \\
Randomly initialize weights and biases. \\

\For{$i = 1 $ \KwTo $ N_{\text{Iter}}$}{
    \textbf{Sample Training Data:} \\
    Extract a batch of CAT bond features $X_{\text{batch}}=({r_0}_{\text{batch}},\lambda_{\text{batch}},T_{\text{batch}},F_{\text{batch}},D_{\text{batch}})$ and corresponding prices $P_{\text{batch}}:=(P_1, \cdots, P_{N_{\text{batch}}})$.

    \textbf{Forward Propagation:} \\
    Compute predicted CAT bond prices:
    \[
    \hat{P}_{\text{batch}} = \mathcal{NN}_{\text{CAT}}(X_{\text{batch}}):=(\hat{P}_1, \cdots, \hat{P}_{N_{\text{batch}}})
    \]

    \textbf{Compute Loss:} \\
    Calculate the Mean Squared Error (MSE) loss:
    \[
    \mathcal{L} = \frac{1}{N_{\text{batch}}} \sum_{j=1}^{N_{\text{batch}}} (P_j - \hat{P}_j)^2
    \]

    \textbf{Backpropagation and Weight Update:} \\
    Compute gradients of $\mathcal{L}$ with respect to model parameters using automatic differentiation. \\
    Update weights using Adam optimizer (\cite{kingma2014adam}).

    \textbf{Regularization and Constraints:} \\
    Apply L2 weight decay to prevent overfitting. \\
    Use dropout if enabled.
}

\textbf{End}\\

\KwOut{Trained neural network $\mathcal{NN}_{\text{CAT}}$ optimized for CAT bond price prediction.}
\end{algorithm}

\section{Proofs}

\subsection{Proof of Lemma \ref{ATSM_continuous}}
\begin{proof}[Proof of Lemma \ref{ATSM_continuous}]
We begin by considering the first line of \eqref{ATSM_functions}, i.e., the Riccati-type ordinary differential equation (ODE) describing the dynamics of $B$:
\begin{align*}
\frac{ \dee}{ \dee t}B(t,T) = -1 - \alpha(t) B(t,T) + \frac{1}{2} \gamma(t) B^2(t,T), \qquad B(T,T) = 0.
\end{align*}
Since $\alpha(\cdot)$ and $\gamma(\cdot)$ are continuous by Assumption \ref{asu:ATSM_continuous}, standard results on Riccati equations guarantee existence of a unique and continuous solution $[0,T] \ni t \mapsto B(t,T)$ for fixed $T$, see, e.g., \cite{ince2012ordinary}.

We now investigate the continuity of $B(t,T)$ with respect to the maturity parameter $T$. Let $\varepsilon>0$, and define for fixed $T>0$ the difference
\begin{equation*}
[0,T] \ni t \mapsto D(t, \varepsilon):=B(t,T) - B(t, T+\varepsilon).    
\end{equation*}
We compute the derivative of $D(t, \varepsilon)$ with respect to $t$ using the ODE satisfied by both $B(t,T)$ and $B(t,T+\varepsilon)$:
\begin{equation}\label{eq:dDt}
\begin{aligned}
\frac{\dee}{\dee t} D(t, \varepsilon) &=  \frac{1}{2} \gamma(t) \Big( B^2(t,T) - B^2(t, T+\varepsilon) \Big) - \alpha(t) \Big(B(t, T) - B(t,T+\varepsilon) \Big) \\
&= \frac{1}{2} \gamma(t) \Big(B(t, T+\varepsilon) + B(t,T) \Big) D(t, \varepsilon) - \alpha(t) D(t, \varepsilon) \\
&= \Big( \frac{1}{2} \gamma(t) B(t, T+\varepsilon) + B(t,T) - \alpha(t) \Big)D(t, \varepsilon). 
\end{aligned}
\end{equation}
Next, we define
\begin{equation*}
[0,T] \ni t \mapsto M(t):= \frac{1}{2} \gamma(t) \Big| B(t, T+\varepsilon) + B(t,T) \Big| + |\alpha(t)|,   
\end{equation*}
and apply Gr{\"o}nwall's inequality to \eqref{eq:dDt}, which implies:
\begin{equation}\label{eq:appendix_proof1}
|D(t, \varepsilon)| \leq |D(T, \varepsilon)| \exp \Big(\int_t^T M(s) \dee s \Big) \text{ for all } 0 <t < T. 
\end{equation}
We then observe that
\begin{equation*}
C_1:=\exp \Big(\int_0^T M(s) \dee s \Big)< \infty, 
\end{equation*}
which is finite because $t\mapsto \alpha(t)$, $t\mapsto \gamma(t)$, $t\mapsto B(t,T)$, and $t\mapsto B(t,T+\varepsilon)$ are all continuous, and hence $t \mapsto M(t)$ is bounded on $[0,T]$.
Note that by definition,
\begin{equation*}
D(T, \varepsilon) = B(T,T) - B(T, T+\varepsilon) = - B(T, T+\varepsilon),   
\end{equation*}
since $B(T,T)=0$ by the boundary condition of the Riccati ODE \eqref{ATSM_functions}. As the solution $t\mapsto B(t,T+\varepsilon)$ is continuous, we have
\begin{equation*}
|D(T, \varepsilon)| \leq \int_T^{T+\varepsilon} \Big| -1 - \alpha(t) B(t,T+\varepsilon) + \frac{1}{2} \gamma(t) B^2(t,T+\varepsilon) \Big| \D t =: C_2(\varepsilon) < \infty,
\end{equation*}
with $\lim_{\varepsilon \downarrow 0}C_2(\varepsilon)=0$.
Thus, \eqref{eq:appendix_proof1} gives
\begin{equation*}
|D(t, \varepsilon)| \leq C_1 C_2(\varepsilon),    
\end{equation*}
which tends to zero as $\varepsilon \rightarrow 0$. Therefore, $B(t,T)$ is continuous in the maturity $T$.
Next, we want to show that $A(t,T)$ is continuous with respect to the maturity $T$. Consider the defining ODE for $A$
\begin{align*}
\frac{ \dee A(t,T)}{\dee t} = \beta(t) B(t,T) - \frac{1}{2} \delta(t) B^2(t,T),
\end{align*}
where $ t \mapsto \beta(t)$ and $ t \mapsto \delta(t)$ are assumed to be continuous functions and $B(t,T)$ is known to be continuous in $T$ from the above result. Thus, a unique continuous solution $[0,T] \ni t \mapsto A(t, T)$ exists for any $T>0$. 
Fix $\varepsilon>0$, and define the difference of $A(t,T)$ and $A(t,T+\varepsilon)$,
\begin{align*}
[0,T] \ni t \mapsto D_A(t, \varepsilon):&=A(t,T) - A(t, T+\varepsilon)\\
&= \int_T^{T+\varepsilon} \beta(t) B(t,T+\varepsilon)-\tfrac{1}{2} \delta(t) B^2(t,T+\varepsilon)\D t\\
&\hspace{1cm}-\int_T^{T+\varepsilon} \beta(t) B(t,T)-\tfrac{1}{2} \delta(t) B^2(t,T)\D t,
\end{align*}
and hence, since $t \mapsto \beta(t), t\mapsto \delta(t), t \mapsto B(t,T+\varepsilon)$ and $t \mapsto B(t,T)$ all are continuous we obtain $\lim_{\varepsilon \downarrow 0} D_A(t, \varepsilon)=0$ for all $t \in [0,T]$, showing the continuity of $A$ in $T$.
\end{proof}

\subsection{Proof of Lemma \ref{lem:double_stochastic_poisson}}
\begin{proof}[Proof of Lemma \ref{lem:double_stochastic_poisson}]
According to the definition of doubly stochastic Poisson processes, see, e.g., \cite{cox1980point}, we need to show that for $h>0$ the following two properties are fulfilled.
\begin{enumerate}
    \item $\mathbb{Q}\bigl(N(t+h) - N(t) \geq 1 | \mathcal{F}(t) \bigr) = \lambda(t)\Bigl(1-F_X\bigr(D - L(t)\bigr)\Bigr)I\bigl(L(t) < D\bigr)h + o(h)$, \label{enum:proof_1}
    \item $\mathbb{Q}\bigl(N(t+h) - N(t) \geq 2 | \mathcal{F}(t) \bigr) = o(h)$,\label{enum:proof_2}
\end{enumerate}
or equivalently
\begin{enumerate}
    \item[(1$^{\prime}$)] $\lim_{h \downarrow 0} \frac{\mathbb{Q}\bigl(N(t+h) - N(t) \geq 1 | \mathcal{F}(t) \bigr)}{h} = \lambda(t)\Bigl(1-F_X\bigr(D - L(t)\bigr)\Bigr)I\bigl(L(t) < D\bigr)$, \label{enum:proof_1_equiv}
    \item[(2$^{\prime}$)] $\lim_{h \downarrow 0} \frac{\mathbb{Q}\bigl(N(t+h) - N(t) \geq 2 | \mathcal{F}(t) \bigr)}{h} = 0$.\label{enum:proof_2_equiv}
\end{enumerate}
We start by showing \eqref{enum:proof_2}.
Let $t>0$, and consider the time interval $(t, t+h]$ for small $h > 0$. Then, we compute
\begin{align}
&\mathbb{Q}\bigl(N(t+h) - N(t) \geq 2 | \mathcal{F}(t) \bigr) \nonumber\\
&= \mathbb{E}_{\mathbb{Q}}\left[ I \left(M(t+h) - M(t) \geq 2\right) \cdot  I\left(\sum_{i=M(t) + 1}^{M(t+h)} X_i \geq D - L(t)\right) \cdot \underbrace{I\bigl(L(t) < D\bigr)}_{\text{$\mathcal{F}(t)$-measurable}} \Bigg| \mathcal{F}(t) \right] \nonumber \\
&= \left(\sum_{k=2}^{\infty}{\mathbb{E}_{\mathbb{Q}}\left[ \underbrace{I \left(M(t+h) - M(t) = k \right) \cdot I\left(\sum_{i=1}^{k} X_i \geq D - L(t)\right)}_{\text{conditionally independent}} \Bigg| \mathcal{F}(t) \right]}\right) \cdot I\bigl(L(t) < D\bigr) \nonumber \\
&= \left(\sum_{k=2}^{\infty}{\mathbb{E}_{\mathbb{Q}}\left[I \left(M(t+h) - M(t) = k \right) | \mathcal{F}(t) \right] \cdot \underbrace{\mathbb{E}_{\mathbb{Q}}\left[ I\left(\sum_{i=1}^{k} X_i \geq D - L(t)\right) \Bigg| \mathcal{F}(t) \right]}_{\leq 1}}\right) \cdot \underbrace{I\bigl(L(t) < D\bigr)}_{\leq 1} \nonumber \\
&\leq \sum_{k=2}^{\infty}{\mathbb{E}_{\mathbb{Q}}\left[I\left(M(t+h) - M(t) = k \right) | \mathcal{F}(t) \right]} \nonumber \\
&= \sum_{k=2}^{\infty}{\mathbb{Q}\left(M(t+h) - M(t) = k | \mathcal{F}(t)\right)} =   \mathbb{Q}\left(M(t+h) - M(t) \geq  2 | \mathcal{F}(t)\right) = o(h).\label{2 jump proof}
\end{align}
Next, we show property \eqref{enum:proof_1}. Using the above Equation \eqref{2 jump proof}, we get 
\begin{align}
&\mathbb{Q}\bigl(N(t+h) - N(t) \geq 1 | \mathcal{F}(t) \bigr) \nonumber \\
&= \mathbb{Q}\bigl(N(t+h) - N(t) = 1 | \mathcal{F}(t) \bigr) + \mathbb{Q}\bigl(N(t+h) - N(t) \geq 2 | \mathcal{F}(t) \bigr) \nonumber \\
&= \mathbb{Q}\bigl(N(t+h) - N(t) = 1 | \mathcal{F}(t) \bigr) + o(h) \nonumber \\
&= \mathbb{E}_{\mathbb{Q}}\left[ I \left(M(t+h) - M(t) = 1 \right) \cdot  I\left(\underbrace{X_1}_{\text{$X_i$'s are i.i.d.}} \geq D - L(t)\right) \cdot \underbrace{I\bigl(L(t) < D\bigr)}_{\text{$\mathcal{F}(t)$-measurable}} \Bigg| \mathcal{F}(t) \right] + o(h) \nonumber \\
&= \underbrace{\mathbb{Q}\left(M(t+h) - M(t) = 1| \mathcal{F}(t)\right)}_{\lambda(t) \cdot h} \cdot \underbrace{\mathbb{Q}\left(X_1 \geq D - L(t)\right)}_{\Bigl(1-F_X\bigr(D - L(t)\bigr)\Bigr)} \cdot I\bigl(L(t) < D\bigr) + o(h) \nonumber \\
&= \left[ \lambda(t) \cdot \Bigl(1-F_X\bigr(D - L(t)\bigr)\Bigr) \cdot I\bigl(L(t) < D\bigr) \right] \cdot h + o(h)
\end{align}
\end{proof}

\subsection{Proof of Corollary \ref{cor:cat_bond_pricing}}
\begin{proof}[Proof of Corollary \ref{cor:cat_bond_pricing}]
\begin{enumerate}
\item[(i)] By risk-neutral valuation\footnote{Compare the exposition in \cite{bielecki2013credit} for the analogue situation of credit sensitive bonds.}, the arbitrage-free price of a zero coupon CAT bond with face value $F$ and maturity $T$ at time $t \leq T$ is  given by 
\begin{align}
C_{\rm Z}(F, t, T)&= \mathbb{E}_{\mathbb{Q}}\left[P_{\rm Z}(I(\tau>T)F, t, T) | \mathcal{F}(t)\right] \nonumber \\
&= \mathbb{E}_{\mathbb{Q}}\left[P_{\rm Z}\left(\bigl(1-N(T)\bigr)F, t, T\right) | \mathcal{F}(t)\right] \nonumber \\ 
&= \mathbb{E}_{\mathbb{Q}}\left[P_{\rm Z}\left(F, t, T\right) - P_{\rm Z}\left(N(T)F, t, T\right) | \mathcal{F}(t)\right] \nonumber \\ 
&= P_{\rm Z}\left(F, t, T\right) - \mathbb{E}_{\mathbb{Q}}\left[N(T)P_{\rm Z}\left(F, t, T\right) | \mathcal{F}(t)\right] \nonumber \\ 
&=  \mathbb{E}_{\mathbb{Q}}\left[(1-N(T)P_{\rm Z}\left(F, t, T\right)) | \mathcal{F}(t)\right] P_{\rm Z}\left(F, t, T\right) \nonumber \\ 
&= \mathbb{E}_{\mathbb{Q}}\left[1-\int_{t}^{T}{\lambda(s)\Bigl(1-F_X\bigr(D - L(s)\bigr)\Bigr)I\bigl(L(s) < D\bigr) \,\dee s} \Big| \mathcal{F}(t) \right]P_{\rm Z}(F, t, T), \label{CAT_price1}
\end{align} 
where the last equality is due to Lemma \ref{lem:double_stochastic_poisson} and the properties of doubly stochastic Poisson processes.

\item[(ii)] After we get the price of zero-coupon CAT bond from (1), the price of a coupon CAT bond can be formed directly by considering each coupon payment $C_i$ at time $t_i$ for $i = 1, 2, \ldots, n$ and by using linearity of the expectation.

\item[(iii)] The price of a CAT bond with face value $F$, maturity $T$, coupon payments $C_i$ at time $t_i$ for $i = 1, 2, \ldots, n$, where $0 < t_1 < t_2 < \cdots < t_n \leq T$, and recovery rate $R$ at time $t \leq T$ is 
\begin{align}
C_{RB}(F, \{C_i\}_{i=1}^n, t, T) &= C_{B}(F(1-\E_{\Q}[R]), \{C_i\}_{i=1}^n, t, T) + P_{\rm Z}(\E_{\Q}[R]F, t, T) \nonumber \\
&= C_{\rm Z}(F(1-\E_{\Q}[R]), t, T) + \sum_{i=1}^n {I(t \leq t_i)C_{\rm Z}(C_i, t, t_i)} + P_{\rm Z}(\E_{\Q}[R]F, t, T) \nonumber \\
&= \sum_{i=1}^n {I(t \leq t_i)C_{\rm Z}(C_i, t, t_i)} + (1-\E_{\Q}[R])C_{\rm Z}(F, t, T) + \E_{\Q}[R]P_{\rm Z}(F, t, T)\label{CAT_price3}
\end{align}

\end{enumerate}
    
\end{proof}

\subsection{Proof of Proposition \ref{prop_variance_reduction}}
\begin{proof}[Proof of Proposition \ref{prop_variance_reduction}]
First note that $\E[\hat{\theta}^{\rm{MC}}]=\E[\hat{\theta}^{\rm{IS}}]$ follows directly from definition. Moreover, recall that the variance of estimated trigger probability under standard Monte Carlo simulation and under IS is given by
\begin{align*}
\text{Var}(\hat{\theta}^{\rm{MC}}) &= \frac{1}{n} \Big( \mathbb{E}_\mathbb{Q} 
\big[ I(L(t) \geq D) \big) - \theta^2  \Big] 
= \frac{1}{n} ( \theta - \theta^2 ), \\
\text{Var}(\hat{\theta}^{\rm{IS}}) &=\frac{1}{n} \Big(\mathbb{E}_\mathbb{Q} \big[ R(L(t)) I(L(t) \geq D) \big] - \theta^2  \Big).
\end{align*}
Thus, comparing $\text{Var}(\hat{\theta}^{\rm{MC}})$ with $\text{Var}(\hat{\theta}^{\rm{IS}})$ reduces to comparing $\mathbb{E}_\mathbb{Q} \big[ I(L(t) \geq D) \big]$ with $\mathbb{E}_\mathbb{Q} \big[ R(L(t)) I(L(t) \geq D) \big]$.
Note that the likelihood ratio for the Poisson process $\{ M(t)\}_{t \geq 0}$ when applying exponential tilting is
\begin{equation*}
R_{\text{poi}}(L(t)) = \exp\left(\lambda t (e^a - 1) - a M(t)\right).    
\end{equation*}
First, consider $X_i \overset{\text{i.i.d.}}{\sim} \text{Gamma}(1, \beta)$. The likelihood ratio of $L(t)$ computes as
\begin{equation*}
R_{\text{gam}}(L(t)) = (1-\beta b)^{- M(t)} \exp{ \Big(- b L(t) \Big)}.    
\end{equation*}
The condition $a = \tfrac{1}{2} \ln\left(\tfrac{D}{\lambda t \beta} \right)$ writes equivalently as 
\begin{equation}\label{eq:e_a}
e^a = \sqrt{\frac{D}{\lambda t \beta}}.
\end{equation}
Thus, by \eqref{IS_gamma_loss}, \eqref{eq:e_a}, and through setting $b=\frac{1}{\beta}-\frac{\lambda t e^a}{D}$ which is equivalent to $\E_{\Q_{a,b}}[L] = \E_{\Q_{a,b}}[M(t)] \E_{\Q_{a,b}}[X_1] = D$, we obtain
\begin{equation}\label{eq:1_b}
\quad 1-b\beta  = \frac{\lambda t e^a \beta}{D} = \sqrt{\frac{\lambda t \beta}{D}}.   
\end{equation}
Note that the likelihood ratio for the compound Poisson process $\{L(t)\}_{t \geq 0}$ computes now as
\begin{equation*}
R(L(t))= R_{\text{poi}}(L(t)) R_{\text{gam}}(L(t)) = \exp\left(\lambda t (e^a - 1) - a M(t)\right) (1-\beta b)^{- M(t)} \exp{ \big( -b L(t) \big)}.  
\end{equation*}
Taking the expectation, we obtain the bound: 
\begin{align*}
\mathbb{E}_\mathbb{Q} \big[ R(L(t)) I(L(t) \geq D) \big] &= \mathbb{E}_\mathbb{Q} \big[ R(L(t)) \big| L(t) \geq D \big] \cdot  \mathbb{E}_\mathbb{Q} \big[ I(L(t) \geq D) \big]  \\
& \leq \exp \left( \lambda t (e^a - 1) -b D \right) \mathbb{E}_\mathbb{Q} \Big[ e^{- a M(t)} (1-\beta b)^{-M(t)} \Big| L(t) \geq D \Big] \cdot \mathbb{E}_\mathbb{Q} \big[ I(L(t) \geq D) \big].    
\end{align*}
To simplify the first factor of the above expression we use \eqref{eq:e_a}, \eqref{eq:1_b}, and the inequality of arithmetic and geometric means \footnote{AM - GM inequality: $\frac{x+y}{2} \geq \sqrt{xy}$ for two non-negative numbers $x$ and $y$, with equality if and only if $x=y$.} applied to $\sqrt{\frac{D}{\beta}\cdot \lambda t}$ to obtain
\begin{align*}
\lambda t (e^a - 1) -b D &= \lambda t \Big( \sqrt{\frac{D}{\lambda t  \beta}} - 1 \Big) - \Big( 1- \sqrt{\frac{\lambda t  \beta}{D}} \Big) \frac{D}{\beta} \\
&= \sqrt{\frac{D \lambda t}{\beta}} - \lambda t - \frac{D}{\beta} + \sqrt{\frac{D \lambda t }{\beta}} \leq \frac{D}{\beta} \left(\frac{1}{2} + \frac{1}{2} -1\right) = 0
\end{align*}
Hence, $\exp{ \big( \lambda t (e^a - 1) -b D \big)} \leq 1$.
For the second factor, applying \eqref{eq:e_a} and \eqref{eq:1_b} gives
\begin{equation*}
e^{- a M(t)} (1-\beta b)^{-M(t)} = \Big( \sqrt{\frac{\lambda t  \beta}{D}} \Big)^{M(t)(1-1)} = 1
\end{equation*}
Altogether, we have 
\begin{equation*}
\mathbb{E}_\mathbb{Q} \big[ R(L(t)) I(L(t) \geq D) \big] \leq \mathbb{E}_\mathbb{Q} \big[ I(L(t) \geq D) \big],    
\end{equation*}
which provides $\text{Var}(\hat{\theta}^{\rm{MC}}) \geq \text{Var}(\hat{\theta}^{\rm{IS}})$ if losses are Gamma distributed.

\end{proof}


\subsection{Proof of Proposition \ref{prop_variance_reduction_logn}}
\begin{proof}[Proof of Proposition \ref{prop_variance_reduction_logn}]
First note that $\E[\hat{\theta}^{\rm{MC}}]=\E[\hat{\theta}^{\rm{IS}}]$ follows directly from the definition of the estimators.
Our goal is to minimize $\mathbb{E}_\mathbb{Q} \big[ R(L(t)) I(L(t) \geq D) \big]$ to find the optimal IS parameters $a$ and $b$, where 
\begin{equation}\label{eq:R_decom_proof}
R(L(t)) = \exp \Big(\lambda t (e^a - 1) - a M(t) \Big)\exp \Big(\frac{b^2} {2\sigma^2} M(t)\Big) \exp \Big( - \frac{b} {\sigma^2} \sum_{i=1}^{M(t)} \big(\ln (X_i) -\mu \big)\Big).    
\end{equation}
To get rid of the last factor in \eqref{eq:R_decom_proof}, we consider an additional change of measure from the original measure $\mathbb{Q}$ to measure $\widetilde{\mathbb{Q}}$ so that $X_i \overset{\text{i.i.d.}}{\sim} \text{Logn}(\mu-b, \sigma^2)$ under measure $\widetilde{\mathbb{Q}}$. We compute the likelihood ratio 
\begin{equation}\label{eq:R_decom_2}
\begin{aligned}
R_2(L(t)) = \frac{\D \,\widetilde{\mathbb{Q}}}{\D \,\mathbb{Q}} &= \exp\Bigg( \frac{\sum_{i=1}^{M(t)} \big(\ln (X_i) -\mu +b \big)^2}{2 \sigma^2} - \frac{\sum_{i=1}^{M(t)} \big(\ln (X_i) -\mu \big)^2}{2 \sigma^2} \Bigg) \\
&= \exp\Bigg( \frac{b}{\sigma^2}\sum_{i=1}^{M(t)} \big(\ln (X_i) -\mu \big) + \frac{b^2}{2 \sigma^2} M(t) \Bigg).
\end{aligned}
\end{equation}
Then we can rewrite the conditional expectation $\mathbb{E}_\mathbb{Q} \big[ R(L(t)) I(L(t) \geq D) \big]$ under the new measure $\widetilde{\mathbb{Q}}$ as
\begin{equation}\label{eq:ineq_proof_prop_32_1}
\begin{aligned}
\mathbb{E}_\mathbb{Q} \big[ R(L(t)) I(L(t) \geq D) \big] &= \mathbb{E}_{\widetilde{\mathbb{Q}}} \big[ R(L(t)) \cdot R_2(L(t)) I(L(t) \geq D) \big] \\
&= \mathbb{E}_{\widetilde{\mathbb{Q}}} \Big[ \exp \Big(\lambda t (e^a-1) + \big(\frac{b^2}{\sigma^2} - a \big)M(t) \Big) I(L(t) \geq D) \Big] \\
&= \exp \big(\lambda t (e^a-1) \big)\mathbb{E}_{\widetilde{\mathbb{Q}}} \Big[ \exp \Big(\big(\frac{b^2}{\sigma^2} - a \big)M(t) \Big) I(L(t) \geq D) \Big] \\
&= \exp \Big(\lambda t (e^a + e^{\frac{b^2}{\sigma^2} - a} -2) \Big) \cdot
\widetilde{\mathbb{Q}}(L(t) \geq D).
\end{aligned}
\end{equation}

Instead of minimizing the conditional expectation $\mathbb{E}_\mathbb{Q} \big[ R(L(t)) I(L(t) \geq D) \big]$ directly and comparing with $\mathbb{Q}(L(t) \geq D)$, we minimize its ratio $ \frac{\mathbb{E}_\mathbb{Q} \big[ R(L(t)) I(L(t) \geq D) \big]}{\mathbb{Q}(L(t) \geq D)}=\mathbb{E}_\mathbb{Q} \big[ R(L(t)) ~|~L(t) \geq D \big]$.

Note that by the properties of subexponential distributions (to which the Lognormal distribution belongs), we have  for all $n \in \N$, 
\begin{equation}\label{eq:subexponential}
\frac{\mathbb{P}(X_1 + X_2 + \cdots + X_n \geq x)}{\mathbb{P}(X_1 \geq x)} \rightarrow n, \quad \text{as }\, x \rightarrow \infty,    \text{ where } \mathbb{P} \in \{\Q, \widetilde{\Q}\},
\end{equation}
see, e.g., Chapter VI, Section 3 in \cite{Asmussen2007StochasticSA}.
Next, let $\varepsilon>0$ and choose $\overline{M} \in \N$ such that $$\widetilde{\mathbb{Q}} \left(M(t) \geq \overline{M}\right) < \varepsilon.$$
Due to \eqref{eq:subexponential} there exists some $D_0>0$ such that for all $D >D_0$ and for all  $m \in \N$ with $m \leq  \overline{M}$ we have
\begin{align*}
\widetilde{\mathbb{Q}} \left(\sum_{i=1}^m X_1 \geq D \right) &< \left(m+ \varepsilon \right) \cdot \widetilde{\mathbb{Q}}(X_1 \geq D)=\left(m+ \varepsilon \right) \cdot \left(1-\Phi \left(\frac{\ln D - \mu + b}{\sigma} \right) \right), \\
\mathbb{Q} \left(\sum_{i=1}^m X_1 \geq D \right) &> \left(m - \varepsilon \right) \cdot \mathbb{Q}(X_1 \geq D)=\left(m - \varepsilon \right) \cdot \left(1-\Phi \left(\frac{\ln D - \mu}{\sigma} \right) \right).
\end{align*}
This implies for all $D >D_0$
\begin{equation*}
\begin{aligned}
\frac{\widetilde{\mathbb{Q}}(L(t) \geq D)}{\mathbb{Q} \left(L(t) \geq D \right)} &< \sum_{m=1}^{\overline{M}} \widetilde{\mathbb{Q}}(M(t)=m) \cdot \frac{\left(m + \varepsilon \right) \cdot \widetilde{\mathbb{Q}}(X_1 \geq D)}{\left(m - \varepsilon \right) \cdot \mathbb{Q}(X_1 \geq D)} +\sum_{m=\overline{M}+1}^{\infty} \widetilde{\mathbb{Q}}(M(t)=m)  \cdot \frac{\widetilde{\mathbb{Q}} \left( \sum_{i=1}^m X_i \geq D \right)} {\mathbb{Q} \left( \sum_{i=1}^m X_i \geq D \right)} \\
& <  \left(1 + \frac{2\varepsilon}{1-\varepsilon} \right) \cdot \frac{1-\Phi \left(\frac{\ln D - \mu + b}{\sigma} \right)}{1-\Phi \left(\frac{\ln D - \mu}{\sigma} \right)}+\widetilde{\mathbb{Q}} \left(M(t) \geq \overline{M}\right)\\
& < \frac{1-\Phi \left(\frac{\ln D - \mu + b}{\sigma} \right)}{1-\Phi \left(\frac{\ln D - \mu}{\sigma} \right)} + \frac{3\varepsilon}{1-\varepsilon}.
\end{aligned}
\end{equation*}
where $\Phi(\cdot)$ is the cdf of the standard normal distribution. In the above inequalities, we used $\frac{m+\varepsilon}{m-\varepsilon}<1+\frac{2\varepsilon}{1-\varepsilon}$  for all $m \geq 1$, that $\frac{1-\Phi \left(\frac{\ln D - \mu + b}{\sigma} \right)}{1-\Phi \left(\frac{\ln D - \mu}{\sigma} \right)}\leq 1$ for $b \geq 0$, and that $\varepsilon < \frac{\varepsilon}{1-\varepsilon}$. Now, by using \eqref{eq:ineq_proof_prop_32_1}, we derive an upper bound for  our minimization problem by
\begin{align} \label{lognormal_minimize}
\mathbb{E}_\mathbb{Q} \big[ R(L(t)) ~|~L(t) \geq D \big]
= & \exp \Big(\lambda t (e^a + e^{\frac{b^2}{\sigma^2} - a} -2) \Big) \cdot \frac{\widetilde{\mathbb{Q}}(L(t) \geq D)}{\mathbb{Q}(L(t) \geq D)} \nonumber \\
< & \exp \Big(\lambda t (e^a + e^{\frac{b^2}{\sigma^2} - a} -2) \Big) \cdot \left[ \frac{1-\Phi \left(\frac{\ln D - \mu + b}{\sigma} \right)}{1-\Phi \left(\frac{\ln D - \mu}{\sigma} \right)} + \frac{3\varepsilon}{1-\varepsilon} \right] \nonumber \\
< & \exp \Big(\lambda t (e^a + e^{\frac{b^2}{\sigma^2} - a} -2) \Big) \cdot \frac{1-\Phi \left(\frac{\ln D - \mu + b}{\sigma} \right)}{1-\Phi \left(\frac{\ln D - \mu}{\sigma} \right)} + \frac{3\varepsilon}{1-\varepsilon} \cdot C \nonumber \\
=: & f(a,b)+ \frac{3\varepsilon}{1-\varepsilon} \cdot C ~. 
\end{align}
for some constant $C$ when restricting to $\underline{a} < a< \overline{a},$ and $ b < \overline{b}$ where $ \overline{a}$ and $ \overline{b}$ are chosen sufficiently large. To minimize $\mathbb{E}_\mathbb{Q} \big[ R(L(t)) ~|~L(t) \geq D \big]$, we will minimize its upper bound which reduces to minimizing $f(a,b)$. To simplify the analysis, we define
\begin{equation} \label{lognormal_ln_minimize}
g(a,b) := \ln f(a,b) = \lambda t (e^a + e^{\frac{b^2}{\sigma^2} - a} -2) + \ln \Big(1-\Phi \big(\frac{\ln D - \mu + b}{\sigma} \big) \Big) - \ln \Big(1-\Phi \big(\frac{\ln D - \mu}{\sigma} \big) \Big).     
\end{equation}
Start by fixing $b \geq 0$, and minimize the function $g(a,b)$ with respect to $a$. The first order condition writes as 
\begin{equation*}
\frac{\partial g}{\partial a} = \lambda t (e^a - e^{\frac{b^2}{\sigma^2} - a}) = 0 \quad \Longrightarrow \quad a^*=\frac{b^2}{2\sigma^2}. 
\end{equation*}
Next, we verify
\begin{equation*}
\frac{\partial^2 g}{\partial a^2}|_{a=a^*} = \lambda t (e^{a^*} + e^{\frac{b^2}{\sigma^2} - a^*}) > 0,
\end{equation*}
which confirms that $a^*$ is indeed the global minimizer of $g(a,b)$ for any fixed $b \geq 0$. Next, we substitute $a^*$ back into function $g(a,b)$ and find the optimal $b^*$. To this end, we define
\begin{equation*}
h(b) := g(a^*,b) = 2 \lambda t (e^{\frac{b^2}{2\sigma^2}} -1) + \ln \Big(1-\Phi \big(\frac{\ln D - \mu + b}{\sigma} \big) \Big) - \ln \Big(1-\Phi \big(\frac{\ln D - \mu}{\sigma} \big) \Big),     
\end{equation*}
and we minimize the function $h(b)$. The first order condition becomes
\begin{equation*}
\frac{\partial h}{\partial b} = \frac{2 \lambda t \, b}{\sigma^2} \, e^{\frac{b^2}{2\sigma^2}} - \frac{1}{\sigma} \, \frac{\phi\Big(\frac{\ln D - \mu + b}{\sigma}\Big)}{1 - \Phi\Big(\frac{\ln D - \mu + b}{\sigma}\Big)} = 0, 
\end{equation*}
where $\phi(\cdot)$ and $\Phi(\cdot)$ are the pdf and cdf of the standard normal distribution. To verify the second order condition, we compute
\begin{equation*}
\frac{\partial^2 h}{\partial b^2} = \frac{2 \lambda t}{\sigma^2} \, e^{\frac{b^2}{2\sigma^2}} \big(1+ \frac{b^2}{\sigma^2} \big) - \frac{1}{\sigma^2} \, \frac{\phi(z)^2-z\phi(z)(1-\Phi(z))}{\big(1 - \Phi(z)\big)^2},
\end{equation*}
where $z=\frac{\ln D -\mu +b}{\sigma}$. This second derivative of function $h$ involves the \emph{Mills ratio}:
\begin{equation*}
R_M(z) = \frac{\phi(z)}{1-\Phi(z)},    
\end{equation*}
so we can rewrite
\begin{equation*}
\frac{\phi(z)^2-z\phi(z)(1-\Phi(z))}{\big(1 - \Phi(z)\big)^2} = \frac{\phi(z)^2}{\big(1 - \Phi(z)\big)^2} - \frac{z\phi(z)}{1 - \Phi(z)} = R_M(z)^2 - z R_M(z) < 1,    
\end{equation*}
according to standard properties of the Mills ratio, see, e.g., \cite[Chapter 13]{johnson1995continuous}. Thus, we have 
\begin{equation*}
\frac{\partial^2 h}{\partial b^2} > \frac{2 \lambda t}{\sigma^2} \, e^{\frac{b^2}{2\sigma^2}} \big(1+ \frac{b^2}{\sigma^2} \big) - \frac{1}{\sigma^2} >0,
\end{equation*}
which confirms that $b^*$ is the global minimizer of $h(b)$.


Finally, substituting the optimal values $a^*$ and $b^*$ into the function $f(a,b)$, we obtain
\begin{equation*}
\mathbb{E}_\mathbb{Q} \big[ R(L(t)) ~|~(L(t) \geq D) \big]< f(a^*, b^*) +\frac{3\varepsilon}{1-\varepsilon} \cdot C < f(0,0) +\frac{3\varepsilon}{1-\varepsilon} \cdot C = 1 +\frac{3\varepsilon}{1-\varepsilon} \cdot C. 
\end{equation*}
We conclude by letting $\varepsilon \downarrow 0$.

\end{proof}

\subsection{Proof of Proposition \ref{UA_proof}}
\begin{proof}[Proof of Proposition \ref{UA_proof}]
    We need to show that the map 
    $$
    \R^5 \ni (r_0,\lambda,T,F,D) \mapsto C_{\rm Z}(F,T;r_0, \lambda, D)
    $$
    is continuous. Then, the assertion follows with Proposition~\ref{lem_universal}. Firstly, we see that $C_{\rm Z}(F,T;r_0, \lambda, D)$ depends on the face value $F$ and initial interest rate $r_0$ only through $P_{\rm Z}(F, 0, T)$, which by Eq.~\eqref{zero coupon bond equation} is given as  
\begin{equation*}
P_{\rm Z}(F, 0, T)  = F \mathbb{E}_{\mathbb{Q}}\left[e^{-\int_{0}^{T} r_s \,\dee s}\right] =Fe^{A(0,T)-B(0,T)r_0},
\end{equation*}
and which is continuous in $F$ and $r_0$, so thus $C_{\rm Z}(F,T;r_0, \lambda, D)$. Moreover,  $P_{\rm Z}(F, 0, T)$ is continuous in $T$ according to Lemma~\ref{ATSM_continuous}. Define the integral term:
\begin{align*}
g(\lambda, D, T)&:=\mathbb{E}_{\mathbb{Q}}\left[1-\int_{0}^{T}{\lambda\Bigl(1-F_X\bigr(D - L(s)\bigr)\Bigr)I\bigl(L(s) < D\bigr) \,\dee s} \right] \\
&=1 - \int_{\Omega} \int_{0}^{T}{\lambda\Bigl(1-F_X\bigr(D - L(s)(\omega)\bigr)\Bigr)I\bigl(L(s)(\omega) < D\bigr) \,\dee s} \,\dee \mathbb{Q}(\omega)\\
&=1 - \lambda \cdot \int_{0}^{T}\int_{\Omega} \Bigl(1-F_X\bigr(D - L(s)(\omega)\bigr)\Bigr)I\bigl(L(s)(\omega) < D\bigr)  \D \mathbb{Q}(\omega) \D s
\end{align*}
where the last equality follows by Fubini's Theorem. As by Corollary \ref{cor:cat_bond_pricing}, the CAT bond price can be written as 
\[
C_{\rm Z}(F,T;r_0, \lambda, D) = g(\lambda,D,T)P_{\rm Z}(F, 0, T).    
\]
we obtain that $C_{\rm Z}(F,T;r_0, \lambda, D)$ is continuous in $T$ and in $\lambda$. Finally, for the continuity in $D$, let $\varepsilon>0$, and we aim at proving the existence of $\delta>0$ such that 
\begin{equation} \label{continuity in D}
|D_1 - D_2| <\delta \Rightarrow |g(\lambda, D_1, T)-g(\lambda, D_2, T)|<\varepsilon.   
\end{equation}
Without loss of generality, we assume $D_2 > D_1$, and we define
\begin{align*}
A_1&:= \Big\{ (s,\omega) \in \R_+\times \Omega \big| I\bigl(L(s)(\omega) < D_2\bigr) - I\bigl(L(s)(\omega) < D_1\bigr) =1 \Big\} \\
&= \Big\{ (s,\omega) \in \R_+\times \Omega \big| D_1 \leq L(s)(\omega) < D_2 \Big\} \\
A_0&:= \Big\{ (s,\omega) \in \R_+\times \Omega \big| I\bigl(L(s)(\omega) < D_2\bigr) - I\bigl(L(s)(\omega) < D_1\bigr) =0 \Big\} \\
&= \Big\{ (s,\omega) \in \R_+\times \Omega \big| \{L(s)(\omega) < D_1\} \cup \{L(s)(\omega) \geq D_2\} \Big\} .
\end{align*}
Then we can compute the difference
\begin{align} \label{eq_diff_g}
&|g(\lambda, D_1,T)-g(\lambda, D_2,T)|  \nonumber \\
\leq & \mathbb{E}_{\mathbb{Q}}\left[ \int_{0}^{T}{\lambda \Bigl| \bigl(1-F_X\bigr(D_1 - L(s)\bigr)\bigr)I\bigl(L(s) < D_1\bigr) - \bigl(1-F_X\bigr(D_2 - L(s)\bigr)\bigr)I\bigl(L(s) < D_2\bigr) \Bigr| \,\dee s} \right] \nonumber \\
= & \lambda \mathbb{E}_{\mathbb{Q}}\left[ \int_{0}^{T} I_{A_0}{ \Bigl| F_X\bigr(D_1 - L(s)\bigr) - F_X\bigr(D_2 - L(s)\bigr)\Bigr| \,\dee s}  + \int_{0}^{T} I_{A_1}{ \Bigl| 1 - F_X\bigr(D_2 - L(s)\bigr)\Bigr| \,\dee s} \right]. 
\end{align}
Note that $A_1 = \big\{(s,\omega) \in \R_+\times \Omega  \mid \sum_{i=1}^{M(s)} X_i(\omega) \in [D_1,D_2]\big\}$, and we see that for any $s>0$
\begin{equation*}
\Q \Big( \sum_{i=1}^{M(s)} X_i \in [D_1,D_2] \Big) = \sum_{n=0}^\infty \frac{(\lambda s)^n}{n!} e^{-\lambda s} 
\Q \Big( \sum_{i=1}^{n} X_i \in [D_1,D_2]  \Big).
\end{equation*}
Next, one can find $\delta_1 >0$ such that $|D_1-D_2| < \delta_1$ implies
\begin{equation*}
\sup_{n \in \N} \Q \Big( \sum_{i=1}^{n} X_i \in [D_1,D_2] \Big) < \frac{\varepsilon}{2 \lambda T}.
\end{equation*}
Thus, we have 
\begin{equation*}
\int_{\Omega} I_{A_1}(s,\omega) \D \Q(\omega) = \Q \Big( \sum_{i=1}^{M(s)} X_i \in [D_1,D_2] \Big) < \sum_{n=1}^{\infty} \frac{(\lambda s)^n}{n!} e^{-\lambda s} \frac{\varepsilon}{2 \lambda T} = \frac{\varepsilon}{2 \lambda T}.    
\end{equation*}

Since $F_X$ is a cumulative density function for some continuous distribution $X$, we have $\bigl|1 - F_X\bigr(D_2 - L(s)\bigr) \bigr| \leq 1$. Then applying Fubini's Theorem to the second term in \eqref{eq_diff_g}, we have
\begin{align*}
\mathbb{E}_{\mathbb{Q}}\left[\int_{0}^{T} I_{A_1}(s,\omega){ \Bigl| 1 - F_X\bigr(D_2 - L(s)\bigr)\Bigr| \,\dee s} \right] 
&=\int_{\Omega} \int_{0}^{T} I_{A_1}(s,\omega){ \Bigl| 1 - F_X\bigr(D_2 - L(s)(\omega)\bigr)\Bigr| \,\dee s} \,\dee \mathbb{Q}(\omega) \\
&= \int_{0}^{T} \int_{\Omega} I_{A_1}(s,\omega){ \Bigl| 1 - F_X\bigr(D_2 - L(s)(\omega)\bigr)\Bigr| \,\dee \mathbb{Q}(\omega) \,\dee s} \\
&\leq \int_{0}^{T} \int_{\Omega} I_{A_1}(s,\omega){\,\dee \mathbb{Q}(\omega) \,\dee s} < \int_{0}^{T} \frac{\varepsilon}{2 \lambda T} {\,\dee s} = \frac{\varepsilon}{2 \lambda}.
\end{align*}
Moreover, as $d \mapsto F_X(d-L(s)(\omega))$ is continuous for all $\omega	\in \Omega$, there exists $\delta_2 >0$ such that $|D_1-D_2|< \delta_2$ yields
\begin{equation*}
\Bigl| F_X\bigr(D_1 - L(s)\bigr) - F_X\bigr(D_2 - L(s)\bigr)\Bigr| \leq \frac{\varepsilon}{2 \lambda T}   
\end{equation*}
 Thus $|D_1 - D_2| < \delta:= \min \{\delta_1,\delta_2\}$ implies
\begin{equation*}
|g(\lambda, D_1,T)-g(\lambda, D_2,T)|
< \lambda \left(\mathbb{E}_{\mathbb{Q}}\left[ \int_{0}^{T} { \frac{\varepsilon}{2 \lambda T}   \,\dee s} \right] + \int_{0}^{T} \frac{\varepsilon}{2 \lambda T}   {\,\dee s} \right) =\left(\frac{\varepsilon}{2 \lambda }   + \frac{\varepsilon}{2 \lambda }  \right) \lambda  = \varepsilon. 
\end{equation*}
\end{proof}

\begin{proof}[Proof of Proposition \ref{UA_proof_coupon}]
Firstly, we have $C_{\rm Z}(F, T;r_0,\lambda, D)$ is continuous in $(r_0, \lambda, T, F, D)$, according to Proposition~\ref{UA_proof}. Similarly, for each coupon payment dates $t_i$, for $i=1, \cdots, n$, the discounted coupon payments $C_{\rm Z}(C_i, 0, t_i)$ is continuous in $(r_0, \lambda, t_i, C_i, D)$ and does not depend on any other inputs. Recall the formula of coupon CAT bond price that
\begin{equation*}
C_{\rm B}(F, \{C_i\}_{i=1}^n, 0, T)= C_{\rm Z}(F, 0, T)  + \sum_{i=1}^n {C_{\rm Z}(C_i, 0, t_i)},
\end{equation*} 
it is a cumulative sum of $C_{\rm Z}(F, 0, T)$ and $(\lambda, t_i, C_i, D)$. We can conclude that $C_{\rm B}(F, \{C_i\}_{i=1}^n, 0, T)$ is continuous in $(r_0, \lambda,\{t_i\}_{i=1}^n,T,\{C_i\}_{i=1}^n,F,D)$ and the universal approximation theorem from Proposition~\ref{lem_universal} can be applied.
\end{proof}

\end{document}